\newtheorem{theo}{Theorem}
\newtheorem*{theo*}{Theorem}
\newtheorem*{prop*}{Property}
\newtheorem{cor}[theo]{Corollary}
\newtheorem{lemm}[theo]{Lemma}
\newtheorem{proc}[theo]{Procedure}
\newcommand{\cC}{{\cal C}} %% 
\newcommand{\cI}{{\cal I}} %% 
\newcommand{\cO}[1]{{\cal O}_{#1}} %% 
\newcommand{\cP}[1]{{\cal P}_{#1}} %% 
\newcommand{\IO}[1]{({\cI, \cO{#1}})} %% (I, O_k)
\DeclareMathOperator{\mad}{mad}
\begin{document}

\title{Partitioning sparse graphs into an independent set and a graph with bounded size components}

\author{
Ilkyoo Choi\thanks{
Department of Mathematics, Hankuk University of Foreign Studies, Yongin-si,
Gyeonggi-do, Republic of Korea, \texttt{ilkyoo@hufs.ac.kr}.
Supported by the Basic Science Research Program through
the National Research Foundation of Korea (NRF) funded by the Ministry of
Education (NRF-2018R1D1A1B07043049), and also by Hankuk University of Foreign
Studies Research Fund.
}
\and
Fran\c{c}ois Dross\thanks{
Université Côte d'Azur, I3S, CNRS, Inria, France
\texttt{Francois.Dross@googlemail.com}.
}
\and
Pascal Ochem\thanks{
LIRMM, CNRS, Montpellier, France,
\texttt{pascal.ochem@lirmm.fr}.
The last two authors were partially supported by
the ANR grant HOSIGRA (contract number ANR-17-CE40-0022-03).
}
}

\date\today

\maketitle

\begin{abstract}
We study the problem of partitioning the vertex set of a given graph so that each part induces a graph with components of bounded order; we are also interested in restricting these components to be paths. 
In particular, we say a graph $G$ admits an $\IO{k}$-partition if its vertex set can be partitioned into an independent set and a set that induces a graph with components of order at most $k$.
We prove that every graph $G$ with $\mad(G)<\frac 52$ admits an $\IO{3}$-partition.
This implies that every planar graph with girth at least $10$ can be partitioned into an independent set and a set that induces a graph whose components are paths of order at most 3. 
We also prove that every graph $G$ with $\mad(G) < \frac{8k}{3k+1} = \frac{8}{3}\left( 1 - \frac{1}{3k+1} \right)$ admits an $\IO{k}$-partition. 
This implies that every planar graph with girth at least $9$ can be partitioned into an independent set and a set that induces a graph whose components are paths of order at most 9. 
\end{abstract}

Given a graph $G$, let $V(G)$ and $E(G)$ be the vertex set and the edge set, respectively, of $G$. 
%A \emph{triangle} is a cycle of length 3. 
%The \emph{average degree} of a graph $G$ is $\ad(G) = \frac{|E(G)|}{|V(G)|}$. 
The \emph{maximum average degree} of a graph $G$, denoted $\mad(G)$, is the maximum of $\frac{2|E(H)|}{|V(H)|}$ over all subgraphs $H$ of $G$.
We are interested in partitioning the vertex set of a sparse graph $G$, measured in terms of $\mad(G)$, so that each part induces a graph with bounded component sizes. 
There is also a substantial line of research investigating partitions of the planar graphs into parts that induce graphs with bounded maximum degree; we refer the interested readers to~\cite{2014BoKo,2017ChChJeSu,2015ChRa,2000Skrekovski}.

We will use the following notation to denote the following classes of graphs: 
\begin{align*}
\cI_{\;\,\,}&: {\mbox{the class of edgeless graphs}}\\
\cO{k}&: {\mbox{the class of graphs whose components have order at most $k$}}\\
\cP{k}&: {\mbox{the class of graphs whose components are paths of order at most $k$}}
\end{align*}
Note that $\cI\subset\cP{k}\subset\cO{k}$, and $\cI$, $\cO{1}$, and $\cP{1}$ all denote the class of edgeless graphs. 
In the above notation, the classical chromatic number corresponds to partitioning the vertex set of a graph so that each part induces a graph in $\cI$.
For graph classes $\cC_1, \ldots, \cC_k$, we say a graph $G$ admits a \emph{$(\cC_1, \ldots, \cC_k)$-partition} if $V(G)$ can be partitioned into $k$ sets $V_1, \ldots, V_k$ so that $G[V_i]$ is a graph in $\cC_i$ for all $i$. 

The celebrated Four Colour Theorem~\cite{1977ApHa,1977ApHaKo} states that every planar graph admits an $(\cI, \cI, \cI, \cI)$-partition.
On the other hand, Alon et al.~\cite{2003AlDiOpVe} proved that there is no finite $k$ where every planar graphs admit an $(\cO{k}, \cO{k}, \cO{k})$-partition. 
Gr\"otzsch's Theorem~\cite{1959Grotzsch} shows that every planar graph with girth 4 admits an $(\cI, \cI, \cI)$-partition, yet, for a given $k$, it is not hard to construct a graph with girth 4 that does not admit an $(\cO{k}, \cO{k})$-partition; for example, see~\cite{2015MoOc}.
A recent result by Dvo\v r\'ak and Norin~\cite{DvNo} implies that every planar graph with girth 5 admits an $(\cO{k}, \cO{k})$-partition for some finite $k$; this was conjectured by Esperet and Ochem in~\cite{esperet2016islands}.

%\begin{theo}[Esperet and Ochem~\cite{esperet2016islands}] \label{TH_EO}
%Every planar graph with girth at least $6$ admits an $({\cal O}_{12},{\cal O}_{12})$-partition.
%\end{theo}
%
%%%%%%%%%% theorem 8 in Esperet+Ochem says component size 16?

We direct the attention to partitioning the vertex set of a planar graph so that each part induces a graph whose components are paths of bounded order. 
The Four Colour Theorem~\cite{1977ApHa,1977ApHaKo} again tells us that we can partition a planar graph into four parts where each part induces a graph whose components are paths. 
It is known~\cite{1991Goddard,1990Poh} that the vertex set of a planar graph can be partitioned into three parts where each part induces a graph whose components are paths, yet the order of the paths cannot be bounded. 

Regarding planar graphs with girth restrictions, Borodin, Kostochka, and Yancey~\cite{2013BoKoYa} proved that a planar graph with girth at least 7 admits a $(\cP{2}, \cP{2})$-partition, improving upon a result by Borodin and Ivanova~\cite{borodin2011list} stating that it admits a $(\cP{3}, \cP{3})$-partition. 
Recently, Axenovich, Ueckerdt, and Weiner~\cite{2017AxUeWe} showed that a planar graph with girth at least 6 admits a $(\cP{15}, \cP{15})$-partition, whereas it was conjectured in~\cite{borodin2011list} that it admits a $(\cP{3}, \cP{3})$-partition. 
As far as we know, it is currently unknown if there is a finite $k$ where a planar graph with girth at least 5 admit a $(\cP{k}, \cP{k})$-partition. 

%In \cite{borodin2011list}, it was shown that every planar graph with girth at least $7$ admits an $(\cO{3},\cO{3})$-partition.
%Since planar graphs with girth at least $7$ contain no 3-cycle, we can rephrase the aforementioned result as the following: 
%
%\begin{theo}[\cite{borodin2011list}]
%Every planar graph with girth at least $7$ admits a $(\cP{3},\cP{3})$-partition.
%\end{theo}

\bigskip

We are interested in the theme mentioned in the above paragraphs. 
However, we concentrate on the situation where one part must be an independent set. 
We obtain two results in terms of the maximum average degree. 
Since every planar graph with girth at least $g$ has maximum average degree less than $\frac{2g}{g-2}$ by Euler's formula, our results imply corollaries regarding planar graphs with certain girth restrictions. 

Our first result concerns $(\cI,\cO{3})$-partitions.

\begin{theo}\label{main}
Every graph $G$ with $\mad(G) < \frac{5}{2}$ admits an $(\cI,\cO{3})$-partition.
\end{theo}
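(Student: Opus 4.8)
The plan is to argue by contradiction using the classical method of reducible configurations together with a discharging argument. Suppose $G$ is a counterexample minimizing $|V(G)|+|E(G)|$; since $\mad$ is monotone under taking subgraphs and $\IO{3}$-partitions can be handled component by component, I may assume $G$ is connected and that every proper subgraph of $G$ admits an $\IO{3}$-partition.

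First I would collect reducible configurations, each proved by deleting a small piece $F$, invoking minimality on $G-F$ to obtain a partition $(I,O)$, and then extending it over $F$. The cleanest cases are the low-degree ones: an isolated vertex goes into $I$, and a vertex of degree $1$ can always be placed, into $I$ if its neighbour lies in $O$ and into $O$ (where it stays an isolated component) if its neighbour lies in $I$; hence $G$ has minimum degree at least $2$. Running the same idea along a maximal path of degree-$2$ vertices (a thread), I would show that no thread of length $3$ survives: for each of the four ways its endpoints can lie in $I$ or $O$, one can place the three internal vertices so that every newly formed $O$-component has order at most $2$ and is separated from the $\cO{3}$-components of the endpoints by vertices placed in $I$.

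The heart of the difficulty is the $\cO{3}$ constraint itself: whenever an extension forces a degree-$2$ vertex (or a length-$2$ thread) into $O$ next to a neighbour that already sits in a component of order $3$, a naive placement creates a forbidden component of order $\ge 4$, and unlike the independent-set side there is no free local swap that shrinks an $O$-component. I expect this to be the main obstacle, and I would address it in one of the standard ways: either strengthen the induction hypothesis to produce partitions with prescribed slack at designated vertices (for example forcing a chosen neighbour into $I$, or guaranteeing its $O$-component is not saturated), or enlarge the list of reducible configurations so that the configurations actually faced in the discharging phase never present a degree-$2$ vertex both of whose high-degree neighbours are trapped in saturated $O$-components. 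The bookkeeping of component sizes at the endpoints of threads is precisely where the careful case analysis lives.

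Finally I would run the discharging. Assign each vertex the charge $\mu(v)=d(v)-\tfrac52$, so that $\sum_{v}\mu(v)=2|E(G)|-\tfrac52|V(G)|<0$ by $\mad(G)<\tfrac52$. Vertices of degree at least $3$ carry surplus ($\mu\ge\tfrac12$) while each degree-$2$ vertex has deficit $-\tfrac12$, so I would send charge from the high-degree vertices to the degree-$2$ vertices along threads, calibrating the rule (for instance $\tfrac14$ across each incidence, adjusted for thread length) so that the reducible configurations guarantee every vertex ends with nonnegative charge. That contradicts the negative total and proves the theorem. The delicate point is that $\mad<\tfrac52$ is essentially tight, since equality would permit equal numbers of degree-$2$ and degree-$3$ vertices, so the discharging weights and the reducibility list must be tuned to close the arithmetic exactly at $\tfrac52$.
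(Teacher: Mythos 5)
There is a genuine gap. Your outline matches the paper's framework (minimal counterexample, minimum degree at least $2$, reducibility of $3$-threads, discharging from initial charge $d-\frac{5}{2}$), and you correctly identify the central obstacle: extending a partition over a $2$-chain when a neighbouring $O$-component is already saturated, with no local swap available. But at exactly that point you stop proving and start listing options (``strengthen the induction hypothesis \dots or enlarge the list of reducible configurations''), developing neither. That deferred step is not a routine detail; it is the technical core of the theorem. The paper resolves it by building a rooted forest $L$ out of the $2$-chains (fathers/sons), proving a strengthened recolouring statement (Lemma~\ref{l_recol}: one can colour a designated vertex and its descendants so that, if that vertex gets colour $O$, it is \emph{not saturated}), and then using it to bound the structure: every $3$-vertex has at most one son (Lemma~\ref{l_3v}) and every $4$-vertex has at most three sons (Lemma~\ref{l_4v}). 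Nothing in your proposal substitutes for these lemmas.

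The gap is fatal to your discharging as sketched, not merely an omitted verification. A $2$-vertex has deficit $\frac{1}{2}$; if it lies on a $2$-chain it has only one $3^+$-neighbour, which must therefore send it the full $\frac{1}{2}$ (your ``$\frac{1}{4}$ across each incidence, adjusted for thread length'' cannot avoid this). A $3$-vertex has surplus only $\frac{1}{2}$, and the configurations you have reduced ($1$-vertices, $3$-threads, equivalently that every $3^-$-vertex has a $3^+$-neighbour) still allow a $3$-vertex to be the $3^+$-endpoint of two distinct $2$-chains; it would then have to pay $2\times\frac{1}{2}=1>\frac{1}{2}$ and end negative. Likewise a $4$-vertex adjacent to four $2$-chains would pay $2>\frac{3}{2}$. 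Ruling out precisely these configurations is what Lemmas~\ref{l_3v} and~\ref{l_4v} do, and their proofs need the unsaturation guarantee of Lemma~\ref{l_recol} propagated along the forest $L$ — i.e., the ``strengthened induction hypothesis'' you mention as one possible route. So the proposal is a correct strategic outline that names the right obstacle, but it does not contain the argument that overcomes it, and the arithmetic it relies on does not close without that argument.
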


%\begin{cor} \label{c_main}
%Every planar graph with girth at least $10$ admits an $(\cI,\cO{3})$-partition.
%\end{cor}

Since planar graphs with girth at least $10$ do not have 3-cycles, we obtain the following corollary:
%\begingroup
%\def\thetheo{\ref{c_main}}
\begin{cor}\label{c_main}%[restated]
Every planar graph with girth at least $10$ admits an $(\cI,\cP{3})$-partition.
\end{cor}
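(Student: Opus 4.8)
The plan is to deduce Corollary~\ref{c_main} directly from Theorem~\ref{main}, exploiting the fact that the girth hypothesis does two jobs at once: it pins the maximum average degree below the threshold needed to invoke the theorem, and it simultaneously excludes the only order-$\leq 3$ component that fails to be a path.

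First I would bound the maximum average degree. As already noted above, Euler's formula gives $\mad(G) < \frac{2g}{g-2}$ for every planar graph $G$ of girth at least $g$. Substituting $g = 10$ yields $\mad(G) < \frac{20}{8} = \frac{5}{2}$, so the hypothesis of Theorem~\ref{main} is met. Applying the theorem, $G$ admits an $(\cI,\cO{3})$-partition; write it as $(V_1,V_2)$, where $G[V_1]$ is edgeless and every component of $G[V_2]$ has order at most $3$.

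Next I would upgrade the $\cO{3}$ part to a $\cP{3}$ part. Each component of $G[V_2]$ is a connected graph on at most three vertices, hence isomorphic to one of $K_1$, $K_2$, $P_3$, or $K_3$. The first three are exactly the paths of order at most $3$, while $K_3$ is a triangle. But $G$ has girth at least $10 > 3$, so $G$—and therefore the induced subgraph $G[V_2]$—contains no triangle, ruling out the $K_3$ case. Consequently every component of $G[V_2]$ is a path of order at most $3$, i.e. $G[V_2] \in \cP{3}$, and $(V_1, V_2)$ is the desired $(\cI,\cP{3})$-partition.

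There is essentially no obstacle at this step: all of the genuine combinatorial difficulty is concentrated in Theorem~\ref{main}, and the corollary is a short corollary precisely because the two roles of the girth bound happen to align. The only things to verify are the elementary classification of connected graphs on at most three vertices and the observation that the triangle among them is forbidden by the girth assumption, both of which are immediate.
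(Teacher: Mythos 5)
Your proof is correct and matches the paper's own (implicit) argument exactly: bound $\mad(G)<\frac{5}{2}$ via Euler's formula, invoke Theorem~\ref{main}, and use the absence of triangles to rule out $K_3$ components so that the $\cO{3}$ part is in fact a $\cP{3}$ part. Nothing further is needed.
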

%\addtocounter{theo}{-1}
%\endgroup

Our next result is about $(\cI, \cO{k})$-partitions.  

\begin{theo} \label{t32_main}
Let $k \ge 2$ be an integer. Every graph $G$ with $\mad(G) < \frac{8k}{3k+1} = \frac{8}{3}\left( 1 - \frac{1}{3k+1} \right)$ admits an $(\cI,\cO{k})$-partition.
\end{theo}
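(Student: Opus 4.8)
The plan is to argue by contradiction via the discharging method. Suppose the statement fails, and let $G$ be a counterexample minimising $|V(G)|$; write $M=\frac{8k}{3k+1}$, so $\mad(G)<M$ while $G$ has no $\IO{k}$-partition, and every graph with fewer vertices and maximum average degree below $M$ does admit one. Since $\mad$ does not increase when passing to subgraphs, we are free to delete vertices and invoke minimality on what remains. The first, easy reductions establish that $G$ is connected and that $\delta(G)\ge 2$: a vertex of degree at most $1$ can always be re-inserted into an $\IO{k}$-partition of $G$ minus that vertex, placing it in $\cI$ if its unique neighbour lies in the $\cO{k}$ part and in the $\cO{k}$ part (as an isolated component) otherwise.

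The core of the proof is a list of reducible configurations controlling the $2$-vertices. Call a maximal path whose internal vertices all have degree $2$ a \emph{thread}, and call its two endpoints (of degree at least $3$) the \emph{anchors}. The first key point I would establish is that a thread with at least three internal $2$-vertices is reducible, and crucially that this holds \emph{unconditionally}: for every $\IO{k}$-partition of the graph with the thread deleted, and for every choice of the anchors' classes and of the sizes of their $\cO{k}$-components, one can colour the internal vertices so as to extend the partition. The mechanism is that an internal vertex placed in $\cI$ ``caps'' an adjacent $\cO{k}$-component, leaving it unaffected by the thread; so with at least three internal vertices one can cap both anchors when needed and still colour the middle as a path into $\cO{k}$-components of order at most $2\le k$. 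This confines the remaining $2$-vertices to threads with at most two internal vertices.

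The heart of the argument—and the step I expect to be the main obstacle—is the family of $k$-dependent configurations bounding how many short threads and $2$-vertices may cluster around a single vertex of degree $3$ (and, with weaker force, around vertices of larger degree). Here the difficulty is genuinely global: when several threads meet at a vertex $x$ placed in the $\cO{k}$ part, their $\cO{k}$-coloured neighbours all join the component of $x$, and one must guarantee that this component never exceeds order $k$. Since a minimal-counterexample reduction hands us an arbitrary partition of the smaller graph, a purely local re-insertion need not respect this budget. I would resolve this either by designing each such configuration so that its extension is unconditional (forcing a cap with an $\cI$-vertex before the budget at $x$ is exhausted), or, more robustly, by proving a strengthened statement that carries a budget: an $\IO{k}$-partition of $G-S$ in which the boundary vertices come with an upper bound on how far their $\cO{k}$-components may still grow, so that the removed configuration $S$ can always be re-absorbed. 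The threshold $\frac{8k}{3k+1}$ is exactly the value at which these budgets can be met: as $k$ grows the admissible clustering at a $3$-vertex tightens, which is why the bound increases towards $\frac83$.

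With the reducible configurations in hand, the proof concludes by discharging. Assign to each vertex $v$ the charge $\mu(v)=d(v)-M$; because $\mad(G)<M$ the total charge $\sum_v\mu(v)=2|E(G)|-M|V(G)|$ is negative. Every vertex of degree at least $3$ has positive charge $d(v)-M\ge 3-M=\frac{k+3}{3k+1}>0$, while each $2$-vertex has deficit $M-2=\frac{2(k-1)}{3k+1}$. I would let each $3^{+}$-vertex send charge across the threads it anchors, in amounts calibrated so that every $2$-vertex, funded by its two anchors through the (now short) thread containing it, reaches charge exactly $0$; the absence of the clustering configurations is precisely what guarantees that no $3^{+}$-vertex is asked to give away more than its surplus, so every final charge is nonnegative. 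This contradicts the total charge being negative, completing the proof. The only genuinely delicate calculations are the verification of the discharging inequality at $3$- and $4$-vertices, which reduce to the numerology $3-M=\frac{k+3}{3k+1}$ and $M-2=\frac{2(k-1)}{3k+1}$ recorded above.
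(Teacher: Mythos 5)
Your framework is sound as far as it goes: the minimal counterexample, connectivity, $\delta(G)\ge 2$, the charge $d(v)-M$ with $M=\frac{8k}{3k+1}$, the numerology $3-M=\frac{k+3}{3k+1}$ and $M-2=\frac{2(k-1)}{3k+1}$, and the unconditional reducibility of threads with at least three internal $2$-vertices are all correct and consistent with how the paper begins. But the decisive step of the proof is absent. You never exhibit the ``family of $k$-dependent configurations,'' never prove a single one of them reducible, and never write down the discharging rules, so nothing can be verified. At the exact point where the difficulty lives you write that you \emph{would} resolve it ``either by designing each such configuration so that its extension is unconditional, or by proving a strengthened statement that carries a budget'' --- but neither alternative is carried out, and the second is not a routine strengthening: minimality of $G$ only supplies an \emph{arbitrary} $\IO{k}$-partition of the smaller graph, so a version with budgets on boundary components would have to be formulated and proved as a theorem about all graphs, which is essentially the whole problem again.

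Moreover, the paper's actual proof shows why the local mechanism you envisage (configurations clustered around a single $3$-vertex, charge flowing only across threads from a $2$-vertex to its two anchors) is unlikely to suffice. When one recolours a boundary vertex from $O$ to $I$ to ``cap'' a component, its $I$-neighbours must move to $O$, which can oversaturate components elsewhere and force further recolourings; this cascade can propagate arbitrarily far through $G$. The paper tames it with recursive recolouring procedures that mark vertices and, wherever recolouring fails, extract \emph{cluster sets}: connected subgraphs $K$ that either carry a designated ``giving edge'' or satisfy $\sum_{x\in K}(d(x)-2)\ge k-1$. Its discharging then moves weight along spanning trees of these sets, so charge travels over unbounded distances, and the threshold $\frac{8k}{3k+1}$ emerges precisely from the inequality $2M-4-(k-1)\left(4-\tfrac{3M}{2}\right)\le 0$, i.e.\ from summing excess degree over a whole cluster component rather than from a count in the neighbourhood of one $3$-vertex. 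Your sketch contains no mechanism that ``counts to $k$'' over such regions, and until you either supply the configurations (whose size necessarily grows with $k$) with genuine re-insertion proofs, or an induction hypothesis strong enough to control saturation at all boundary vertices simultaneously, the argument has a gap at its core.
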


Note that as $k$ goes to infinity, the upper bound on the maximum average degree gets arbitrarily close to $\frac{8}{3}$, which corresponds to planar graphs of girth at least $8$. 
As above, we have the following corollary:

\begin{cor} \label{c32_main}
Every planar graph with girth at least $9$ admits an $(\cI,\cO{9})$-partition.
\end{cor}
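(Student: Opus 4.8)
The plan is to deduce this corollary directly from Theorem~\ref{t32_main}; it is a numerical specialization and requires no new combinatorial argument. The two ingredients are the Euler-formula bound on the maximum average degree of planar graphs with large girth, already recorded before the theorems, and the specific choice $k=9$.

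First I would evaluate the hypothesis threshold of Theorem~\ref{t32_main} at $k=9$:
\[
\frac{8k}{3k+1} = \frac{8\cdot 9}{3\cdot 9+1} = \frac{72}{28} = \frac{18}{7} \qquad\text{when } k=9.
\]
Since $9 \ge 2$, the range condition on $k$ is met and the theorem applies verbatim: every graph $G$ with $\mad(G) < \frac{18}{7}$ admits an $(\cI,\cO{9})$-partition. Next I would invoke the bound stated in the text, namely that every planar graph with girth at least $g$ satisfies $\mad(G) < \frac{2g}{g-2}$. For $g=9$ this reads $\mad(G) < \frac{2\cdot 9}{9-2} = \frac{18}{7}$. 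The point to verify is that the two thresholds coincide exactly, $\frac{2\cdot 9}{9-2} = \frac{18}{7} = \frac{8\cdot 9}{3\cdot 9+1}$, and that both inequalities are strict, so that the strict bound furnished by Euler's formula feeds precisely into the strict hypothesis of Theorem~\ref{t32_main}. It follows that every planar graph with girth at least $9$ admits an $(\cI,\cO{9})$-partition.

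The only subtlety worth recording is that $k=9$ is the sharpest admissible choice. Solving $\frac{8k}{3k+1} \ge \frac{18}{7}$ gives $56k \ge 54k + 18$, that is $k \ge 9$, so $k=9$ is the least integer for which the theorem's threshold reaches the girth-$9$ bound; this is exactly why the resulting component order is $9$ and not smaller. I expect no genuine obstacle, since the agreement of the two thresholds is an arithmetic identity rather than an estimate. The same mechanism at the other extreme, where $\frac{8k}{3k+1} \to \frac{8}{3}$ as $k\to\infty$ while the girth-$8$ bound is $\frac{2\cdot 8}{8-2} = \frac{8}{3}$, explains why girth $8$ is never quite reached by a finite $k$ whereas girth $9$ is hit exactly at $k=9$.
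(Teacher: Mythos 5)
Your proposal is correct and matches the paper's (implicit) derivation exactly: Corollary~\ref{c32_main} is obtained by applying Theorem~\ref{t32_main} with $k=9$, noting that $\frac{8\cdot 9}{3\cdot 9+1}=\frac{18}{7}=\frac{2\cdot 9}{9-2}$ equals the strict Euler-formula bound on $\mad$ for planar graphs of girth at least $9$. Your additional observation that $k=9$ is the least admissible value is a nice sanity check, consistent with the paper's remark that the threshold only approaches $\frac{8}{3}$ (girth $8$) as $k\to\infty$.
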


We are also interested in the complexity aspect of the corresponding decision problems. 
We show that some decision problems are either trivial (the answer is always yes) or NP-complete.
\begin{theo}\label{kg}
Let $k\ge2$ and $g\ge3$ be fixed integers.
Either every planar graph with girth at least $g$ has an $(\cI,\cP{k})$-partition
or it is NP-complete to determine whether a planar graph with girth at least $g$ has an $(\cI,\cP{k})$-partition.
Either every planar graph with girth at least $g$ has an $(\cI,\cO{k})$-partition
or it is NP-complete to determine whether a planar graph with girth at least $g$ has an $(\cI,\cO{k})$-partition.
\end{theo}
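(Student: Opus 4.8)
The statement is a dichotomy, so the plan is to prove that whenever its first alternative fails the decision problem is NP-complete; if the first alternative holds there is nothing to prove. Membership in NP is immediate: a partition $(V_1,V_2)$ is a polynomial-size certificate, and one checks in linear time that $V_1$ is independent and that every component of $G[V_2]$ has order at most $k$ (respectively, is a path of order at most $k$). The content is therefore the NP-hardness, which I would establish under the hypothesis that some planar graph of girth at least $g$ admits no $(\cI,\cO{k})$-partition (respectively, no $(\cI,\cP{k})$-partition). Among all such \emph{obstructions} I would fix one, $H_0$, having the fewest edges. The restriction $k\ge 2$ enters implicitly, since for $k=1$ the problem is merely $2$-colourability and lies in P.

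The heart of the reduction is to convert the single obstruction $H_0$ into reusable gadgets of girth at least $g$ whose behaviour is constrained across all valid partitions. Deleting an edge is the basic move: for an edge $e=xy$ of $H_0$, the graph $H_0-e$ still has girth at least $g$ and, by edge-minimality of $H_0$, admits a valid partition, yet no valid partition of $H_0-e$ survives the reinsertion of $e$, because $H_0$ has none. Since reinsertion can break validity only when $x$ and $y$ lie in the same part — both in the independent part (destroying independence) or both in the $\cO{k}$/$\cP{k}$ part (merging components beyond the bound) — every valid partition of $H_0-e$ must place $x$ and $y$ in a common part. Depending on which of the two monochromatic assignments are actually extendable, this yields either a genuine equality gadget or a gadget forcing its ports into a prescribed part; a companion vertex-deletion construction is then used to pin a single port into the independent set. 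These are the logical primitives of the reduction.

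With forcing and equality primitives available, I would reduce from a planar NP-complete constraint problem, such as planar $3$-SAT or the $3$-colourability of planar graphs, realising variables and clauses by appropriately wired copies of the $H_0$-derived gadgets. Since each gadget already has girth at least $g$ and the source instance is planar, the two remaining requirements — global planarity and global girth at least $g$ — are handled by routing every inter-gadget connection along the planar embedding through a sufficiently long induced path. A long connecting path can always be partitioned consistently with whatever parts are imposed at its two ends, so the padding raises the girth without changing which global partitions are valid. The same scheme serves both the $\cO{k}$ and the $\cP{k}$ versions, because the obstruction $H_0$ and the reinsertion analysis are phrased uniformly.

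The main obstacle is the interaction between the gadget analysis and the girth constraint. Because no cycle shorter than $g$ may be created, every connection must be padded, and one must check simultaneously that this padding preserves the forcing behaviour of each gadget and never manufactures a valid partition that the construction is meant to forbid; proving that the connecting paths can always be completed consistently, uniformly in $k$ and $g$, is the delicate point. Equally delicate is verifying that the obstruction-derived gadget enforces exactly the intended constraint rather than a weaker one, which in the borderline cases requires a careful analysis of how small components can accumulate near the ports. Turning these gadget lemmas into fully rigorous statements, uniform over all admissible $k$ and $g$, is where the real effort of the proof lies.
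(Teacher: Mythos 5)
Your overall skeleton does match the paper's: assume a minimal planar obstruction of girth at least $g$ exists, delete an edge $xy$, note that $x$ and $y$ must get the same colour in every valid partition of the reduced graph, convert this into forced-colour gadgets, and reduce from a planar SAT variant. Two smaller points first: the paper needs the separate citation for $k=2$ (its clause/transmitter gadgets are set up for $k\ge3$), which you do not address; and your ``companion vertex-deletion construction'' for pinning a port into the independent set is not what the paper does and is too vague to assess — the paper instead distinguishes three cases for the behaviour of $x,y$ and, depending on the case, either takes $v=x$, or adds a $2$-vertex adjacent to $x$ and $y$, or identifies the $x$'s and the $y$'s of three copies of the edge-deleted obstruction; a pendant vertex then yields a forced-$O$ gadget. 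These issues are repairable.

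The genuine gap is your connector. You route every inter-gadget connection through a long induced path and assert that such a path ``can always be partitioned consistently with whatever parts are imposed at its two ends, so the padding raises the girth without changing which global partitions are valid.'' That property is precisely what makes the reduction collapse: for $k\ge2$ a bare path of length at least $3$ imposes \emph{no} relation between the colours of its endpoints — whatever colours the two ends receive, the interior can be completed, and even so that it contributes nothing to the $O$-components at either end. Consequently no constraint from a clause gadget ever propagates back to a variable gadget, and the graph you construct admits a valid partition regardless of whether the formula is satisfiable (colour one literal vertex of each clause gadget $I$, colour each variable gadget arbitrarily, and fill in the paths). What is needed, and what the paper builds, is a connector that is simultaneously long (so that the wiring preserves girth and planarity) and constraint-carrying: its transmitter gadget, assembled by attaching copies of the forced-$O$ graph $H'_{g,k}$ along a path, admits colourings with both ends in $I$ and with both ends in $O$, but no colouring with the variable end in $O$ and the clause end in $I$. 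This one-way implication is the engine of the whole reduction: the clause gadget is a path on $k+1$ vertices whose non-literal vertices are forced to $O$, so some literal vertex must be $I$, and the transmitter pulls that requirement back to the variable gadget, while the $P_3$ variable gadget forbids a variable and its negation from both being $I$. With undecorated paths in place of transmitters, satisfiability and partitionability become decoupled, so the proposed reduction proves nothing.
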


\begin{theo}\label{3gd}
Let $d$ and $g\ge3$ be fixed integers.
Either every planar graph with girth at least $g$ and maximum degree at most $d$ has an $(\cI,\cP{3})$-partition
or it is NP-complete to determine whether a planar graph with girth at least $g$ and maximum degree at most $d$ has an $(\cI,\cP{3})$-partition.
\end{theo}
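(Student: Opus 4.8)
The plan is to establish the stated dichotomy by a ``bad graph'' reduction, in the spirit of Theorem~\ref{kg}. First I would record that the problem lies in NP: given a candidate partition $(V_1,V_2)$, one verifies in linear time that $V_1$ is independent and that every component of $G[V_2]$ is a path on at most three vertices; the ``always yes'' side of the dichotomy then makes the decision problem trivial, so only the other side needs work. Next, fix $d$ and $g\ge 3$ and suppose we are not in the first case; then there is a planar graph $G_0$ of girth at least $g$ and maximum degree at most $d$ that admits no $(\cI,\cP{3})$-partition. Choosing $G_0$ to be vertex-minimal with this property, every graph $G_0-v$ is $(\cI,\cP{3})$-partitionable, yet no partition of $G_0-v$ extends to $v$. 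Thus in every partition of $G_0-v$ the neighbourhood $N(v)$ is forced into a constrained colouring pattern (one cannot place $v$ in $\cI$, so some neighbour lies in $\cI$; and one cannot place $v$ in the $\cP{3}$-part without lengthening a path-component beyond three vertices). This forbidden local behaviour is exactly what I would mine to build logic gadgets.

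Concretely, I would first turn $G_0$ into \emph{forcing gadgets}: small planar pieces of girth at least $g$ and maximum degree at most $d$, each carrying one or two distinguished vertices whose state (``in $\cI$'', ``endpoint of a $\cP{3}$-path'', etc.) is the same in every valid partition. From these I would assemble a variable gadget with two stable states and a clause gadget that is partitionable if and only if its incident literals satisfy the clause, and then reduce from a suitable NP-complete satisfiability problem. To keep the constructed instance planar I would reduce from a planar variant of SAT, and to keep its maximum degree at most $d$ I would reduce from a variant in which each variable occurs in a bounded number of clauses, so that each variable gadget feeds only boundedly many clause gadgets. A ``yes'' assignment then corresponds to a valid $(\cI,\cP{3})$-partition of the constructed graph, and conversely, giving correctness of the reduction.

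The main obstacle I anticipate is simultaneously respecting all three constraints --- planarity, girth at least $g$, and maximum degree at most $d$ --- while wiring the gadgets together. The maximum-degree bound is the genuinely new difficulty compared with Theorem~\ref{kg}: I cannot attach many gadgets at a single vertex, so I would route each signal through \emph{wire} gadgets and replicate it using \emph{fan-out} gadgets in which every vertex still has degree at most $d$, identifying gadgets only at vertices with spare degree or linking them through subdivided connectors. Subdivision is also the tool I would use to destroy any short cycles created at the interfaces, thereby preserving girth at least $g$; the delicate point is to verify that subdividing or lengthening a connector does not alter the forced state that the gadget is meant to transmit. Once correct wire, fan-out, variable, and clause gadgets are in hand and shown to lie in the prescribed class, matching satisfying assignments with valid $(\cI,\cP{3})$-partitions completes the NP-completeness half of the dichotomy.
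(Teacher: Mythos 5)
Your skeleton (minimal counterexample $\to$ forcing gadgets $\to$ reduction from a planar, bounded-occurrence SAT variant) is the same as the paper's, which refines the proof of Theorem~\ref{kg} and reduces from \textsc{restricted planar $3$-sat}. But there is a genuine gap at exactly the step that makes Theorem~\ref{3gd} harder than Theorem~\ref{kg}: you never construct a forcing gadget within the degree bound; you only assert that $G_0$ can be ``turned into'' small pieces carrying vertices with forced states. The natural way to do this --- the one used for Theorem~\ref{kg} --- removes an edge $xy$ from the minimal counterexample and, in the case where $x,y$ can be coloured either both $I$ or both $O$, glues \emph{three copies} of $G^-_{g,k}$ by identifying their copies of $x$ (and of $y$). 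Identification multiplies degrees, so that case is precisely what fails once a maximum-degree constraint must be preserved. Your wire/fan-out/subdivision toolkit does not repair this: the difficulty is not routing a signal out of a gadget (the bounded-occurrence SAT variant already handles fan-out), but creating a forced vertex at all without identifying high-degree vertices; and, as you note yourself but do not resolve, subdividing edges changes which $(\cI,\cP{3})$-partitions exist, so it cannot be applied blindly inside a forcing gadget.

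The paper's missing idea is concrete, and it is where all the work lies. For $d\le 2$ the statement is easy, and for $g\le 5$ an explicit graph (Figure~\ref{g5d3}, girth $5$, maximum degree $3$) serves as the forcing gadget, so one may assume $g\ge 6$ and $d\ge 3$. Planarity and girth at least $6$ then give $2$-degeneracy, so the minimal counterexample $A_{g,d}$ contains a $2$-vertex $w$, with neighbours $v_1$ and $v_5$. Replacing $w$ by a path $v_1v_2v_3v_4v_5$ of three new $2$-vertices yields a graph $A'_{g,d}$ that \emph{does} admit $(\cI,\cP{3})$-partitions, but in every one of them $v_3$ lies in an $O$-component of order exactly two; joining two copies of $A'_{g,d}$ through a new $2$-vertex $v$ adjacent to both copies of $v_3$ forces $v$ to be coloured $I$ in every partition, and a pendant neighbour of $v$ is in turn forced to be coloured $O$, which is the gadget that plays the role of $H'_{g,k}$ in the reduction. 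All of this surgery happens at vertices of degree at most $2$ (the new vertices have degree at most $2$, and the degree of $v_3$ rises only to $3\le d$), which is exactly how the degree bound survives. Without this $2$-degeneracy/path-replacement argument --- or some substitute for it, including a separate treatment of the case $g\le 5$, where $2$-degeneracy is unavailable --- your plan does not go through.
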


Theorem~\ref{3gd}, together with some graphs constructed in Section~\ref{complex}, yields the following corollary.

\begin{cor} \label{c-3gd}
Deciding whether a planar graph with girth at least $5$ and maximum degree at most $3$ has an $(\cI,\cP{3})$-partition is NP-complete.
Deciding whether a planar graph with girth at least $7$ and maximum degree at most $4$ has an $(\cI,\cP{3})$-partition is NP-complete.
\end{cor}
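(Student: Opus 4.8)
The plan is to read off NP-completeness from the dichotomy already recorded in Theorem~\ref{3gd}. Fix a pair $(g,d)$, namely $(5,3)$ or $(7,4)$. Theorem~\ref{3gd} asserts that either \emph{every} planar graph with girth at least $g$ and maximum degree at most $d$ admits an $(\cI,\cP{3})$-partition, or else deciding the existence of such a partition in this class is NP-complete. Hence it suffices to kill the first alternative: if I can exhibit a single planar graph $H$ of girth at least $g$ and maximum degree at most $d$ that admits \emph{no} $(\cI,\cP{3})$-partition, then the ``always yes'' branch is impossible, and the dichotomy immediately forces the decision problem to be NP-complete. So the entire content of the corollary reduces to two explicit non-existence constructions, which are the graphs built in Section~\ref{complex}.

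Concretely, I would produce two graphs: a planar $H_1$ of girth at least $5$ and maximum degree at most $3$ with no $(\cI,\cP{3})$-partition, and a planar $H_2$ of girth at least $7$ and maximum degree at most $4$ with no $(\cI,\cP{3})$-partition. The two settings are genuinely different regimes---the second tightens the girth requirement while relaxing the degree cap---so I expect to need two separate gadgets rather than a single graph serving both. In each case the verification that no partition exists is a finite case analysis, which can be done by hand or checked by computer once the gadget is small enough.

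The construction strategy rests on the rigidity of the local constraints of an $(\cI,\cP{3})$-partition. Every vertex placed in the independent part $\cI$ forces all of its neighbours into the $\cP{3}$-part; dually, every vertex in the $\cP{3}$-part lies in a path-component on at most three vertices, so it has at most two neighbours in that part, and no four vertices of the $\cP{3}$-part may induce a path. The idea is to start from a small ``hard core'' piece whose admissible local states are severely restricted, then attach copies so that these forced implications propagate around the graph and eventually collide, leaving no globally consistent colouring; finally one pads with subdivisions and pendant paths to raise the girth to the prescribed value and to keep every degree within the allowed bound.

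The main obstacle is the construction itself, because the three requirements pull against one another. I must keep the graph planar while simultaneously pushing the girth up to $5$ (respectively $7$) and respecting the tight degree cap of $3$ (respectively $4$). Raising the girth is the delicate point: subdividing an edge inserts degree-two vertices that typically \emph{relax} the forcing behaviour by introducing slack into the partition, so the gadget must be engineered so that its unsatisfiability survives the subdivisions needed to remove short cycles. Likewise, the degree cap leaves very little room for the branching vertices required to create genuine rigidity, so balancing the high-girth/low-degree sparsity against the need for an unsatisfiable instance is where the real work lies. Once valid $H_1$ and $H_2$ are in hand, the passage to NP-completeness is immediate from Theorem~\ref{3gd}.
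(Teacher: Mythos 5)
Your overall strategy---use Theorem~\ref{3gd} as a black box and kill its ``always yes'' branch by exhibiting, for each pair $(g,d)\in\{(5,3),(7,4)\}$, a planar graph of girth at least $g$ and maximum degree at most $d$ with no $(\cI,\cP{3})$-partition---is logically sound, and for the pair $(7,4)$ it is exactly what the paper does: Figure~\ref{g7d4} is precisely such a non-partitionable graph. But your proposal stops at the point where the actual mathematical content of the corollary begins. The paper's proof of Corollary~\ref{c-3gd} consists of two explicit graphs (Figures~\ref{g5d3} and~\ref{g7d4}) together with the finite but nontrivial verification of their properties; your text only asserts that such gadgets can be engineered, concedes that ``the real work lies'' in balancing planarity, girth, and the degree cap, and offers heuristics for how one might search. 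A proof that reduces the statement to ``it suffices to construct $H_1$ and $H_2$'' and then does not construct them has not proven the statement; in particular, nothing in your argument rules out the a priori possibility that no such $H_1$ exists, in which case, by Theorem~\ref{3gd} itself, the first decision problem would be trivial rather than NP-complete.

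There is also a smaller, instructive discrepancy in the $(5,3)$ case. The graph the paper constructs there (Figure~\ref{g5d3}) is \emph{not} a non-partitionable graph: it admits $(\cI,\cP{3})$-partitions, but in every such partition a designated $2$-vertex is forced into $\cI$. This forcing gadget is the object $B_{5,3}$ that the proof of Theorem~\ref{3gd} needs for its reduction when $g\le 5$; for $g\ge 6$ that proof manufactures such a gadget from a minimal non-partitionable graph using $2$-degeneracy, which is unavailable at girth $5$. Your plan, which demands a non-partitionable graph outright, is not wrong---one can obtain one by taking two disjoint copies of the paper's gadget and joining their forced $2$-vertices by an edge (degrees rise to at most $3$, the new edge is a bridge so no cycle is created, and the two adjacent vertices forced into $\cI$ give a contradiction)---but this derivation, or any other explicit construction together with its case analysis, is exactly what is missing from your proposal.
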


We will prove Theorem~\ref{main} and Theorem~\ref{t32_main} in Section~\ref{s_main} and Section~\ref{s_t32_main}, respectively.
Theorems~\ref{kg} and~\ref{3gd} will be proven in Section~\ref{complex}. 
We end this section with some notation and definitions. 

Given an {$(\cI,\cO{k})$-partition} of $G$, we will assume that $V(G)$ is partitioned into two parts $I$ and $O$ where $I$ is an independent set and $O$ induces a graph whose components have order at most $k$; we also call the sets $I$ and $O$ \emph{colours}, and a vertex in $I$ and $O$ is said to have colour $I$ and $O$, respectively. 
A \emph{$k$-vertex}, a \emph{$k^+$-vertex}, and a \emph{$k^-$-vertex} are a vertex of degree $k$, at least $k$, and at most $k$,  respectively.
Also, for a vertex $v$, a \emph{$k$-neighbour}, a \emph{$k^+$-neighbour}, and a \emph{$k^-$-neighbour} of $v$ are a neighbour of $v$ with degree $k$, at least $k$, and at most $k$, respectively.

For $S \subset V(G)$, let $G - S$ be the graph obtained from $G$ by removing the vertices of $S$ and all the edges incident to a vertex of $S$. 
If $S=\{x\}$, then let $G - x$ denote $G - \{x\}$. 
For a set $S$ of vertices satisfying $S \cap V(G) = \emptyset$, let $G + S$ be the graph obtained from $G$ by adding the vertices of $S$. 
If $S=\{x\}$, then let $G + x$ denote $G + \{x\}$.
For a set $E'$ of nonedges of $G$, let $G + E'$ be the graph constructed from $G$ by adding the edges of $E'$. 
If $E'=\{e\}$, then we denote $G + \{e\}$ by $G + e$. 
For $W \subset V(G)$, we denote by $G[W]$ the subgraph of $G$ induced by $W$. 
For a vertex $v$, the open neighbourhood of $v$, denoted $N(v)$, is the set of neighbours of $v$; the closed neighbourhood of $v$ is $N(v)\cup\{v\}$, which is denoted by $N[v]$.
%For all $d \in \mathbb{N}$, a \emph{$d$-vertex} is a vertex of degree $d$ in $G$, and for all vertex $v$, a \emph{$d$-neighbour} of $v$ is a vertex of degree $d$ that is a neighbour of $v$.
For a set $X$ of vertices, let us denote by $N(X) = \{y \in V \setminus X, \exists x \in X, xy \in E\}$ the neighbourhood of $X$ in $G$. 

\section{Proof of Theorem~\ref{main}} \label{s_main}

We will prove Theorem~\ref{main} via the discharging method. 
Let $G$ be a counterexample to Theorem~\ref{main} with as few vertices as possible.
The graph $G$ is connected, since otherwise at least one of its components would be a counterexample with fewer vertices than $G$. 
This further implies that every vertex of $G$ has degree at least 1. 

Given an (partial) $(\cI, \cO{3})$-partition of $G$, we say a vertex coloured $O$ is \emph{saturated} if it has either two neighbours in $O$ or a neighbour in $O$ with two neighbours in $O$. 
%We say a vertex coloured $O$ is \emph{partially saturated} if it has a neighbour in $O$ and is not saturated. 
We say a vertex coloured $O$ is \emph{unsaturated} if it has no neighbour coloured $O$.

\begin{lemm} \label{l_dge2}
Every vertex in $G$ has degree at least $2$.
\end{lemm}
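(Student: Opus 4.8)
The plan is to prove Lemma~\ref{l_dge2} by contradiction, exploiting the minimality of $G$. Since we already know every vertex has degree at least $1$, suppose for contradiction that $G$ contains a $1$-vertex $v$, and let $u$ be its unique neighbour. First I would pass to the smaller graph $G' = G - v$. Since $v$ is a $1$-vertex, deleting it cannot increase any subgraph's average degree, so $\mad(G') \le \mad(G) < \frac{5}{2}$, and $G'$ has fewer vertices than $G$. By the minimality of $G$ (as a smallest counterexample), $G'$ is \emph{not} a counterexample, hence $G'$ admits an $(\cI, \cO{3})$-partition into an independent set $I$ and a set $O$ inducing components of order at most $3$.

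The key step is then to extend this partition of $G'$ to all of $G$ by assigning a colour to $v$, deriving a contradiction with the assumption that $G$ itself has no such partition. The only obstructions to colouring $v$ come from its single neighbour $u$. If $u \in I$, I would try to place $v \in O$: since $v$ has no other neighbour, the component of $v$ in $G[O]$ is just $\{v\}$ itself, which has order $1 \le 3$, so this is valid. If instead $u \in O$, I would place $v \in I$: since $I$ was independent in $G'$ and $v$'s only neighbour $u$ lies in $O$, the set $I \cup \{v\}$ remains independent. In either case we obtain a valid $(\cI, \cO{3})$-partition of $G$, contradicting the choice of $G$ as a counterexample.

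I expect the argument to be essentially this clean, since a degree-$1$ vertex is trivially accommodated by either colour depending on its neighbour's colour—there is no component-size or independence conflict to manage. The main point requiring care, rather than an obstacle, is the verification that the reduction is valid: namely that $\mad(G - v) < \frac{5}{2}$ (immediate, as $\mad$ is monotone under taking subgraphs and $G - v$ is a subgraph of $G$) and that $G - v$ is nonempty and still connected enough to apply minimality. Strictly, one invokes minimality on the number of vertices alone, so connectivity of $G'$ is not needed; one simply needs $G' $ to have fewer vertices and satisfy the $\mad$ hypothesis, which it does. This establishes that no $1$-vertex can exist, proving every vertex of $G$ has degree at least $2$.
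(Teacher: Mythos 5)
Your proposal is correct and follows essentially the same argument as the paper: delete the degree-$1$ vertex $v$, invoke minimality to obtain an $(\cI,\cO{3})$-partition of $G-v$, and extend it by giving $v$ the colour opposite to that of its unique neighbour. Your extra verifications (that $\mad(G-v)<\frac{5}{2}$ and that each colour choice is valid) are exactly the details the paper leaves implicit.
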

\begin{proof}
%Let $v$ be a vertex of degree at most $1$ in $G$. Since $G$ is connected, if $v$ has degree $0$, then $V(G) = \{v\}$ and $G$ admits an $({\cal I},{\cal O}_3)$-partition, a contradiction. Therefore $v$ has degree $1$. 
Let $v$ be a vertex of degree 1 in $G$.
Since the graph $G - v$ has fewer vertices than $G$, it admits an $(\cI,\cO{3})$-partition, which can be extended to an $(\cI,\cO{3})$-partition of $G$ by giving to $v$ the colour distinct from that of its neighbour. 
This contradicts that $G$ is a counterexample. 
\end{proof}

\begin{lemm} \label{l_3-}
Every $3^-$-vertex in $G$ has at least one $3^+$-neighbour.
\end{lemm}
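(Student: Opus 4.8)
The plan is to argue by contradiction via a reducibility argument. Suppose some $3^-$-vertex $v$ has no $3^+$-neighbour. By Lemma~\ref{l_dge2} every vertex of $G$ has degree at least $2$, so $\deg(v)\in\{2,3\}$ and each neighbour of $v$ has degree exactly $2$. Write $N(v)=\{u_1,\dots,u_d\}$ with $d=\deg(v)$, and for each $i$ let $w_i$ be the neighbour of $u_i$ other than $v$. I would delete $N[v]$: since $G-N[v]$ has fewer vertices than $G$, minimality gives it an $(\cI,\cO{3})$-partition, which I then try to extend to all of $N[v]$, reaching a contradiction.

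The extension is governed by how many of the $w_i$ lie in the independent set. Let $t=|\{i:w_i\in I\}|$. The point is that a leg $u_i$ with $w_i\in I$ is forced into $O$ (it cannot join $I$, being adjacent to $w_i\in I$), whereas a leg with $w_i\in O$ is free to take colour $I$. If $t\le 2$, I would colour $v$ with $O$, put every forced leg in $O$ and every free leg in $I$; then the $O$-component of $v$ is exactly $\{v\}$ together with the (at most two) forced legs, which has order at most $1+t\le 3$, while each free leg placed in $I$ is adjacent only to the $O$-coloured vertices $v$ and $w_i$, so $I$ stays independent. If instead $t=3$ (which forces $d=3$), colouring $v$ with $O$ would create an $O$-component of order $4$, so here I would colour $v$ with $I$ and all three legs with $O$; each leg is then adjacent only to $v\in I$ and to $w_i\in I$, hence is an isolated $O$-vertex, and $v\in I$ has all its neighbours in $O$. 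In every case the partition extends, contradicting the choice of $G$.

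The main obstacle is exactly this $t=3$ configuration in the degree-$3$ case: the natural choice $v\in O$ fails because the component would be too large, and one must switch to $v\in I$. A secondary point requiring care is the collection of degenerate coincidences among the $w_i$ and the $u_j$: two legs may share their outer neighbour ($w_i=w_j$), or two legs may be adjacent and form a triangle with $v$ ($w_i=u_j$), and the fully degenerate case where $N[v]=V(G)$ leaves only a constant-size graph. In the triangle case the three mutually adjacent vertices of $O$ still form a legitimate component of order $3$, and the tiny leftover graphs can be partitioned by hand; each such case is checked along the same lines, choosing the colour of $v$ so that no $O$-component exceeds order $3$ and $I$ remains independent.
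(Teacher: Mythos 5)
Your proposal is correct and takes essentially the same approach as the paper: delete $N[v]$, use minimality to partition $G - N[v]$, colour each degree-2 neighbour opposite to its outer neighbour, and give $v$ colour $I$ exactly when all neighbours are forced into $O$. The paper merely packages this more uniformly (colour everything uncoloured, including $v$ and any triangle neighbours, with $O$, then flip $v$ to $I$ in the single failure case that all of $N[v]$ is coloured $O$), which absorbs your degenerate cases without separate treatment.
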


\begin{proof}
Let $v$ be a $3^-$-vertex where every neighbour has degree $2$ and let $G' = G - N[v]$. 
Since a neighbour of $v$ is a $2$-vertex, it cannot have neighbours in both $N(v)$ and $G'$.
Therefore, the neighbours of $v$ that have a neighbour in $G'$ must form an independent set. 
Since $G'$ has fewer vertices than $G$, it admits an $(\cI,\cO{3})$-partition. 
For every neighbour $u$ of $v$ that has a neighbour $u'$ in $G'$, colour $u$ with the colour distinct from that of $u'$. 

First, colour all uncoloured veritces with $O$.
If this does not give an $(\cI,\cO{3})$-partition of $G$, it must be that all four vertices in $N[v]$ received the colour $O$, so we can recolour $v$ with $I$ to obtain an $(\cI,\cO{3})$-partition of $G$, which is a contradiction. 
%If two vertices of $N(v)$ are adjacent, we can colour one with colour $O$ and the other with colour $I$. If all the vertices in $N(v)$ have colour $O$, then we can colour $v$ with colour $I$. Otherwise, $v$ has at most two neighbours in $G$ that are coloured $O$, and each of them has no neighbours in $G$ that are coloured $O$.
%Therefore we can assign $O$ to $v$ and we obtain an $({\cal I},{\cal O}_3)$-partition of $G$, a contradiction.
\end{proof}

In $G$, a \emph{chain} is a longest induced path whose internal vertices all have degree $2$.
% in $G$ is called a \emph{chain}. 
A chain with $k$ internal vertices is a \emph{$k$-chain}. 
The \emph{end-vertices} of a $k$-chain are the vertices that are not internal vertices; 
note that every end-vertex of a chain is a $3^+$-vertex. 
By Lemma~\ref{l_3-}, there are no $3$-chains in $G$.

We will build a subgraph $L$ of $G$ using the following rules:

\begin{enumerate}[label={\bf Step \arabic*}]
\item For every $2$-chain $u_1v_1v_2u_2$, add all vertices to $L$, as well as the edges $u_1v_1$ and $u_2v_2$. 
Note that we do not add the edge $v_1v_2$. 
We set that, for $i \in \{1,2\}$, $u_i$ is the \emph{father} of $v_i$, and $v_i$ is a \emph{leaf} of $L$.

%\item Whenever there is a $4$-vertex $w$ that is the father of three vertices in $L$, such that the fourth neighbour of $w$ is a $2$-vertex $v$ that is not in $L$, do the following. Let $u$ be the neighbour of $v$ distinct from $w$. Since $v$ is not in $L$, $v$ is not in a $2$-chain, and thus $u$ is a $3^+$-vertex. Add the vertices $u$ (if it is not already in $L$) and $v$ to $L$, as well as the edges $uv$ and $vw$. The vertices $u$ and $v$ are the fathers of $v$ and $w$ respectively.
%
%Whenever there is a $3$-vertex $w$ that is the father of a vertex in $L$ and has a $2$-neighbour $v$ that is not in $L$, do the following. Let $u$ be the neighbour of $v$ distinct from $w$. Since $v$ is not in $L$, $v$ is not in a $2$-chain, and thus $u$ is a $3^+$-vertex. Add the vertices $u$ and $v$ to $L$, as well as the edges $uv$ and $vw$. The vertices $u$ and $v$ are the fathers of $v$ and $w$ respectively.
\item For every vertex $w$ that is either a 4-vertex and a father of three vertices in $L$ or a 3-vertex and a father of a vertex in $L$, if $w$ has a $2$-neighbour $v$  that is not in $L$, do the following:
 Let $u$ be the neighbour of $v$ distinct from $w$. Since $v$ is not in $L$, $v$ is not in a $2$-chain, and thus $u$ is a $3^+$-vertex. Add the vertices $u$ (if it is not already in $L$) and $v$ to $L$, as well as the edges $uv$ and $vw$. The vertices $u$ and $v$ are the fathers of $v$ and $w$, respectively.
 
 \item Repeat Step 3 as long as $L$ grows. 
  \end{enumerate}

A vertex $u$ is a \emph{son} of a vertex $v$ if $v$ is the father of $u$.
By construction of $L$, every son of a $3^+$-vertex is a $2$-vertex, and every son of a $2$-vertex is a $3^+$-vertex.

Since every $3$-vertex has a $3^+$-neighbour by Lemma~\ref{l_3-}, every $3$-vertex that has a son has at most one father.
Note that both a son and a father of a 3-vertex are $2$-vertices. 
This further implies that every vertex has at most one father. 
Hence, the construction above produces a rooted forest. 
We say that a vertex $v$ is a \emph{descendant} of a vertex $u$ if there are vertices $w_1 = u, w_2, \ldots, w_{k-1}, w_k = v$ for some $k\ge 2$, such that $w_i$ is the father of $w_{i+1}$ for each $i\in\{1, \ldots, k-1\}$.

\begin{lemm} \label{l_recol}
Given a $3^+$-vertex $v$ of $L$, let $u$ be the father of $v$ and let $D$ be the set of descendants of $v$. 
Let $H$ be an induced subgraph of $G - (D \cup \{u, v\})$.
We can find an $({\cal I},{\cal O}_3)$-partition of $G[V(H) \cup D \cup \{v\}]$ such that, if $v$ is coloured $O$, then it is not saturated.
\end{lemm}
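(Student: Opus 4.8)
The plan is to induct on $|D|$, the number of descendants of $v$. Recall that in the rooted forest $L$ the sons of a $3^+$-vertex are $2$-vertices and every $2$-vertex has at most one son, so the subtree below $v$ has a rigid shape: $v$ has some $2$-vertex sons $s_1,\dots,s_t$, and each $s_i$ that is not a leaf has a single son $w_i$, a $3^+$-vertex rooting a strictly smaller subtree with descendant set $D_{w_i}\subsetneq D$. Write $M=G[V(H)\cup D\cup\{v\}]$ for the graph to be coloured. The target at each step is an $(\cI,\cO{3})$-partition of $M$ in which $v$, if coloured $O$, lies in an $O$-component of order at most $2$; this is exactly what it means for $v$ not to be saturated, and it is what will later let the removed father $u$ be attached to $v$ without exceeding component order $3$.

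For the inductive step, suppose some son $s$ of $v$ has a grandchild $w$. I would apply the induction hypothesis to $w$, whose father is $s$ and whose descendant set satisfies $|D_w|<|D|$, taking as external subgraph $H'=G[V(M)\setminus(\{v,s,w\}\cup D_w)]$, that is, all of $M$ except $v$, the son $s$, and the subtree below $w$. Since $H'$ is an induced subgraph of $G-(D_w\cup\{s,w\})$, the hypothesis yields an $(\cI,\cO{3})$-partition of $G[V(H')\cup D_w\cup\{w\}]=M-\{v,s\}$ in which $w$, if coloured $O$, lies in an $O$-component of order at most $2$. Allowing $H$ to be an \emph{arbitrary} induced subgraph in the statement is precisely what makes this single recursive call legitimate: the other sons of $v$, their subtrees, and all of $H$ are simply absorbed into $H'$. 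It then remains only to colour the two vertices $v$ and $s$.

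To extend the partition to $s$ and $v$, I would use that $s$ is a $2$-vertex with neighbours $v$ and $w$, and that $w$ (when coloured $O$) sits in an $O$-component of order at most $2$, so colouring $s$ with $O$ and joining it to $w$ keeps that component of order at most $3$. Depending on the colours the recursive call has already fixed on $v$'s other neighbours, I would choose the colours of $s$ and $v$ to obtain a valid $(\cI,\cO{3})$-partition of $M$ in which $v$ is unsaturated: if no already-coloured neighbour of $v$ has colour $I$, set $v$ to $I$, making the condition on $v$ vacuous, and colour $s$ accordingly; otherwise set $v$ to $O$ and use the free choice at $s$ to ensure $v$ keeps at most one $O$-neighbour and that this neighbour carries no second neighbour in $O$. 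The base case, where $v$ has no grandchild, is handled the same way after first colouring the proper subgraph $G[V(H)]$ by the minimality of $G$.

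The crux is guaranteeing that $v$ can be made \emph{unsaturated} precisely when it is forced to take colour $O$. Because $v$ may have large degree, many of its neighbours lie in the already-coloured part and are outside our control: two of them could be coloured $O$, or one of them could already lie in an $O$-component of order $2$, and in either case $v$ is saturated the moment it is coloured $O$, while an $I$-neighbour simultaneously forbids colour $I$. Resolving this will require a case analysis on the colours and component orders of $v$'s fixed neighbours, combined with local recolouring — rerouting a small $O$-component away from $v$, or recolouring the son $s$ together with its unsaturated grandchild $w$ — to either free up colour $I$ or detach $v$ from an $O$-component. I expect this to be where the earlier structural reductions (no $3$-chains, and every $3^-$-vertex having a $3^+$-neighbour) together with the hypothesis $\mad(G)<\frac{5}{2}$ are brought to bear, ruling out the configurations that admit no valid recolouring.
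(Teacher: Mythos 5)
Your reduction to a single recursive call is set up legitimately, but the proof never closes: the last paragraph defers exactly the point on which the lemma turns, and the deferral looks for the resolution in the wrong place. The missing idea is purely structural and comes from Step~2 of the construction of $L$: a $3^+$-vertex acquires a father only when it is already a $4$-vertex that is the father of three vertices of $L$, or a $3$-vertex that is the father of a vertex of $L$ (and in the latter case, by Lemma~\ref{l_3-}, its third neighbour is a $3^+$-vertex). So the vertex $v$ of the lemma is never ``of large degree with many coloured neighbours outside our control'': once the father $u$ is deleted, either \emph{all} remaining neighbours of $v$ are its sons ($4$-vertex case), or $v$ has one son and exactly one other neighbour $v_2$ ($3$-vertex case). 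The configuration you fear (two $O$-neighbours, or an $O$-neighbour in a component of order $2$, together with an $I$-neighbour) simply cannot occur, and the finish is a two-line case analysis: in the $4$-vertex case colour the three sons $O$ and $v$ with $I$; in the $3$-vertex case, if $v_2\in O$ colour the son $O$ and $v$ with $I$, and if $v_2\in I$ colour the son with the colour opposite to that of its far endpoint and $v$ with $O$, which leaves $v$ unsaturated. In particular, $\mad(G)<\frac{5}{2}$ is never used in this lemma (it enters only in the discharging), so expecting it to rule out the bad configurations is a sign that the intended mechanism has been missed.

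Moreover, your recursion actively manufactures the adversarial situation. By absorbing the other sons of $v$ and their subtrees into $H'$, you let the recursive call colour those sons arbitrarily. In the $4$-vertex case, two sons may come back forced to $O$ (their far endpoints being $I$) while the third comes back $I$ with its far endpoint lying in an $O$-component of order $3$; then $v$ can be coloured neither $I$ (it has an $I$-neighbour) nor $O$ without being saturated, and no safe local recolouring is available since you know nothing about the saturation of the other sons' far endpoints. The paper avoids this by keeping \emph{every} vertex of $D$ --- in particular every son of $v$ --- uncoloured until the very last step: it first arranges that each son's far endpoint $w_i$ is unsaturated whenever it is in $O$, by the induction hypothesis when the son has a child, and, when the son is a leaf of $L$, by the safe recolouring of the $2$-vertex $w_i$ from $O$ to $I$ (legitimate because its only coloured neighbour is then in $O$); only afterwards are all the sons and $v$ coloured simultaneously. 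Note that your base case needs this leaf recolouring trick as well --- the far endpoints of leaf sons are already-coloured $2$-vertices that may be saturated --- and your sketch does not supply it.
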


\begin{proof}
Since $H$ has fewer vertices than $G$, the graph $H$ admits an $({\cal I},{\cal O}_3)$-partition. 

We will prove the lemma by induction. We assume that, for every $3^+$-vertex $x$ that is a descendant of $v$, we can colour $x$ and all of its descendants such that, if $x$ is coloured $O$, then it is not saturated.

Let $v'$ be a son of $v$. Recall $v'$ is a 2-vertex by construction of $L$. 
Let $w'$ be the neighbour of $v'$ distinct from $v$. 
If $v'$ is a leaf of $L$, then by construction of $L$, $w'$ is a 2-vertex. 
Let $w$ be the neighbour of $w'$ that is distinct from $v'$. 
If $w$ and $w'$ are both coloured $O$, then we can recolour $w'$ with $I$. 
Therefore, we can assume that if $w'$ is coloured $O$, then it is not saturated. 
We reach the same conclusion by induction if $v'$ is not a leaf of $L$.

Note that by the construction of $L$, since $v$ is a $3^+$-vertex that has a father, $v$ is either a $4$-vertex with three sons or a $3$-vertex with a son.
First, suppose that $v$ is a $4$-vertex with three sons $v_1$, $v_2$, and $v_3$.
For $i\in\{1,2,3\}$, let $w_i$ be the neighbour of $v_i$ distinct from~$v$. 
By above, we may assume for each $w_i$, if it is coloured $O$, then it is not saturated. 
Therefore, we can colour each $v_i$ with colour $O$ and colour $v$ with $I$. 
%Let $i \in \{1,2,3\}$. 
%Assume first that $v_i$ is a leaf of $L$. 
%By construction of $L$, $w_i$ is a $2$-vertex. Let $x$ be the neighbour of $w_i$ distinct from $v_i$. If $w_i$ and $x$ are both coloured $O$, then we can recolour $w_i$ to $I$. Therefore we can assume that, if $w_i$ is coloured $O$, then it is unsaturated. Assume now that $v_i$ is not a leaf of $L$. By induction, we can colour $w_i$ and its descendants such that, if $w_i$ is coloured $O$, then it is not saturated. Now, for all $i$, we can colour $v_i$ with colour $O$. Finally, we can colour $v$ with colour $I$.

Suppose now that $v$ is a $3$-vertex with a son $v_1$ and let $v_2$ be the third neighbour of $v$.
Let $w$ be the neighbour of $v_1$ distinct from $v$. 
By above, we know if $w$ is coloured $O$, then it is not saturated.
%As above, if $v_1$ is not a leaf, then we colour $w$ and its descendants such that if $w$ has colour $O$, then it is not saturated. 
%If $v_1$ is a leaf, then, as above, we can also assume that, if $w$ has colour $O$, then it is not saturated (otherwise we can recolour $w$ with $I$).
If $v_2$ is coloured $O$, then we can colour $v_1$ with colour $O$ and $v$ with colour $I$.
If $v_2$ is coloured $I$, then we can colour $v_1$ with a colour distinct from that of $w$, and colour $v$ with colour $O$. 
Note that in this case $v$ is not saturated. 

This proves the lemma.
\end{proof}

\begin{lemm} \label{l_3v}
Every $3$-vertex of $G$ has at most one son in $L$.
\end{lemm}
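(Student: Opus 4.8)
The plan is to argue by contradiction. Suppose that some $3$-vertex $w$ has two sons $s_1,s_2$ in $L$, and let $y$ be its third neighbour. Since every son of a $3^+$-vertex is a $2$-vertex, both $s_1$ and $s_2$ have degree $2$, so $y$ is the only neighbour of $w$ that can have degree at least $3$; by Lemma~\ref{l_3-} it does. The father of the $3^+$-vertex $w$ would have to be a $2$-neighbour of $w$ distinct from $s_1$ and $s_2$, but $s_1,s_2$ are the only $2$-neighbours of $w$, so $w$ has no father and is a root of its tree. For $i\in\{1,2\}$ write $t_i$ for the neighbour of $s_i$ other than $w$. Inspecting how $L$ is built, each son $s_i$ is of one of two types: if $s_i$ was created in Step~1 then $w\,s_i\,t_i\,z_i$ is a $2$-chain in which $s_i$ and $t_i$ are leaves and $t_i$ is a $2$-vertex whose father is the $3^+$-vertex $z_i$; if $s_i$ was created in Step~2/3 then $t_i$ is a $3^+$-vertex that is a son, hence a descendant, of $s_i$.

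I would then delete a set $R$ consisting of $w$, both sons $s_1,s_2$, and, for each $i$, either $t_i$ (when $s_i$ has the first type) or the whole set of descendants of $s_i$, namely $t_i$ together with all descendants of $t_i$ (when $s_i$ has the second type). Set $H=G-R$, and note that $y$ and each first-type $z_i$ lie in $H$. By minimality $H$ admits an $\IO{3}$-partition. For each second-type son I invoke Lemma~\ref{l_recol} on the $3^+$-vertex $t_i$, whose father is $s_i$, to colour $t_i$ and all of its descendants so that $t_i$ is unsaturated whenever it is coloured $O$; the two descendant-subtrees are vertex-disjoint, and the lemma is applied to them in succession, so after this step the whole of $H$ together with both subtrees is coloured, each $t_i$ being unsaturated when it receives $O$. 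Now the only uncoloured vertices are $w$, the two sons, and the first-type leaves $t_i$, and the only coloured vertices they meet are $y$, the first-type $z_i$, and the second-type $t_i$.

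Finally I would colour $w$, $s_1$, $s_2$, and the first-type $t_i$, distinguishing two cases according to the colour of $y$. If $y$ is coloured $O$, give $w$ the colour $I$ and every son the colour $O$. Since $w$ is coloured $I$, no two sons share an $O$-component, and each son lies in a component of order at most $3$: for a second-type son this uses that $t_i$ is unsaturated, and for a first-type son I insulate $s_i$ from $z_i$ by colouring $t_i$ with $I$ when $z_i$ is coloured $O$, and with $O$ when $z_i$ is coloured $I$ (so that $\{s_i,t_i\}$ is a component of order $2$). If instead $y$ is coloured $I$, then $w$ cannot be coloured $I$, so I colour $w$ with $O$ and keep its $O$-component small by cutting every ``expensive'' son and attaching only the ``cheap'' ones. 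A son need be attached to $w$ only when doing so adds just the single vertex $s_i$ to the $O$-component of $w$ (a second-type son with $t_i$ coloured $I$, or a first-type son with $z_i$ coloured $O$, insulated by colouring $t_i$ with $I$), whereas any son that would enlarge this component further (a second-type son with $t_i$ coloured $O$, or a first-type son with $z_i$ coloured $I$) can be cut by colouring $s_i$ with $I$, which is legal since its two neighbours then avoid $I$. As $w$ has only two sons, the $O$-component of $w$ has order at most $1+2=3$, yielding an $\IO{3}$-partition of $G$ and the desired contradiction.

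The delicate point is the case where $y$ is coloured $I$: there $w$ is pinned to the colour $O$ and cannot be isolated, so one must at once keep the $O$-component of $w$ of order at most $3$ and prevent a first-type vertex $t_i$ from being pushed onto the external $O$-component at $z_i$, which may already be saturated. Pulling each first-type $t_i$ into $R$ and recolouring it ourselves, rather than leaving it inside $H$, is precisely what makes this possible, and the bound of two sons is what lets the final count close at order $3$. Verifying that the few remaining adjacencies are respected, including degenerate coincidences such as $y$ equalling some $z_i$ or $t_i$, is routine and completes the contradiction.
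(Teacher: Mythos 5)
Your proof is correct and follows essentially the same route as the paper: delete the subtree of $L$ below the $3$-vertex, invoke minimality, apply Lemma~\ref{l_recol} to the $3^+$-vertex below each non-leaf son, and split into cases according to the colour of the third neighbour, colouring the $3$-vertex $O$ when that neighbour is $I$ and vice versa. The only cosmetic difference is that you also delete the second internal vertex $t_i$ of each $2$-chain and colour it by hand, whereas the paper leaves it in the graph given to the minimality hypothesis and recolours it when needed; both devices yield the same guarantee that the sons' outer neighbours are harmless when coloured $O$.
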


\begin{proof}
Let $v$ be a $3$-vertex with at least two sons, $v_1$ and $v_2$, in $L$. Let $w_1$ and $w_2$ be the neighbours of $v_1$ and $v_2$, respectively, distinct from $v$, and let $v_3$ be the third neighbour of $v$.
Note that by Lemma~\ref{l_3-}, $v_3$ is a $3^+$-vertex. Let $D$ be the set of the descendants of $v$. The graph $G - (D \cup\{v\})$ has fewer vertices than $G$. Therefore it admits an $({\cal I},{\cal O}_3)$-partition. For $i \in \{1,2\}$, if $v_i$ is a leaf in $L$, then we can assume, up to recolouring $w_i$, that if $w_i$ has colour $O$, then it is unsaturated. For $i \in \{1,2\}$, if $v_i$ is a not a leaf in $L$, then $w_i$ is a $3^+$-vertex, and by Lemma~\ref{l_recol}, we can colour $w_i$ and all of its descendants such that, if $w_i$ is coloured $O$, then it is not saturated. 

If $v_3$ has colour $I$, then, for all $i \in \{1,2\}$, we assign to $v_i$ a colour distinct from that of $w_i$, and since $v$ has at most two neighbours that have colour $O$ and each of them is unsaturated, we can assign colour $O$ to $v$.
If $v_3$ has colour $O$, then we assign colour $O$ to $v_1$ and $v_2$, and colour $I$ to $v$.
In both cases, that leads to an $({\cal I},{\cal O}_3)$-partition of $G$, a contradiction.
\end{proof}

\begin{lemm} \label{l_4v}
Every $4$-vertex of $G$ has at most three sons in $L$.
\end{lemm}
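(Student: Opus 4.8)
The plan is to argue by contradiction in the same spirit as Lemma~\ref{l_3v}, exploiting the fact that a $4$-vertex with four sons has no neighbour left to serve as its father and is therefore a root of $L$. Suppose $v$ is a $4$-vertex with four sons $v_1,v_2,v_3,v_4$ in $L$. Since every son of a $3^+$-vertex is a $2$-vertex, each $v_i$ is a $2$-vertex; and since all four neighbours of $v$ are sons, $v$ has no father. For each $i$, let $w_i$ be the neighbour of $v_i$ distinct from $v$.

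First I would pass to a smaller graph: letting $D$ be the set of descendants of $v$, the graph $G-(D\cup\{v\})$ has fewer vertices than $G$, so it admits an $(\cI,\cO{3})$-partition. I then normalise the colours of the $w_i$ exactly as in the proofs of Lemma~\ref{l_recol} and Lemma~\ref{l_3v}: if $v_i$ is a leaf of $L$, then $w_i$ is a $2$-vertex lying in $G-(D\cup\{v\})$ and, up to recolouring $w_i$, I may assume that if $w_i$ is coloured $O$ then it is unsaturated; if $v_i$ is not a leaf, then $w_i$ is a $3^+$-vertex and Lemma~\ref{l_recol} lets me colour $w_i$ together with all of its descendants so that, if $w_i$ is coloured $O$, then it is not saturated. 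Because the subtrees of $L$ rooted at the distinct vertices $w_i$ are pairwise disjoint, these colourings can be performed one after another, taking the already-coloured graph as the induced subgraph $H$ in each application of Lemma~\ref{l_recol}.

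I would then colour every $v_i$ with $O$ and colour $v$ with $I$, and check that this yields an $(\cI,\cO{3})$-partition of $G$, contradicting the minimality of $G$. The set $I$ stays independent because all four neighbours of $v$ receive colour $O$. For the $\cO{3}$ condition, each $v_i$ is a $2$-vertex one of whose neighbours, namely $v$, now lies in $I$; hence the only possible $O$-neighbour of $v_i$ is $w_i$, so the $O$-component containing $v_i$ is $\{v_i\}$ together with the $O$-component of $w_i$. As $w_i$ is unsaturated (leaf case) or not saturated (non-leaf case), the $O$-component of $w_i$ has order at most $2$, so each of these components has order at most $3$.

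The delicate point, and the step I expect to be the main obstacle, is ensuring that these four components are genuinely distinct and do not merge into something larger than a $3$-component. This rests on the $w_i$ being pairwise distinct and mutually non-interfering. For non-leaf sons this follows from the fact that every vertex of $L$ has at most one father: each such $3^+$-vertex $w_i$ has $v_i$ as its unique father, so distinct sons give distinct $w_i$. For leaf sons it follows from the chain structure, since the second neighbour of such a $w_i$ is a $3^+$ end-vertex of a $2$-chain, whence two leaf sons cannot share the same $2$-vertex $w_i$ and no $w_i$ can be adjacent to another son. Colouring $v$ with $I$ severs every path between two sons running through $v$, so once the $w_i$ are seen to lie in separate subtrees, the four $O$-components are disjoint and each has order at most $3$, completing the contradiction.
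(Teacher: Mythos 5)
Your proposal follows the paper's proof step for step: the same minimal-counterexample reduction to $G-(D\cup\{v\})$, the same normalisation of the $w_i$ (recolouring in the leaf case, Lemma~\ref{l_recol} in the non-leaf case, applied sequentially to disjoint subtrees), and the same final assignment of colour $O$ to the $v_i$ and $I$ to $v$. The verification you add (independence of $I$, the order of each component) is material the paper leaves implicit, and most of it is correct, including the pairwise distinctness of the $w_i$ and the fact that no $w_i$ is adjacent to another son.

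However, the step you yourself single out as delicate is not closed by the argument you give. Distinctness of the $w_i$ and their lying in different subtrees of $L$ does not prevent an edge of $G$ between two of them: for two non-leaf sons $v_i,v_j$, the vertices $w_i,w_j$ are $3^+$-vertices, and nothing in the construction of $L$ forbids $w_iw_j\in E(G)$ --- $L$ is a forest, but $G$ need not respect it. In that situation the guarantee of Lemma~\ref{l_recol} \emph{as stated} allows both $w_i$ and $w_j$ to be coloured $O$ with each having the other as its unique $O$-neighbour; then neither is saturated, yet they lie in one common $O$-component $\{w_i,w_j\}$, and after you colour $v_i$ and $v_j$ with $O$ you get the component $\{v_i,w_i,w_j,v_j\}$ of order $4$. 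So ``separate subtrees, hence disjoint $O$-components'' is a non sequitur; the same problem arises when a leaf-type $w_i$ has its $3^+$-neighbour equal to some non-leaf $w_j$. To close this one must invoke what the proof of Lemma~\ref{l_recol} actually delivers, which is stronger than its statement: whenever that procedure colours a non-leaf $w_j$ with $O$, every already-coloured neighbour of $w_j$ is in $I$ (a $4$-vertex with three sons always receives $I$, and a $3$-vertex receives $O$ only when its third neighbour has colour $I$), so the $O$-component of each $O$-coloured $w_i$ is confined to $\{w_i\}$ together with at most its own son, and the four components genuinely cannot meet. In fairness, the paper's own proof of this lemma does not address the merging issue at all and is loose in exactly the same way, so your write-up is no less rigorous than the original; but the specific justification you offer for the delicate point is not valid as stated and needs to be replaced by the confinement property above.
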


\begin{proof}
Let $v$ be a $4$-vertex in $L$ with four sons $v_1$, $v_2$, $v_3$, and $v_4$.
For $i\in\{1, 2, 3, 4\}$, let $w_i$ be the neighbour of $v_i$ that is distinct from $v$.
Let $D$ be the set of the descendants of $v$. 
The graph $G - (D \cup\{v\})$ has fewer vertices than $G$, so it admits an $({\cal I},{\cal O}_3)$-partition. 
For $i \in \{1,2,3,4\}$, if $v_i$ is a leaf in $L$, then we can assume that, up to recolouring $w_i$, if $w_i$ has colour $O$, then it is unsaturated. For $i \in \{1,2,3,4\}$, if $v_i$ is a not a leaf in $L$, then $w_i$ is a $3^+$-vertex, and by Lemma~\ref{l_recol}, we can colour $w_i$ and all of its descendants such that, if $w_i$ is coloured $O$, then it is not saturated. 
We can assign colour $O$ to each of the $v_i$'s, and colour $I$ to $v$.
This leads to an $({\cal I},{\cal O}_3)$-partition of $G$, a contradiction.
\end{proof}

\section*{Discharging procedure:}
We have all of our structural lemmas, so we can move on to the discharging procedure.
We assign an initial charge of $d-\frac{5}{2}$ to every $d$-vertex. Since $\mad(G)<\frac{5}{2}$, the sum of the initial charges in the graph is negative.

We now apply the following discharging procedure. 

\textbf{ \begin{enumerate}[label= Rule \arabic*]
\item {\it \emph{Every $3^+$-vertex of $L$ gives charge $\frac{1}{2}$ to each son.}}\label{inL}
\item {\it \emph{Every $3^+$-vertex gives charge $\frac{1}{4}$ to each $2$-neighbour that is not a neighbour in $L$.}}\label{ninL}
\end{enumerate}}

Throughout the discharging procedure, no charge is created and no charge is deleted. 
Since the total charge sum is preserved, the sum of the final charges of the vertices is negative. Let us now prove that every vertex has non-negative final charge. 
This contradiction completes the proof of Theorem~\ref{main}.

\begin{lemm}
Every vertex has a non-negative final charge.
\end{lemm}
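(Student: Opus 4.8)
The plan is to perform a case analysis over the possible degrees of each vertex $v$, showing in each case that the initial charge $d(v) - \frac{5}{2}$ plus the net charge received minus the net charge given out is non-negative. I would organize the argument by the degree of $v$, treating separately the $2$-vertices (who only receive charge), the $3$-vertices, the $4$-vertices, and the general $d$-vertices for $d \ge 5$. The key accounting principle is that each rule transfers either $\frac{1}{2}$ (Rule 1, to a son in $L$) or $\frac{1}{4}$ (Rule 2, to a $2$-neighbour not linked in $L$), so I need to bound, for each vertex, both how much it gives away and how much it collects.

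First I would handle the $2$-vertices, which have initial charge $2 - \frac{5}{2} = -\frac{1}{2}$ and give no charge. Each $2$-vertex $v$ has two neighbours, and I must argue it receives at least $\frac{1}{2}$ in total. The structural content of Lemma~\ref{l_3-} and the construction of $L$ is crucial here: by Lemma~\ref{l_3-} a $2$-vertex has at least one $3^+$-neighbour, and I would check case-by-case whether $v$ lies in a $2$-chain (so each of its relevant neighbours is either its father in $L$ paying $\frac{1}{2}$, or pays via Rule 2) or not. The delicate sub-case is a $2$-vertex whose two neighbours are both $3^+$-vertices but which is not a son of either in $L$; here I must confirm Rule~\ref{ninL} fires from both sides to give $2 \cdot \frac{1}{4} = \frac{1}{2}$. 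I expect this to be the main obstacle, since it requires carefully reconciling the $L$-construction rules (which vertices become fathers/sons) with exactly which discharging rule applies to each incident edge.

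Next I would treat the $3^+$-vertices, where the initial charge is a surplus. A $3$-vertex has initial charge $\frac{1}{2}$ and, by Lemma~\ref{l_3v}, has at most one son in $L$, so it pays at most $\frac{1}{2}$ via Rule~\ref{inL}; I must then check that any additional Rule~\ref{ninL} payments are compatible, using that a $3$-vertex sending $\frac{1}{4}$ to a non-$L$ $2$-neighbour cannot simultaneously be overcommitted — again leveraging Lemma~\ref{l_3-} to limit its number of $2$-neighbours and the structure of $L$. A $4$-vertex has initial charge $\frac{3}{2}$ and by Lemma~\ref{l_4v} has at most three sons, so Rule~\ref{inL} costs at most $\frac{3}{2}$; I must verify that if it pays the full $\frac{3}{2}$ to three sons it sends nothing via Rule~\ref{ninL}, which follows from Step~2 of the $L$-construction (which absorbs the relevant $2$-neighbour into $L$). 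For a general $d$-vertex with $d \ge 5$, the initial charge $d - \frac{5}{2}$ comfortably dominates the maximal payout of $\frac{1}{2}$ per neighbour, so $d \cdot \frac{1}{2} \le d - \frac{5}{2}$ whenever $d \ge 5$, giving a clean bound with room to spare.

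In summary, the proof is a finite charge-balance verification, and the real work lies not in arithmetic but in matching each transferred unit of charge to the correct structural guarantee from Lemmas~\ref{l_3-},~\ref{l_3v}, and~\ref{l_4v} together with the father/son rules defining $L$. I would present it as: fix a vertex $v$, split on $d(v)$, and in each branch tally received-minus-given, citing the relevant lemma to cap the outflow and the $L$-structure to certify the inflow. The $2$-vertex case is where the discharging was designed to be tight (final charge exactly $0$), so that is where I would spend the most care ensuring every sub-configuration of $L$-membership is covered.
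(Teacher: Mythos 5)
Your proposal is correct and follows essentially the same route as the paper's proof: a case analysis on the degree of $v$, with $2$-vertices split according to membership in $L$ (father pays $\frac{1}{2}$ via Rule~1, versus two $3^+$-neighbours each paying $\frac{1}{4}$ via Rule~2), $3$- and $4$-vertices bounded via Lemmas~\ref{l_3v} and~\ref{l_4v} together with the observation that a remaining $2$-neighbour of a fully-paying vertex must be its father in $L$ (Step~2 of the construction), and the trivial bound $d/2 \le d - \frac{5}{2}$ for $d \ge 5$. All the key structural facts you cite are exactly the ones the paper uses.
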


\begin{proof}
Each $2$-vertex has initial charge $-\frac{1}{2}$. 
Every $2$-vertex in $L$ receives $\frac{1}{2}$ from its father by \ref{inL}; recall that every $2$-vertex in $L$ has a father in $L$ that is a $3^+$-vertex. 
Every $2$-vertex that is not in $L$ has two $3^+$-neighbours by construction of $L$, and thus receives $\frac{1}{4}$ from each of its two neighbours by \ref{ninL}. 
Thus every $2$-vertex has non-negative final charge.

Let $v$ be a $3$-vertex, which has initial charge $\frac{1}{2}$. 
By Lemma~\ref{l_3-}, $v$ has at most two $2$-neighbours.
Thus, if $v$ is not in $L$, then it gives no charge by \ref{inL} and gives $\frac{1}{4}$ at most twice by \ref{ninL}. 
Suppose $v$ is in $L$.
By Lemma~\ref{l_3v}, $v$ has at most one son in $L$.
By construction of $L$, if it has one son and another $2$-neighbour $u$, then $u$ is the father of $v$. 
Therefore, $v$ has either at most one $2$-neighbour or two $2$-neighbours where one is its father in $L$.
Consequently, it gives $\frac{1}{2}$ to its son by \ref{inL} and gives nothing by \ref{ninL}. 
In all cases, $v$ has non-negative final charge.

Let $v$ be a $4$-vertex, which has initial charge $\frac{3}{2}$.
By Lemma~\ref{l_4v}, $v$ has at most three sons in $L$. 
If $v$ has at most two sons, then it gives charge at most $2\times\frac{1}{4}+2\times\frac{1}{2} = \frac{3}{2}$ by the discharging rules. 
If $v$ has three sons and the last neighbour is not a 2-vertex, then it gives charge  $3\times\frac{1}{2} = \frac{3}{2}$ by the discharging rules. 
If $v$ has three sons in $L$ and the last neighbour $u$ is a $2$-vertex, then by construction of $L$, $u$ is the father of $v$.
%Since $v$ cannot have four sons by Lemma~\ref{l_4v}, $u$ is not the son of $v$.
%Moreover, 
%By construction, $u$ is the father of $v$, which implies that $u$ is in $L$.
% hence if $u$ is not the father of $v$, then $u$ is not in $L$ ($u$ cannot be a leaf of $L$ since $v$ is a $3^+$-vertex), and we could apply Step 2 of the construction of $L$. Thus $u$ is the father of $v$. Therefore $v$ has at most two sons in $L$, in which case 
Thus, $v$ gives charge $3\times \frac{1}{2}= \frac{3}{2}$ by \ref{inL} and gives nothing by \ref{ninL}.
In all cases, $v$ has non-negative final charge.

Each $5^+$-vertex with degree $d$ has initial charge $d - \frac{5}{2} \ge \frac{d}{2}$, and gives charge at most $\frac{d}{2}$ by the discharging rules. Therefore every $5^+$-vertex has non-negative final charge.
\end{proof}

\section{Proof of Theorem~\ref{t32_main}}\label{s_t32_main}
%Let us first recall the statement of Theorem~\ref{t32_main}.
%
%\begingroup
%\def\thetheo{\ref{t32_main}}
%\begin{theo}[recall]
%Let $k \ge 2$ be an integer. Every graph $G$ with $\mad(G) < \frac{8k}{3k+1} = \frac{8}{3}\left( 1 - \frac{1}{3k+1} \right)$ admits an $({\cal I},{\cal O}_k)$-partition.
%\end{theo}
%\addtocounter{theo}{-1}
%\endgroup

Let $G$ be a counterexample to Theorem~\ref{t32_main} with as few vertices as possible. 
%Our discharging procedure consists in giving weights to vertices so that the total sum of the weights is negative and then in moving these weights according to some rules from vertex to vertex so that all the vertices eventually have non-negative weights (due to some structural properties of $G$). That contradiction completes the proof.
The graph $G$ is connected, since otherwise at least one of its components would be a counterexample with fewer vertices than $G$. 
This further implies that every vertex of $G$ has degree at least 1. 

Given an (partial) $({\cal I},{\cal O}_k)$-partition of $G$, we say a vertex $v$ with colour $O$ is \emph{saturated} if the component in $G[O]$ containing $v$ is of order $k$. 
Let us first prove an easy lemma on the structure of $G$.

\begin{lemm} \label{l32_co}
%The graph $G$ is connected and its minimum degree is at least $2$.
Every vertex in $G$ has degree at least $2$.
\end{lemm}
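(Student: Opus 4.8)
The plan is to adapt the minimal-counterexample argument already used in Lemma~\ref{l_dge2}, since the statement is identical in form: every vertex of $G$ has degree at least $2$. I would argue by contradiction, supposing $G$ contains a vertex $v$ of degree $1$, and then delete $v$ to obtain a strictly smaller graph $G - v$.

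First I would invoke minimality of $G$: because $G - v$ has fewer vertices than $G$ and $G$ is a minimum counterexample, $G - v$ is not a counterexample, so it admits an $(\cI, \cO{k})$-partition. Fix such a partition, with colour classes $I$ and $O$. It remains to extend this partition to all of $G$ by assigning a colour to $v$ without violating either constraint.

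The extension step is the heart of the argument, though it is genuinely easy here. Let $u$ be the unique neighbour of $v$. If $u$ has colour $I$, then colouring $v$ with $O$ creates a new component in $G[O]$ that is just the single vertex $v$ (since $u \notin O$), which trivially has order $1 \le k$; so the $\cO{k}$ constraint holds, and $v$ has no neighbour in $I$, so $I$ remains independent. If instead $u$ has colour $O$, then I colour $v$ with $I$; since $v$'s only neighbour $u$ lies in $O$, the set $I \cup \{v\}$ is still independent, and $O$ is unchanged. In either case we obtain a valid $(\cI, \cO{k})$-partition of $G$, contradicting that $G$ is a counterexample. Hence no degree-$1$ vertex exists, and combined with connectivity (which forces minimum degree at least $1$), every vertex has degree at least $2$.

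I do not expect any real obstacle in this lemma: unlike the analogous statement for $\cO{3}$, where one must track the saturation notion carefully, a degree-$1$ vertex can always be absorbed because adding a leaf to the $O$-side extends a component by exactly one vertex but, when the neighbour is in $O$, one simply places the leaf in $I$ instead. The only point requiring the slightest care is observing that giving $v$ the colour opposite to that of $u$ is always safe, which relies on the fact that a single pendant vertex can never be the cause of an oversized $\cO{k}$-component on its own. The argument is essentially a verbatim restatement of the proof of Lemma~\ref{l_dge2}, with $\cO{3}$ replaced by $\cO{k}$.
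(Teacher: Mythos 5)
Your proof is correct and follows exactly the paper's own argument: by minimality, $G - v$ admits an $(\cI,\cO{k})$-partition, and assigning $v$ the colour distinct from that of its unique neighbour extends it to $G$, a contradiction. The elaboration of why each colour choice is safe (a singleton $O$-component has order $1 \le k$; a leaf whose neighbour is in $O$ can join $I$) is exactly the routine verification the paper leaves implicit.
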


\begin{proof}
%If $G$ is not connected, then one of its components is a smaller counter-example to Theorem \ref{t32_main}, contradicting the minimality of $G$.
%
%If $G$ contains a $0$-vertex, then $G$ is reduced to that vertex and admits an $({\cal I},{\cal O}_k)$-partition.
%
Let $v$ be a vertex of degree 1 in $G$.
Since the graph $G - v$ has fewer vertices than $G$, it admits an $({\cal I},{\cal O}_k)$-partition, which can be extended to an $({\cal I},{\cal O}_k)$-partition of $G$ by giving to $v$ the colour distinct from that of its neighbour. 
This contradicts that $G$ is a counterexample. 
%
%Let $v$ be a $1$-vertex of $G$. The graph $G-v$ has fewer vertices than $G$, therefore it admits an $({\cal I},{\cal O}_k)$-partition. If the neighbour of $v$ is coloured $I$, then we put $v$ in $O$, and otherwise we put $v$ in $I$. 
%This leads to an $({\cal I},{\cal O}_k)$-partition of $G$, which is a contradiction.
\end{proof}

The key of our proof is the use of two procedures, Procedure~\ref{p_gen} (the main procedure) that calls Procedure~\ref{p_sub} (the  secondary inductive procedure). 
Procedure~\ref{p_gen} starts with a colouring that is almost an $({\cal I},{\cal O}_k)$-partition of $G$. From that colouring, we try to obtain an $({\cal I},{\cal O}_k)$-partition of $G$, leading to some structural properties of $G$. 
The revealed structure will enable us to apply discharging on $G$, concluding that a counterexample to the theorem cannot exist. 

For $i \in \{0,1,2\}$, let $V^i$ (resp $I^i$, $O^i$) be the set of vertices of $V(G)$ (resp. $I$, $O$) marked $i$ times.
During the procedures, no vertex is marked in the beginning and each vertex of $G$ will be {marked} at most twice. 
Some vertices may be marked and then unmarked. 
The general principle is that a vertex $v$ is marked once when we first encounter $v$, and a second time when we actually treat $v$. 
In the end, each vertex will be either marked twice or unmarked; a vertex is marked once only temporarily.
The vertices marked twice will be able to give weight to their unmarked neighbours. 
For the rest, the weight will move between vertices marked twice. 

In Procedures \ref{p_sub} and \ref{p_gen}, we will define the following notions, which will be useful to describe our discharging procedure. 
For some vertices of $G$, we will define the \emph{cluster set} $S(v)$ of a vertex $v$ to be a set of vertices such that $v\notin S(v)$, $S(v) \cup \{v\}$ induces a connected graph. 
Given a vertex $v$, an element of $S(v)$ is a \emph{subordinate} of $v$. 
We say that $v$ \emph{supervises} the elements of $S(v)$. 
We will also define a cluster set with no supervisor. 
For some vertices of $G$, a \emph{giving edge}, denoted $e(v)$, is a fixed edge incident to $v$. 
Finally, some edges of $G$ will become \emph{neutral edges}. 
Some ideas to keep in mind are the following:
\begin{itemize}
\item We will specify the discharging procedure in each cluster set.
\item Only vertices marked twice will give weight.
\item A vertex marked twice will give less weight to its adjacent subordinates than to its unmarked neighbours.
\item A giving edge will link two vertices marked twice. 
A giving edge $e(v) = vw$ of $v$ means that $w$ will be able to give weight to $v$ as if $v$ is unmarked (at some point of the procedures, $w$ is marked twice and $v$ is not marked yet, and thus $w$ could give weight to $v$).
\item Neutral edges are edges between vertices that do not give weight to each other.
\end{itemize}

Procedures \ref{p_sub} and \ref{p_gen} try to recolour some of the vertices.
When we recolour some vertices, we remember the colour they had, so that we can revert all the recolourings that have been done since the start of the procedure.
When we say that we revert all the recolourings that have been done since the start of the procedure, or that we reset the colourings, we mean since the start of the call of the procedure that we are currently treating. This is important since the procedure calls itself recursively. For example, suppose we make a call $A$ to the procedure, and that call makes a call $B$ to the procedure. If we say in the code of $A$ before the call to $B$ that we reverse all the recolourings that have been done, then we undo all the recolourings done since the start of call $A$. If we say this in the code of $B$, then we only reverse what has been recoloured since the start of call $B$. Lastly, if we say this in the code of $A$ after the call to $B$, then we reverse all the changes that were done since the beginning of call $A$, including those that were done in call $B$. 
Similarly, we will also reverse all the markings that have been done since the beginning of the procedure. When we do so, we also reset the cluster sets and the giving and neutral edges as they were at the start of the procedure.

\medskip

Procedure \ref{p_sub} assumes that all vertices of $G$ are coloured with either $O$ or $I$, such that no two adjacent vertices have colour $I$; no assumption is made on vertices with colour $O$. 
Considering an $O^2$-vertex $v$, the aim of Procedure \ref{p_sub} is either to recolour $v$ with colour $I$, or to determine a cluster set or a neutral edge for $v$.
The sets $W$, $C$, and $C'$ in Procedure \ref{p_sub} are not sets defined for the whole procedure, but local sets defined at each call of the procedure.
Let us now present Procedure \ref{p_sub} that is applied to an $O^2$-vertex $v$:

\begin{proc}\label{p_sub}Let $W$ be the set of $I$-neighbours of $v$.
\begin{enumerate}
\item  If a vertex $w$ of $W$ is marked twice, then we set $S(v) = \emptyset$, we set $vw$ as a neutral edge, and we end the procedure. \label{step1}
\end{enumerate}

Now every vertex of $W$ is marked at most once. 
\begin{enumerate}[resume]
\item 
If a vertex $w$ of $W$ has an $O^2$-neighbour $u$ distinct from $v$, then we set $S(v) = \{w\}$ and $e(w) = wu$, we mark $w$ twice, and we end the procedure.\label{step2}
\item 
If a vertex $w$ of $W$ has an $O^1$-neighbour $u$, then we set $S(v) = S(u) = \{w\}$, we mark $u$ a second time, we mark $w$ twice, and we end the procedure. \label{step2bis}
\item 
All vertices of $W$ have only $O^0$-neighbours. We recolour $v$ with $I$ and every vertex of $W$ with $O^0$. Let $X$ be the set of the vertices of $W$ that are in a component of $G[O]$ with at least $k+1$ vertices. Let $C$ be the set of vertices of the components of $G[O^0]$ containing (at least) a vertex of $X$.

While there exists a vertex in $C \setminus X$ that has all its neighbours in $O$, put that vertex in $I$, and redefine $X$ and $C$ as before (but with respect with the new colouring of the graph).

Mark all the $3^+$-vertices of $C \setminus X$ once, and mark the vertices of $X$ twice. 

Now, while there is a vertex $c \in C\setminus X$ marked once, do the following:
	\begin{enumerate}
	\item Mark $c$ a second time and apply Procedure~\ref{p_sub} to the vertex $c$. \label{step3a}

	\item If $c$ is in $I$, then we undo all the markings that have been done since the start of the procedure, reset the cluster sets and giving and neutral edges as they were at the start of the procedure,  redefine $X$ and $C$ as before (but with respect to the new colouring of the graph), mark all the $3^+$-vertices of $C \setminus X$ once, and mark all the elements of $X$ twice. \label{step3b}
	
	\end{enumerate} \label{step3}
	
\item If $X$ is empty at the beginning of 4. or becomes empty during 4.(b), then we stop the procedure (in that case, we succeed to recolour $v$ with $I$ and managed all the resulting conflicts).

Now, set $S(v) = \emptyset$.

For every component $C'$ of $G[C]$ with at least $k+1$ vertices, we add to $S(v)$ the $3^+$-vertices of $C'$, plus the $2$-vertices of $C' \cap X$.

For every component $C'$ of $G[C]$ with at most $k$ vertices such that a $3^+$-vertex $x$ in $C'$ is adjacent to a vertex $y$ that was already marked twice at the start of the procedure, we add to $S(v)$ the $3^+$-vertices of $C'$, plus the $2$-vertices of $C' \cap X$, and set $e(x) = xy$ for one such $x$.

We undo all the recolourings that have been done since the beginning of the procedure. \label{step4}
\end{enumerate}
\end{proc}

We now prove a series of properties of Procedure~\ref{p_sub}.

\begin{lemm} \label{sub_unmark}
In a given call to Procedure~\ref{p_sub}, the only vertices that become unmarked are vertices that have been marked during the procedure.
\end{lemm}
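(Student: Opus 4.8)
The plan is to pinpoint exactly where vertices get unmarked and then show that each such event only undoes markings performed within the current call. Scanning Procedure~\ref{p_sub}, the only operation that ever \emph{removes} a mark is the undo in Step~\ref{step3b}, which ``undoes all the markings that have been done since the start of the procedure''. Every other operation either only adds marks (the markings set up in the earlier steps, the second marking of $c$ in Step~\ref{step3a}, and the re-marking performed after the undo in Step~\ref{step3b}) or does not touch marks at all (Step~\ref{step4} reverts only the recolourings). Hence it suffices to understand the effect of this single undo instruction. By the convention fixed just before the procedure, ``since the start of the procedure'' is read relative to the current call: the undo restores the marking state to exactly what it was when that call began. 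Consequently, when it is executed inside a call $A$, it can only strip marks that were placed during $A$, and it leaves untouched every mark a vertex already carried when $A$ started; this already settles the lemma for the marks removed by $A$'s own copy of Step~\ref{step3b}.

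To handle the recursion, I would recast the needed property as an invariant and prove it by induction on the (finite) tree of recursive calls: for every call $C$, throughout the execution of $C$ the number of times any vertex is marked never drops below the value it had at the start of $C$. Fixing a call $A$, the marking state changes only by (i) adding marks, which is harmless; (ii) executing $A$'s own undo, which resets to the start-of-$A$ state and hence to a state that is componentwise at least the start-of-$A$ state; or (iii) running a subcall $B$ launched in Step~\ref{step3a}. For case (iii) I would apply the inductive hypothesis to $B$, together with the observation that, since before entering $B$ the call $A$ has only added marks or reset to its start state, the start-of-$B$ state is componentwise at least the start-of-$A$ state; combining the two shows the count stays at least its start-of-$A$ value throughout $B$. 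The invariant then yields the lemma: if a vertex becomes unmarked at some time $t$ during $A$, its count at $t$ is still at least its start-of-$A$ value, so its count must have strictly exceeded that value just before $t$, which is possible only if the vertex had been marked at some earlier moment during $A$.

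The main obstacle is the bookkeeping around nested undo operations: one must be confident that a subcall's internal undo and the enclosing call's later undo each respect their own call boundary, so that no undo ever removes a mark predating the call performing it. Making the ``restore to the state at the start of the current call'' semantics precise and threading it through the induction on the call tree is the only delicate point; once the invariant above is in place, the conclusion is immediate. (Here I am assuming the procedure terminates, so that the recursion tree is finite and the induction is well founded.)
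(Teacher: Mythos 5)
Your proposal is correct and takes essentially the same approach as the paper: the paper's (much terser) proof likewise observes that the only unmarking operations are the resets, and that each reset restores the marking state to exactly what it was at the start of the call performing it, so no vertex marked at the start of a call can lose its marks during that call. Your explicit induction over the call tree just formalizes the recursion bookkeeping that the paper leaves implicit; note also that the termination you assume (proved later in the paper, in Lemma~\ref{sub_once}) can be avoided by inducting over the sequence of marking events rather than over a finite call tree.
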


\begin{proof}
Every time we undo the markings that have been done in the procedure, the marking state resets to what it was at the start of the procedure. Since we never unmark a vertex aside from those resets, no vertex that was marked at least once at the start of the procedure becomes unmarked in the procedure.
\end{proof}

By a \emph{single call}, we mean a call without considering the inner recursive calls.
When we say that a change (of the colouring or the markings, for example) happens \emph{directly}, we mean it does not happen during a reset of the colouring or the markings of the vertices.
The following lemma is straightforward.

\begin{lemm} \label{sub_OI}
In a single call to Procedure~\ref{p_sub} applied to vertex $v$, the only time a vertex is recoloured from $O$ to $I$ is by undoing some recolourings (end of Step~\ref{step4}), except for $v$ and some vertices in $O^0$ with no neighbours in $I$ during Step~\ref{step3}. 
\end{lemm}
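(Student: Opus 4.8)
The plan is to prove the statement by a direct inspection of the code of Procedure~\ref{p_sub}, going through its steps in order and recording, for every instruction that changes a colour, which vertices are recoloured and in which direction. The only point requiring care is the bookkeeping convention behind the phrase ``single call'': we must ignore every recolouring performed inside the recursive invocation of Step~\ref{step3a}, while still being allowed to observe its net effect (for instance whether the vertex $c$ has ended up in $I$). I would therefore treat that recursive call as a black box and track only the colour changes written explicitly in the body of the call currently under consideration.

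First I would dispatch the terminating steps. If the procedure stops in Step~\ref{step1}, Step~\ref{step2}, or Step~\ref{step2bis}, then it only sets cluster sets, giving edges, neutral edges, and markings, and recolours nothing; so there is nothing to verify for these steps, and every recolouring of a single call happens in Step~\ref{step3} or Step~\ref{step4}. In the body of Step~\ref{step3} the explicit recolourings are two. The line recolouring $v$ with $I$ moves $v$ from $O$ to $I$ (this is the exception for $v$) and simultaneously recolours every vertex of $W$ with $O^0$, which moves those vertices from $I$ to $O$ and is therefore not of the forbidden type. The inner loop that, as long as some vertex of $C\setminus X$ has all its neighbours in $O$, puts that vertex in $I$ does perform $O\to I$ recolourings; but $C$ is by definition a union of components of $G[O^0]$, so each such vertex lies in $O^0$, and having all its neighbours in $O$ it has no neighbour in $I$, so these are exactly the vertices permitted by the exception clause.

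It remains to treat the recursive part of Step~\ref{step3} and then Step~\ref{step4}. Step~\ref{step3a} merely marks $c$ and calls the procedure on $c$, so any recolouring it causes belongs to an inner call and is excluded from the single call by definition. Step~\ref{step3b} resets markings, cluster sets, and giving and neutral edges, and redefines $X$ and $C$; the only recolourings it can trigger, through that redefinition, are further runs of the same loop above, which again recolour only $O^0$-vertices having no neighbour in $I$. Finally, the single colour-changing instruction of Step~\ref{step4} is the command to undo all recolourings made since the beginning of the call, which is precisely the ``undoing some recolourings (end of Step~\ref{step4})'' named in the statement. The main check, and the only genuinely delicate point, is to confirm that no remaining $O\to I$ recolouring escapes these cases: in particular the $I\to O$ recolouring of the vertices of $W$ done in Step~\ref{step3} is reversed only by this final undoing, which is exactly why such a reversal is covered by the main clause rather than by the exception. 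Assembling these observations yields the statement.
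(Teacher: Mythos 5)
Your proposal is correct: the paper itself offers no proof of this lemma (it is introduced with ``The following lemma is straightforward.''), and your direct inspection of the procedure's steps is precisely the verification the authors intended the reader to perform. Your case analysis is exhaustive --- Steps~\ref{step1}, \ref{step2}, \ref{step2bis} recolour nothing; in Step~\ref{step3} the only $O\to I$ recolourings are $v$ itself and vertices of $C\setminus X\subseteq O^0$ with all neighbours in $O$ (hence no neighbour in $I$), including any re-runs of that loop triggered by Step~\ref{step3b}; recursive calls are excluded by the paper's definition of a single call; and the only remaining recolouring is the final undo in Step~\ref{step4} --- so nothing is missing.
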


\begin{lemm} \label{sub_dis}
At the end of a call to Procedure~\ref{p_sub} applied to vertex $v$, either all the vertices have the same colour as before the call, or they have all the same markings, subordinates, giving edges, and neutral edges as before the call. Moreover, at the end of Procedure~\ref{p_sub}, $v$ is in $O$ if and only if all the vertices have the same colour as before the call.
\end{lemm}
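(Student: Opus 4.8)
The plan is to argue by induction, taking as inductive hypothesis that the lemma already holds for every recursive call triggered inside the current call at Step~\ref{step3a} (the recursion being well-founded). The decisive preliminary observation, read off from the bookkeeping conventions fixed before the procedure, is that both ``undoing all recolourings since the start of the procedure'' and ``resetting the markings, cluster sets, and giving and neutral edges'' mean restoring those data to exactly the snapshot they had at the start of the current call; in particular, any recolouring performed and retained inside an inner recursive call is still a recolouring performed since the start of the current call, and is therefore rolled back by the outer undo. Granting this, I would prove the lemma by a case analysis on the step at which the call terminates, keeping Lemmas~\ref{sub_unmark} and~\ref{sub_OI} at hand.

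The exits at Steps~\ref{step1}, \ref{step2}, and~\ref{step2bis} are immediate: in each of them the procedure performs no recolouring at all and makes no recursive call --- it only records a neutral edge, or marks one or two vertices twice and fixes the corresponding cluster sets and giving edge --- so the colouring is untouched and the first alternative holds. Since $v$ is never recoloured on these branches, it ends in $O$, matching the ``moreover'' clause.

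The content lies in the branch reaching Step~\ref{step3}, where $v$ is first recoloured $I$ and the two while-loops are run (possibly with recursive calls) before control passes to Step~\ref{step4}. I split on how Step~\ref{step4} ends. If $X$ is nonempty, the final act of the call is to undo all recolourings made since its start; by the snapshot semantics above this restores the call-start colouring on every vertex, so the colouring is unchanged, and $v$ --- recoloured to $I$ only at Step~\ref{step3} --- reverts to $O$, giving the first alternative together with the correct side of the equivalence. If instead $X$ is, or becomes, empty, we halt with $v$ left in $I$, so the colouring is genuinely changed and I must verify that the markings, subordinates, and giving and neutral edges agree with their call-start values. If $X$ was already empty at the start of Step~\ref{step3}, then $C$ is empty, nothing is marked and no recursion occurs, so these data are untouched; if $X$ becomes empty during Step~\ref{step3b}, then the reset performed there has just restored all of them to the call-start snapshot, and the subsequent re-marking marks nothing because $X$, hence $C$, is empty. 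In both situations the second alternative holds, and as $v$ is left in $I$ precisely here, the ``moreover'' equivalence follows.

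The main obstacle is bookkeeping rather than combinatorics: one must be scrupulous that each revert or reset is a snapshot restore relative to the current call, so that recolourings and markings produced and kept by inner calls are correctly absorbed by the outer revert, and that no datum set before the call is disturbed (here Lemma~\ref{sub_unmark} is what forbids disturbing pre-existing markings). The induction enters exactly at this junction: the ``moreover'' half of the inductive hypothesis, applied to the subcall on $c$ in Step~\ref{step3a}, is what makes the test ``$c$ is in $I$'' in Step~\ref{step3b} a faithful indicator of whether that subcall altered the colouring, thereby legitimizing the dichotomy by which the while-loop is analysed and the two exit cases of Step~\ref{step4} are reached.
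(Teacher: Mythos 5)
Your proof is correct and takes essentially the same route as the paper's: a case analysis on where the call exits (Steps~\ref{step1}, \ref{step2}, \ref{step2bis} touch no colours and leave $v$ in $O$; the end of Step~\ref{step4} with $X$ nonempty undoes every recolouring, including those retained by inner recursive calls, so $v$ returns to $O$; the empty-$X$ exits leave $v$ in $I$ while the markings, cluster sets, and giving/neutral edges are either never touched or restored by the reset in Step~\ref{step3b}), with everything resting on the snapshot semantics of the resets. The only cosmetic difference is your inductive wrapper around the recursive calls: since each reset restores the data relative to the current call regardless of what inner calls did, the inductive hypothesis is not actually needed, and the paper's proof indeed dispenses with it.
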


\begin{proof}
In Steps~\ref{step1}, \ref{step2}, and~\ref{step2bis}, there are no recursive calls and no recolourings, therefore if the procedure stops in one of those, then all the vertices have the same colour as before the call. In those cases, the vertex $v$ is still in $O$.
If Step~\ref{step3} is reached, then Step~\ref{step4} is also reached, and either all the recolourings are undone (end of Step~\ref{step4}), or all the markings, subordinates, and giving edges are reset (Step \ref{step3}.b) or unchanged (the case where $X$ is empty). As $v$ is given colour $I$ in Step~\ref{step3}, if the colourings are not undone, then $v$ is in $I$. If all the recolourings are undone, then $v$ is back in $O$. This ends the proof of the lemma.
\end{proof}

In the following, we will generally ignore what was done then undone. In particular, by Lemma~\ref{sub_dis}, we can assume that every call to Procedure~\ref{p_sub} does not change either the colouring or the markings, subordinates, giving edges, and neutral edges of the vertices.

The following two lemmas ensure that the procedure keeps the colouring at least as good as it was.

\begin{lemm} \label{sub_no2I}
If no two vertices of $I$ are adjacent before a call to Procedure~\ref{p_sub}, then no two vertices of $I$ are ever adjacent during the call.
\end{lemm}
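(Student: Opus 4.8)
The plan is to prove the invariant \emph{no two vertices of $I$ are adjacent} by induction on the elementary recolouring steps carried out during the call, where I flatten the recursion so that each recursive call to Procedure~\ref{p_sub} in Step~\ref{step3a} simply contributes its own sequence of elementary steps to the trace of the outer call. Concretely, I would show that after each individual recolouring of a single vertex, performed at any point during the (possibly nested) execution, no two vertices of $I$ are adjacent. The base case is the state at the very start of the call, where the property holds by hypothesis. The recolourings within a single call occur only inside Step~\ref{step3} (where $v$, the vertices of $W$, and the while-loop vertices are recoloured) and at the end of Step~\ref{step4} (where recolourings are undone), since Steps~\ref{step1}, \ref{step2}, and~\ref{step2bis} perform no recolouring; moreover Lemma~\ref{sub_OI} pins down exactly which vertices can move from $O$ to $I$. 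So it suffices to inspect a short list of elementary operations.

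There are three kinds of steps to check. First, recolouring a vertex from $I$ to $O$ can never create a pair of adjacent $I$-vertices, since it only removes a vertex from $I$; this covers the recolouring of the vertices of $W$ in Step~\ref{step3}. Second, by Lemma~\ref{sub_OI} a vertex is recoloured to $I$ only in two situations: colouring $v$ with $I$, and colouring an $O^0$-vertex whose neighbours all lie in $O$ during the while-loop of Step~\ref{step3}. In the latter case the recoloured vertex has, by the guard of the loop, all of its neighbours in $O$, so no conflict is created. For $v$ itself, the key point is that $W$ is defined to be the set of \emph{all} $I$-neighbours of $v$, and between the moment $W$ is fixed and the moment $v$ is recoloured no vertex enters $I$ adjacent to $v$: Steps~\ref{step1}, \ref{step2}, and~\ref{step2bis} perform no recolouring, and the recolouring of $W$ in Step~\ref{step3} only moves vertices out of $I$. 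Ordering the atomic operations so that all of $W$ is recoloured to $O$ before $v$ is recoloured to $I$, I conclude that at the instant $v$ becomes $I$ it has no $I$-neighbour, so again no conflict arises.

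Third, the undoing operations at the end of Step~\ref{step4}, and the reset in Step~\ref{step3b} (which changes only markings and leaves the colouring unchanged, so is immediate there), revert the colouring to a strictly earlier state of the trace; by the induction hypothesis that earlier state satisfies the invariant, hence so does the reverted state. For the recursive calls made in Step~\ref{step3a}, I would invoke the induction hypothesis: each such call is applied to an $O^2$-vertex $c$ at a moment when, by the steps already verified, no two $I$-vertices are adjacent, so the precondition of the lemma holds and the call's own (shorter) execution preserves the invariant. Composing these observations over the whole trace yields the lemma.

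The main obstacle is the bookkeeping around the simultaneous recolouring of $v$ and $W$, together with the interplay between the recursive calls and the resets: one must ensure that the precondition of the lemma is re-established immediately before every recursive call to Procedure~\ref{p_sub} (so the induction hypothesis applies), and that no recolouring performed deep inside the recursion, or later undone by a reset, sneaks an $I$-neighbour next to $v$. Running the induction on the flattened sequence of atomic recolourings, rather than on the recursion depth alone, is precisely what lets all of these concerns be discharged uniformly.
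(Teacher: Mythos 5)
Your proof is correct and follows essentially the same route as the paper's: both hinge on Lemma~\ref{sub_OI} to reduce the problem to direct $O$-to-$I$ recolourings, which are either the call's own vertex $v$ (safe because its $I$-neighbours $W$ are simultaneously moved to $O$, with nothing recoloured between the definition of $W$ and Step~\ref{step3}) or while-loop vertices whose neighbours all lie in $O$, and both dispose of resets by noting they merely restore an earlier, valid colouring. The only difference is presentational: the paper argues by contradiction on the first moment two adjacent $I$-vertices appear, whereas you run a forward induction over the flattened trace of atomic recolourings.
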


\begin{proof}
By contradiction, let $v$ and $w$ be two adjacent vertices that are not both in $I$ and that become adjacent $I$-vertices during a call to Procedure~\ref{p_sub}.
Let us consider the first time $v$ and $w$ are both in $I$. This happens after one of $v$ and $w$, say $v$, was recoloured from $O$ to $I$.
That recolouring cannot be the result of a reset of the colourings, since $v$ and $w$ were never in $I$ together before. Thus $v$ was recoloured to $I$ while its neighbour $w$ was in $I$. By Lemma~\ref{sub_OI}, since $v$ has a neighbour in $I$, this implies that $v$ was recoloured as the vertex for which some call $B$ to Procedure~\ref{p_sub} was applied. In call $B$, $v$ was recoloured from $O$ to $I$ in Step~\ref{step3}, and $w$, that was in $W$ at this point as an $I$-neighbour of $v$, was recoloured from $I$ to $O$ at the same time. Therefore $v$ and $w$ were never in $I$ at the same time.
\end{proof}

\begin{lemm} \label{sub_col}
After a call to Procedure~\ref{p_sub}, the vertices of all the components of $G[O]$ with more than $k$ vertices were in $O$ at the start of the call.
\end{lemm}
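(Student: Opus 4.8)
The plan is to prove the statement by analysing how the call terminates, and then to set up an induction on the recursion so that the corresponding statement may be assumed for every inner call to Procedure~\ref{p_sub}.

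First I would reduce to a single case. By Lemma~\ref{sub_dis}, a call to Procedure~\ref{p_sub} either leaves the colouring of every vertex unchanged, or changes it, in which case $v$ ends in $I$; and the latter happens precisely when the procedure stops because $X$ became empty in Step~\ref{step4} (in every other termination, including the one where a cluster set is defined, the recolourings are undone at the end of Step~\ref{step4}, and in Steps~\ref{step1}, \ref{step2}, and~\ref{step2bis} no recolouring is performed at all). When the colouring is unchanged, $G[O]$ is exactly what it was at the start of the call and the statement is immediate. So the whole content of the lemma is the \emph{success} case, where the recolouring of $v$ to $I$ is kept and $X$ is empty at the moment the procedure stops.

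Next I would identify the \emph{newly-$O$} vertices, that is, those coloured $I$ at the start of the call and coloured $O$ at the end. Since by Lemma~\ref{sub_no2I} two $I$-vertices are never adjacent, a vertex can only be turned from $I$ to $O$ as an $I$-neighbour of a vertex that is being recoloured to $I$; conversely, by Lemma~\ref{sub_OI} the recolourings from $O$ to $I$ inside a single call are limited to $v$ and to $O^0$-vertices all of whose neighbours are in $O$. The latter recolourings create no newly-$O$ vertex and can only split or shrink components of $G[O]$, so the newly-$O$ vertices produced directly by the top-level call are exactly the vertices of $W$. Every further newly-$O$ vertex is produced inside a successful inner call (one that ends with its target $c$ in $I$), as an $I$-neighbour of that $c$.

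The heart of the argument is then to show that at termination every newly-$O$ vertex lies in a component of $G[O]$ of order at most $k$; equivalently, that a component of the final $G[O]$ of order greater than $k$ contains only vertices that were already in $O$ at the start. For the vertices of $W$ this is exactly the stopping condition, namely that $X$, the set of vertices of $W$ lying in a component of $G[O]$ of order at least $k+1$, is empty. For the newly-$O$ vertices created by the inner calls I would invoke the induction hypothesis, that Lemma~\ref{sub_col} already holds for every inner call: immediately after a successful inner call on $c$, no $O$-vertex newly created by that call lies in a large component of $G[O]$. The delicate point, and the one I expect to be the main obstacle, is to reconcile these two facts with the redefinitions of $X$ and $C$ in Step~\ref{step3}: after each successful inner call the colouring only accumulates vertices moved into $I$ (hence out of $O$), so components of $G[O]$ only shrink or split, while $X$ and $C$ are recomputed to keep tracking precisely the components that remain too large. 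One must argue that this monotonicity, together with the fact that the procedure stops immediately once $X$ is empty (so that the final colouring coincides with the colouring just after the last successful inner call), prevents any newly-$O$ vertex --- whether from $W$ or from an earlier inner call --- from being merged into a component of order greater than $k$. Carrying out this bookkeeping carefully, using Lemma~\ref{sub_unmark} to control which vertices are marked and unmarked across the resets, is the technical core of the proof.
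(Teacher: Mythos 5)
Your proposal follows the paper's proof almost step for step: induction over the recursive calls, reduction via Lemma~\ref{sub_dis} to the case where $v$ ends in $I$ (so that the final colouring is the one in force at the moment $X$ is found empty), the observation that $X=\emptyset$ means no vertex of $W$ lies in a component of $G[O]$ with more than $k$ vertices, and the fact that the only direct recolourings from $I$ to $O$ in a single call are the vertices of $W$ at the start of Step~\ref{step3}. The paper's proof consists of exactly these ingredients and is no more explicit than you are about how they combine.

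The one place where you go beyond the paper's wording is also the one place where you assert something false, and since you declare it the ``technical core'' it needs correcting. It is not true that ``after each successful inner call the colouring only accumulates vertices moved into $I$'', nor that ``components of $G[O]$ only shrink or split'': a successful inner call applied to a vertex $c$ recolours $c$ from $O$ to $I$, but it simultaneously recolours the $I$-neighbours of $c$ (the set $W$ of that inner call) from $I$ to $O$, so components of $G[O]$ can very well gain vertices between two checkpoints. The monotonicity that actually holds is weaker and comes from the induction hypothesis rather than from the colouring dynamics: if $K$ is a component of $G[O]$ with more than $k$ vertices just after an inner call $B$, then by the hypothesis all vertices of $K$ were in $O$ at the start of $B$, hence $K$ is contained in a component of $G[O]$ at the start of $B$, which is then also large. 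Chaining this backwards through the successive inner calls shows that every vertex of a large component $K$ of the final colouring was already in $O$ just after $W$ was recoloured, i.e.\ lies in $W$ or was in $O$ at the start of the call; since $X=\emptyset$ at the moment the procedure stops and $K$ is a large component of the final colouring, $K$ contains no vertex of $W$, and the lemma follows. With this substitution your outline closes, and it is the argument the paper intends.
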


\begin{proof}
Consider a call to Procedure~\ref{p_sub} for which the lemma is not true. We assume by induction that the recursive calls verify the lemma.
By Lemma~\ref{sub_dis}, we can assume that we are in a call for a vertex $v$ that is in $I$ at the end of the call (otherwise the colouring is the same as before the call). This implies that in Step~\ref{step4}, the set $X$ was empty, and thus all the elements of $W$ were not in components of $G[O]$ with more than $k$ vertices. The only time when vertices are put from $I$ to $O$ in Procedure~\ref{p_sub} is when we recolour the vertices of $W$ in the beginning of Step~\ref{step3}. This proves the lemma.
\end{proof}

The next two lemmas prove that Procedure~\ref{p_sub} terminates.

\begin{lemm} \label{sub_Cdec}
In Step~\ref{step3} of Procedure~\ref{p_sub}, every time the set $C$ changes, it ends up with fewer vertices than before.
\end{lemm}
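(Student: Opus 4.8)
The plan is to pinpoint the two places in Step~\ref{step3} at which $C$ is recomputed and to show that each recomputation removes at least one vertex from $C$ and adds none; this yields the strict decrease. The set $C$ is redefined only (i) inside the loop beginning ``While there exists a vertex in $C\setminus X$ that has all its neighbours in $O$'', and (ii) in Step~\ref{step3b}. A useful preliminary observation is that at every instant where $C$ is (re)defined the markings coincide with those at the start of the current call: in case (i) nothing has been marked yet, while in case (ii) Step~\ref{step3b} explicitly resets all markings before redefining $X$ and $C$. Consequently the set $V^0$ of unmarked vertices, and with it the rule $O^0 = O\cap V^0$, is the same immediately before and after each redefinition, so the only way $O^0$ can change between the old and the new $C$ is through recolourings.

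First I would dispose of case (i). A vertex $u\in C\setminus X$ all of whose neighbours lie in $O$ is recoloured from $O$ to $I$; since $u$ is unmarked, this merely replaces $O$ by $O\setminus\{u\}$ and $O^0$ by $O^0\setminus\{u\}$. Deleting a vertex from $O$ can only split $G[O]$-components into smaller pieces, so every component of the new $G[O]$ on more than $k$ vertices is contained in an old such component; hence the new $X$ is contained in the old $X$. A one-line argument then shows the new $C$ is contained in the old $C$: any new $O^0$-component meeting the new $X$ is a connected subgraph of the old $G[O^0]$ meeting the old $X$, hence was already part of the old $C$. Since $u$ has left $O^0$, the new $C$ is contained in the old $C\setminus\{u\}$ and is strictly smaller.

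The delicate case, and the step I expect to be the main obstacle, is (ii). When Step~\ref{step3b} fires, the recursive call to Procedure~\ref{p_sub} on $c$ has returned with $c$ in $I$; any earlier call in the loop of Step~\ref{step3a} returned with its vertex in $O$ and so, by Lemma~\ref{sub_dis}, left the colouring untouched, so the colouring just before the current redefinition differs from the one defining the old $C$ only through this single call on $c$. That call, however, may have recoloured many vertices in both directions---in particular it may have moved $I$-neighbours of $c$ into $O$---so one cannot simply claim that $O$ shrank. The tool that rescues the argument is Lemma~\ref{sub_col}: after the call, every component of $G[O]$ with more than $k$ vertices consists entirely of vertices that already lay in $O$ before the call. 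From this I would deduce, exactly as in case (i), that each such large component after the call sits inside a large component before the call (it lies in the old $O$ and its internal edges survive), so the new $X$ is again contained in the old $X$. Finally, for any $z$ in the new $C$, its new $O^0$-component lies inside a large $G[O]$-component and hence inside the old $O$; being unmarked, $z$ together with its whole $O^0$-component lies in the old $O^0$, stays connected there, and meets the old $X$, so $z$ belonged to the old $C$. As $c$ has left $O$, the new $C$ is contained in the old $C\setminus\{c\}$, hence strictly smaller, and the two cases together prove the claim.
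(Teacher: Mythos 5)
Your proof is correct and follows essentially the same route as the paper's: the vertex whose recolouring triggered the change ($u$, resp.\ $c$) is removed from $C$, and no vertex is added because Lemma~\ref{sub_col} forces the new $X$ and each new large component of $G[O]$ inside the old ones, while the fact that the markings at every (re)definition of $C$ coincide with those at the start of the call lets you pass from $O$ to $O^0$. The only difference is that you also handle the redefinitions of $C$ occurring in the initial while-loop of Step~\ref{step3}, a case the paper's proof (which only considers changes following a recursive call, i.e.\ Step~\ref{step3b}) leaves implicit, so your version is slightly more complete.
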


\begin{proof}
Let $A$ be the current call to Procedure~\ref{p_sub}.
Assume that the set $C$ of call $A$ has changed after a call $A'$ to Procedure~\ref{p_sub} for some vertex $c \in C$. By the definition of $C$ in Step~\ref{step3b} and by Lemma~\ref{sub_dis}, this implies that $c$ has changed colour from $O$ to $I$. The vertex $c$ is no longer in $G[O^0]$, thus it is removed from $C$.

Let us prove that no new vertex was added to $C$. Consider a vertex that is in $C$ after a change of $C$ and was not in $C$ the previous time $C$ was defined. Note that the state of the markings just before the change of $C$ is the same as at the start of call $A$, either because they did not change or because they were reset. Also note that the only time a vertex may have been recoloured from $I$ to $O$ is in a recursive call called by $A$.
As the elements of $X$ correspond to vertices of $W$ that are in components of $C$ with at least $k+1$ vertices, by Lemma~\ref{sub_col} no new vertex is put in $X$. Furthermore, the elements of $X$ that remained in $X$ do not have any new vertex in their component of $G[O]$, and thus, since the markings are the same, no new vertex in their component of $G[O^0]$ either. Therefore $C$ does not have any new vertex.
This ends the proof of the lemma.
\end{proof}

\begin{lemm} \label{sub_once}
Procedure~\ref{p_sub} terminates, and at the end of the procedure, every vertex that is marked once was marked once at the start of the procedure.
\end{lemm}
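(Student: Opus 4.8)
The plan is to establish both assertions by strong induction on the quantity $m = |V(G)| - |V^2|$, the number of vertices that are not marked twice at the instant a call begins. Since a call to Procedure~\ref{p_sub} spawns further calls only in the while loop of Step~\ref{step3} (applied to a vertex $c\in C\setminus X$), it suffices to prove two things: that each such recursive call begins with strictly smaller measure, so the induction hypothesis applies to it; and that, granting that every recursive call returns, the while loop itself performs only finitely many iterations. I would prove termination this way first, and then the marking invariant by the same induction.

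For the measure drop, the decisive observation is that in Step~\ref{step3a} the vertex $c$ is marked a \emph{second} time before the recursive call is launched, and that by Lemma~\ref{sub_unmark} a call never unmarks a vertex that was already marked when it started. I would track $|V^2|$ from the start of the current call: the colour changes and the ``put in $I$'' loop alter no marks, so $|V^2|$ first changes when we mark $X$ twice and the $3^+$-vertices of $C\setminus X$ once; as $X\subseteq O^0$ at that moment, this raises $|V^2|$ by $|X|$, and marking $c$ in Step~\ref{step3a} raises it by one more. The only way $|V^2|$ ever drops is the reset of Step~\ref{step3b}, but that reset returns the markings \emph{exactly} to their state at the start of the call and is immediately followed by re-marking $X$ twice. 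Consequently, at every instant a recursive call is launched we have $|V^2|\ge |V^2|_{\mathrm{start}}+1$, so the call begins with measure at most $m-1$.

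For the finiteness of the while loop I would use a two-level count. By Lemma~\ref{sub_Cdec}, each time Step~\ref{step3b} fires the set $C$ strictly shrinks, so Step~\ref{step3b} is executed only finitely often; it remains to bound the iterations performed between two consecutive resets, during which the pair $(C,X)$ is fixed. Each iteration applies the procedure to a vertex $c\in C\setminus X$ currently marked once; if the call returns with $c$ in $O$ then, by Lemma~\ref{sub_dis}, the colouring is unchanged, and by Lemma~\ref{sub_unmark} the vertex $c$ (marked twice in Step~\ref{step3a}) stays marked twice through all later recursive calls before the next reset. Hence each such iteration permanently adds one vertex to the set of twice-marked vertices of $C\setminus X$, a set of size at most $|C\setminus X|$; the only other kind of iteration is the one returning $c$ in $I$, which ends the stretch. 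Thus the loop halts, and together with the measure bound this proves that Procedure~\ref{p_sub} terminates.

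Finally, I would verify the marking invariant by splitting on the step in which the call stops. If the call returns with $v$ coloured $I$ (Step~\ref{step4} with $X$ empty), then Lemma~\ref{sub_dis} gives that the markings coincide with those at the start, so there is nothing to check. In Steps~\ref{step1}, \ref{step2}, and~\ref{step2bis} no recursion occurs and the call only ever raises a mark count to two, so the set of once-marked vertices can only lose elements. The substantive case is Step~\ref{step4} with $X$ nonempty, where $v$ stays in $O$: the only vertices this call marks once are the $3^+$-vertices of $C\setminus X$, and the exit condition of the while loop guarantees that none of these is still marked once; any other once-marked vertex is handled by the induction hypothesis applied to the (now terminating) recursive calls, whose once-marked sets at their start are contained in that of the current call, together with Lemma~\ref{sub_unmark}, while the closing operations of Step~\ref{step4} touch no marks. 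The step I expect to be the main obstacle is exactly the measure argument: because Step~\ref{step3b} undoes markings, $|V^2|$ is not monotone within a single call, and the crux is to see that it has nonetheless strictly increased relative to the start of the call at precisely those moments when recursion is invoked.
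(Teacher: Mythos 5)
Your proof is correct and takes essentially the same route as the paper: your strong induction on $|V(G)|-|V^2|$ formalizes the paper's observation that every nested call begins with strictly more twice-marked vertices, your two-level count (finitely many redefinitions of $C$ by Lemma~\ref{sub_Cdec}, and between redefinitions each iteration permanently converting a once-marked vertex of $C\setminus X$ into a twice-marked one) is exactly the paper's lexicographic measure $(n_1,n_2)=\bigl(|C|,\#\{\text{once-marked vertices of }C\}\bigr)$, and the marking invariant is established by the same induction over the recursive call structure using Lemmas~\ref{sub_unmark} and~\ref{sub_dis}. The only real difference is presentational: you spell out why the induction is well-founded even though the resets of Step~\ref{step3b} make $|V^2|$ non-monotone within a call, a subtlety the paper passes over quickly.
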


\begin{proof}
Procedure~\ref{p_sub} calls itself recursively. To make sure that it terminates, we only need to make sure on the one hand that such calls cannot be infinitely nested, and on the other hand that each of these calls only makes a finite number of calls to the procedure.
First, note that every time we call Procedure~\ref{p_sub} recursively, at least one additional vertex is marked twice (the vertex for which the new call is applied). Therefore there cannot be infinitely nested inductive iterations. 

Now we proceed by induction. Assume that for a given call $A$ to the procedure, each recursive call stops while not marking once any vertex that was not marked once. 
If the procedure stops in Step~\ref{step1},~\ref{step2} or~\ref{step2bis} then clearly no new vertex is marked once. Now we assume that Step~\ref{step3} is reached. Let $n_1$ be the number of vertices in $C$, and let $n_2$ be the number of vertices that are marked once in $C$. 

By Lemma~\ref{sub_Cdec}, every time $C$ changes, $n_1$ decreases strictly. When it does not change during an iteration of Step~\ref{step3b}, then at least one vertex of $C$ that was marked once is now marked twice, thus $n_2$ decreases strictly.
At each iteration of Steps~\ref{step3a} and~\ref{step3b}, the lexicographical order $(n_1,n_2)$ decreases strictly, therefore there are finitely many such iterations. Moreover, whenever the set $C$ changes, all the markings made during the procedure are cancelled, and in the end, all the vertices that have been marked once (only vertices of $C$, by induction hypothesis), are now marked twice. In Step~\ref{step4}, there are no new vertices that are marked once, and the procedure stops.
\end{proof}

\begin{lemm} \label{sub_noI1}
Throughout the procedure, no vertex is ever in $I^1$.
\end{lemm}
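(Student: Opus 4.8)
The plan is to prove the equivalent reformulation that, at every stable point of a call to Procedure~\ref{p_sub}, \emph{every once-marked vertex is coloured $O$} (i.e.\ no vertex is simultaneously in $I$ and in $V^1$). I would argue by induction along the finite sequence of stable points of the whole computation, whose finiteness and termination are guaranteed by Lemma~\ref{sub_once}; the induction hypothesis is that the invariant holds at the start of the call under consideration and that every deeper recursive call preserves it. Since preservation of the invariant up to any given point guarantees that the invariant still holds at the start of each recursive call, the hypothesis will always be applicable, and it then suffices to check that each elementary transition (a marking, a recolouring, or a reset) preserves ``no vertex lies in $I^1$''.

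First I would isolate the two mechanisms that can move a vertex into or out of $I^1$. For markings: the only operations that mark a vertex currently coloured $I$ are the ``mark $w$ twice'' instructions of Step~\ref{step2} and Step~\ref{step2bis}, where $w\in W$ is an $I$-neighbour of $v$; after Step~\ref{step1} such a $w$ is marked at most once, and by the invariant it is therefore in $I^0$, so it passes directly from $I^0$ to $I^2$. Every remaining marking (``mark once'' the $3^+$-vertices of $C\setminus X$ in Step~\ref{step3}, and the ``second mark'' of $u$ in Step~\ref{step2bis} and of $c$ in Step~\ref{step3a}) acts on a vertex that is coloured $O$ at that moment and leaves it coloured $O$. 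For recolourings: by Lemma~\ref{sub_OI}, extended to the recursive calls by the induction hypothesis, every non-reversion recolouring from $O$ to $I$ touches either the procedure vertex of some call, which is an $O^2$-vertex and hence goes from $O^2$ to $I^2$, or an unmarked $O^0$-vertex with no $I$-neighbour (the first while loop of Step~\ref{step3}), which goes from $O^0$ to $I^0$; recolourings from $I$ to $O$ only remove vertices from $I$. Thus no non-reversion transition produces a once-marked $I$-vertex.

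It then remains to handle the two reset operations, which I expect to be the crux. The colour-reversion at the end of Step~\ref{step4} restores the start-of-call colours while keeping the current markings; at that instant Lemma~\ref{sub_once} shows that the once-marked vertices are among those that were already once-marked at the start of the call, and by the invariant at the start these are coloured $O$, so, the colours now coinciding with the start colours, no once-marked vertex is coloured $I$. The marking-reversion in Step~\ref{step3b} is the opposite: it restores the start-of-call markings while keeping the colours, so the once-marked set becomes exactly the once-marked-at-start set. The key point is that, by Lemma~\ref{sub_unmark}, such a vertex keeps its start mark throughout the call and is therefore never unmarked and never equal to the twice-marked procedure vertex; hence by Lemma~\ref{sub_OI} it is never recoloured to $I$, and since it was coloured $O$ at the start it is still coloured $O$ after the reset.

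The main obstacle is precisely the asymmetry between the two resets: Step~\ref{step3b} resets the markings but not the colours, whereas the end of Step~\ref{step4} resets the colours but not the markings, which defeats any naive ``everything simply reverts to the start state'' argument and forces one to control, across each reset, the colour of exactly the once-marked vertices. This is what Lemmas~\ref{sub_once}, \ref{sub_OI}, and~\ref{sub_unmark} are designed to supply, together with the structural fact that the vertex to which a call is applied enters that call already twice-marked; I would make sure to invoke these three facts at exactly the reset steps above rather than attempting a direct bookkeeping of the colour and the number of marks of every vertex.
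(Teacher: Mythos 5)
Your proposal is correct and takes essentially the same route as the paper's own proof: both arguments rest on exactly Lemmas~\ref{sub_unmark}, \ref{sub_OI} and~\ref{sub_once}, and both reduce the difficulty to the two asymmetric resets (the colour reversion at the end of Step~\ref{step4} and the marking reversion of Step~\ref{step3b}), the only difference being that the paper phrases it as a first-violation contradiction while you phrase it as invariant preservation along the transitions. One small imprecision: a vertex once-marked at the start of a call need not ``keep its start mark'' (it can be marked a second time, e.g.\ in Step~\ref{step2bis}), but your conclusion only requires that it is never unmarked and never a procedure vertex of a nested call, which is exactly what Lemma~\ref{sub_unmark} and the fact that procedure vertices must pass through $C$ (hence through $O^0$) supply.
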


\begin{proof}
Let $v$ be the first vertex that is put in $I^1$. We consider the call $A$ where $v$ was first in $I^1$. As we never alter the colourings and markings at the same time, either $v$ was put from $O$ to $I$ while marked once, or it became marked once while in $I$.

Suppose $v$ was put from $O$ to $I$ while marked once.  If call $A$ was applied for vertex $v$, then $v$ was marked twice at the start of call $A$, and by Lemmas~\ref{sub_unmark} and~\ref{sub_once} $v$ remained  marked twice throughout call $A$, a contradiction. Therefore call $A$ was not applied for vertex $v$, and thus by Lemma~\ref{sub_OI}, the recolouring from $O$ to $I$ of $v$ was done during a reset of the recolouring, and thus at the end of Step~\ref{step4} for call $A$. Therefore $v$ was marked once at the end of call $A$. This implies by Lemma~\ref{sub_once} that $v$ was already marked once at the start of call $A$. Moreover, since $v$ was recoloured from $O$ to $I$ during the reset of the colouring in Step~\ref{step4}, it was in $I$ at the start of call $A$. Therefore $v$ was in $I^1$ at the start of call $A$, a contradiction.

Suppose $v$ became marked once while in $I$. The only vertices that we mark once directly are vertices of $C$, that are in $O$. Therefore $v$ became marked once due to a reset of the markings, thus $v$ was marked once at the start of call $A$, thus it was in $O^1$. By Lemma~\ref{sub_unmark}, $v$ never became unmarked until the end of call $A$. By Lemma~\ref{sub_OI}, the first time $v$ was put in $I$, $v$ was the vertex that a certain call $B$ of Procedure~\ref{p_sub} was applied for. But this implies that when the call $B$ was applied, $v$ was in some set $C$, and thus had been unmarked at some point since the start of call $A$, which leads to a contradiction.
\end{proof}

\begin{lemm} \label{sub_Smark}
Every time Procedure~\ref{p_sub} changes the cluster set of a vertex, if this change is not undone afterwards during the procedure, then that vertex either is the vertex we are applying the procedure for, or was not marked twice at the start of the procedure and is  marked twice at the end of the procedure.
\end{lemm}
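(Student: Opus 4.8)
The plan is to prove the statement by structural induction on the nesting of the recursive calls: for a fixed call $A$ applied to a vertex $v$, I will show that every reassignment of a cluster set occurring during the execution of $A$ (including the calls it spawns) that is not undone by the end of $A$ concerns either $v$ itself or a vertex whose marking goes from ``not twice'' to ``twice'' over the course of $A$. The induction is well founded because Procedure~\ref{p_sub} terminates and each call makes finitely many sub-calls (Lemma~\ref{sub_once}). The first step is to locate every place where a cluster set is actually changed in a single call: this happens only in Steps~\ref{step1}, \ref{step2}, \ref{step2bis}, and~\ref{step4}, and the only sets touched there are $S(v)$ (the target) and, in Step~\ref{step2bis}, the set $S(u)$ of the $O^1$-neighbour $u$; every other reassignment is produced inside a recursive call launched in Step~\ref{step3}.

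Before handling these, I would use Lemma~\ref{sub_dis} to dispose of the successful runs: if $v$ ends with colour $I$, then all markings, subordinates, and giving and neutral edges are restored to their pre-call state, so no cluster-set change survives and there is nothing to verify. Hence it suffices to treat calls that end with $v$ in $O$, i.e. those stopping in Step~\ref{step1}, \ref{step2}, or~\ref{step2bis}, and those reaching Step~\ref{step4} with $X$ nonempty. For the direct changes this is immediate: each reassignment of $S(v)$ concerns the target $v$, which is the first alternative. In Step~\ref{step2bis} the vertex $u$ is an $O^1$-vertex, hence marked exactly once at that instant; since reaching Step~\ref{step2bis} means the bodies of Steps~\ref{step1} and~\ref{step2} were not executed, no marking was performed earlier in $A$, so $u$ was already marked once (not twice) at the start of $A$, and Step~\ref{step2bis} marks it a second time and immediately ends the call, so $u$ is marked twice at the end of $A$, giving the second alternative.

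The substance of the argument is the inductive step for a change produced inside a recursive call $B$, with target $c$, that survives to the end of $A$. Such a change is not undone inside $B$ either, so the induction hypothesis applies: the affected vertex $x$ is either $c$ or a vertex not marked twice at the start of $B$ and marked twice at the end of $B$. If $x=c$, I would use that $c\in C\setminus X$, that $C$ consists of $O^0$-vertices, and that the recolourings and the while-loop preceding the marking instructions of Step~\ref{step3} change only colours; hence $c$ is marked zero times at the start of $A$, while Step~\ref{step3a} marks it twice before launching $B$. By Lemma~\ref{sub_unmark} this double marking can be lost only through a reset, but a reset after $B$ would also undo the cluster-set change; as the change is assumed to survive, no such reset occurs, so $c$ is still marked twice at the end of $A$. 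If $x\neq c$, I would transfer the two marking conclusions from $B$ to $A$: a vertex marked twice at the start of $A$ lies neither in $C$ nor in $X$, cannot be unmarked during $A$ (Lemma~\ref{sub_unmark}), and is restored to a double marking by every reset, so it is still marked twice at the start of $B$; contrapositively, $x$ not marked twice at the start of $B$ is not marked twice at the start of $A$. Symmetrically, $x$ marked twice at the end of $B$ stays marked twice up to the end of $A$, since any reset would undo the surviving change.

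The main obstacle I anticipate is precisely this bookkeeping of markings across the resets of Step~\ref{step3b} and across the recursive boundary. The delicate point is to argue rigorously that a double marking, once created and not reset, can be neither removed nor diluted, and that every reset returns the marking profile exactly to its start-of-call value, so that ``not marked twice at the start of $B$'' can be pushed back to ``not marked twice at the start of $A$'' and ``marked twice at the end of $B$'' can be pushed forward to the end of $A$. All of this rests on Lemmas~\ref{sub_dis}, \ref{sub_unmark}, and~\ref{sub_once}, together with the observation that the members of $C$ are marked zero times when they enter Step~\ref{step3}; once these facts are in place, the cases above close the induction.
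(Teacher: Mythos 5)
Your proof is correct and follows essentially the same route as the paper's: both reduce the statement to the two places where a cluster set can change (the target of a, possibly recursive, call and the $O^1$-vertex $u$ of Step~\ref{step2bis}), show that such a vertex passes from not-marked-twice to marked-twice at that very moment, and conclude from the fact that markings and cluster sets are always reset together, using Lemmas~\ref{sub_unmark} and~\ref{sub_once}. Your explicit induction on the recursion tree, together with invoking Lemma~\ref{sub_dis} to discard the successful runs, is simply a more meticulous write-up of the direct two-case argument given in the paper.
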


\begin{proof}
We consider a call $A$ to the procedure.
Let $v$ be a vertex whose cluster set is changed during call $A$ or one of its recursive calls. If $v$ is the vertex some call $B \ne A$ is applied for, then it became marked twice just before call $B$, since calls to Procedure~\ref{p_sub} are always made for vertices that have just become  marked twice. By Lemmas~\ref{sub_unmark} and~\ref{sub_once}, since we only mark once vertices that were previously unmarked, $v$ was not marked twice at the start of call $A$ and it remains  marked twice until the end of call $A$. The only time the procedure changes the cluster set of another vertex is when $u$ was marked once in Step~\ref{step2}, in which case the vertex $u$ is also marked a second time. Similarly, by Lemmas~\ref{sub_unmark} and~\ref{sub_once}, and since we only mark once vertices that were previously unmarked, $u$ was not  marked twice in the beginning of the procedure. 
As the cluster sets and the markings are always reset together, the lemma follows.
\end{proof} 

Note that this implies that as long as we only apply Procedure~\ref{p_sub} to vertices that just became  marked twice (as we do in Step~\ref{step3a}), we will never set  the cluster set of a vertex twice.

Let us define a property $\Pi$: 

\begin{prop*}[$\Pi$]
No $3^+$-vertex in $O^0$ has an $O^1$-neighbour, and no $2$-vertex in $O^0$ has a neighbour in $O^1$ and the other neighbour in $O$.
\end{prop*}

That property is ensured by the fact that when we put vertices in $O^1$, all the vertices of the component in $O^0$ are put there together, except from some $2$-vertices that have an $I$-neighbour. 

\begin{lemm} \label{sub_Pi}
If $\Pi$ is true before a call $A$ to Procedure~\ref{p_sub} until the end of call $A$, every time $\Pi$ becomes false, it is restored before any recursive call is applied, and before the procedure ends.
\end{lemm}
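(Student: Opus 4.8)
The plan is to argue by induction on the nesting of the recursive calls, using the fact established in the proof of Lemma~\ref{sub_once} that each recursive call marks at least one fresh vertex twice, so the recursion has bounded depth and the statement may be assumed for every call nested strictly inside $A$. Throughout, I keep in mind that $\Pi$ depends only on which vertices lie in $O^0$ and $O^1$ and on the adjacencies between them, so I only have to watch the steps that change a colour or a mark, and I will repeatedly use Lemmas~\ref{sub_dis}, \ref{sub_col}, \ref{sub_once} and~\ref{sub_noI1} to control exactly which changes occur.

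First I would dispose of the easy branches. If the call stops in Step~\ref{step1}, \ref{step2}, or~\ref{step2bis}, then the only changes are marking a vertex of $W$ (which lies in $I$, so this neither creates nor destroys an $O^0$- or $O^1$-vertex) and, in Step~\ref{step2bis}, promoting an $O^1$-neighbour $u$ to $O^2$; moving a vertex out of $O^1$ can only remove potential witnesses to $\Pi$, so $\Pi$ is preserved and, as these branches end the call, the statement holds. Hence the only place where $\Pi$ can break is the recolouring at the start of Step~\ref{step3}. Recolouring $v$ (an $O^2$-vertex) to $I$ merely deletes an $O$-vertex and cannot falsify $\Pi$. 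Recolouring the vertices of $W$ from $I$ to $O^0$ cannot falsify part~1 of $\Pi$, because the hypothesis for reaching Step~\ref{step3} is that every vertex of $W$ has all of its $O$-neighbours in $O^0$; it can, however, falsify part~2, namely through a $2$-vertex $z\in O^0$ whose two neighbours become an $O^1$-vertex and a freshly recoloured $W$-vertex.

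The heart of the argument is to show that the \emph{while} loop together with the marking of Step~\ref{step3} restores $\Pi$. The \emph{while} loop moves into $I$ every vertex of $C\setminus X$ all of whose neighbours lie in $O$; upon termination every $2$-vertex that survives in $C\setminus X$ (these are exactly the $2$-vertices left in $O^0$ after the marking, since only $3^+$-vertices of $C\setminus X$ are marked once) has a neighbour in $I$, which rules out a part~2 violation at such a vertex. The marking sends every $3^+$-vertex of $C\setminus X$ to $O^1$ and every vertex of $X$ to $O^2$, so inside each processed component of $G[O^0]$ no $3^+$-vertex remains in $O^0$, which handles part~1. The structural fact that makes this work, and which I would isolate and prove as an invariant alongside $\Pi$, is that every $O^1$-vertex---old or newly created---lies in a component of $G[O]$ of order at least $k+1$: for the newly marked vertices this follows from the definitions of $X$ and $C$ together with Lemma~\ref{sub_col}, and for the pre-existing ones from the same fact applied one level up in the recursion. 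Granting this, in the offending configuration the $2$-vertex $z$, its $O^1$-neighbour, and the freshly recoloured $W$-vertex all lie in one component of $G[O]$, which is therefore large; hence that $W$-vertex belongs to $X$, so $z$ lies in $C$ and is caught by the \emph{while} loop. Consequently $\Pi$ holds immediately after the marking, hence before each recursive call in Step~\ref{step3a} (marking the chosen $c$ a second time only moves it from $O^1$ to $O^2$, which preserves $\Pi$); after a reset in Step~\ref{step3b} the same marking argument applies verbatim. Finally, at the end of Step~\ref{step4} all recolourings are undone, so the colouring reverts to that at the start of the call, where $\Pi$ was assumed true; and in the successful branch ($X$ empty) no $O^1$-vertex was created and every recoloured $W$-vertex sits in a small $G[O]$-component, so no part~2 witness survives either.

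The step I expect to be the main obstacle is the bookkeeping across the colour and mark resets (the end of Step~\ref{step3b} and the ends of Steps~\ref{step3} and~\ref{step4}) interleaved with the recursion: I must guarantee that whenever the inductive hypothesis is invoked for an inner call, $\Pi$ genuinely holds at that call's start, and simultaneously that the auxiliary invariant ``every $O^1$-vertex lies in a $G[O]$-component of order at least $k+1$'' is itself maintained through every reset and every recolouring of an $O$-vertex to $I$. It is precisely this invariant that lets the \emph{while} loop capture every potential part~2 violation, so the real work is in establishing it together with $\Pi$ rather than $\Pi$ alone.
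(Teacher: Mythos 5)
Your plan has the same skeleton as the paper's proof: induction on the recursion, the observation that a call ending in Step~\ref{step1}, \ref{step2} or~\ref{step2bis} cannot break $\Pi$, the identification of the recolouring of $W$ at the start of Step~\ref{step3} as the only dangerous direct event (and only for part~2 of $\Pi$, at a $2$-vertex of $O^0$ with one neighbour in $O^1$ and the other a freshly recoloured $W$-vertex), the claim that the while loop of Step~\ref{step3} immediately removes any such vertex, and a reduction of the reset cases to the direct case. Where you depart from the paper is at the crux: for the while loop to catch the offending $2$-vertex $z$ one needs $z\in C$, i.e.\ one needs the $W$-neighbour of $z$ to lie in $X$, and you propose to obtain this from a global invariant stating that every $O^1$-vertex lies in a component of $G[O]$ of order at least $k+1$. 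The paper never introduces such an invariant: it argues locally that the offending vertex must be an unmarked $2$-vertex with both neighbours in $O$ (using Steps~\ref{step1}, \ref{step2}, \ref{step2bis} to exclude marked neighbours), asserts that the while loop puts it in $I$, and disposes of the reset cases via the well-parenthesised structure of the resets together with Lemmas~\ref{sub_unmark} and~\ref{sub_once} (a point your sketch glosses over: at the end of Step~\ref{step4} the colouring reverts but the markings do not, so one needs those lemmas to compare with the state at the start of the call).

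The genuine gap is that your invariant is never established --- you explicitly defer it as ``the main obstacle'' --- and, in the form you need it, nothing in your sketch makes it true. Your justification for ancestor-marked $O^1$-vertices (``the same fact applied one level up in the recursion'') is valid only at the instant the ancestor computes $X$ and $C$ and marks; it says nothing about later instants, when deeper calls have made recolourings that are still in force while the ancestor's once-markings are also still in force (the ancestor resets and re-marks in Step~\ref{step3b} only when its own child returns, not while a descendant is running). In particular, the very first action of Step~\ref{step3} of a call $B$ recolours the target of $B$ from $O$ to $I$; this can split a component of $G[O]$ and strand an ancestor-marked $O^1$-vertex $x$ in a piece of order at most $k$. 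In that situation the $W$-neighbour $w$ of the offending $2$-vertex is not in $X$, so $z\notin C$, the while loop does nothing, and your success-branch argument ($X=\emptyset$ hence no part-2 witness survives) collapses --- it was precisely the invariant that was supposed to exclude this. So the load-bearing step of your plan is an unproven claim whose stated form can fail, and repairing it would require exactly the event-by-event bookkeeping (which change occurred, directly or by reset, and what the procedure does immediately afterwards) that constitutes the body of the paper's proof. In fairness, the paper is terse at the same spot --- it asserts the while loop catches the offending vertex without verifying membership in $C$ --- but its argument does not rest on your global invariant, whereas yours does.
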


\begin{proof}
We consider the first time $\Pi$ is not true at the start or the end of some call to Procedure~\ref{p_sub}. We consider the event that caused $\Pi$ to fail.
By induction, we suppose that every recursive call preserves $\Pi$. 
Let us consider the change that caused the property $\Pi$ to be broken, and prove that $\Pi$ is restored immediately.

Suppose the change was direct, i.e. not because of some reset. When we mark a vertex $u$ once, which only happens in Step~\ref{step3}, we do so for all the vertices in the component of $u$ in $G[O^0]$ except some $2$-vertices that have an $I$-neighbour, and thus $\Pi$ is maintained. Moreover, by Lemma~\ref{sub_unmark}, no vertex directly becomes unmarked. Therefore the change was a change in the colouring.
Hence some vertex $w$ was put from $I$ to $O$. Moreover, as no vertex that is marked once ever changes colour directly, $w$ was unmarked when it changed colour. Therefore either $w$ has a neighbour $u$ in $O^1$, or it has a $2$-neighbour $u$ in $O^0$ that has a neighbour $x$ in $O^1$. Vertex $w$ is put in $O$ as a vertex of $W$ in Step~\ref{step3} of some call $B$ to Procedure~\ref{p_sub}. If call $B$ was applied for vertex $u$, then $u$ is put in $I$ when $w$ is put in $O$, and the property is not broken. Therefore call $B$ was not applied for $u$. Then $u$ is not in $O^1$, otherwise we would have stopped call $B$ in Step~\ref{step2}. Thus $u$ is a $2$-vertex in $O^0$ incident to $w$ with two $O$-neighbours, and thus it is put in $I$ in Step~\ref{step3} of call $B$.

Now we suppose the considered change is some reset. Suppose first that the reset was a reset of the colouring. This happens only at the end of Step~\ref{step4}, therefore at the end of the procedure, and the colouring is reset to what it was at the start of the procedure. By Lemmas~\ref{sub_unmark} and~\ref{sub_once}, no new vertex is marked once or zero times at the end of the procedure compared to the start, so the property was already broken at the start of the current call to the procedure, a contradiction.

Thus the reset was a reset of the markings, subordinates and giving edges. Then some vertex $w$ must have been put from $I$ to $O$ between the start of the procedure and this reset was such that $w$ has an $O$-neighbour $u$ that either was in $O^1$ at the start of the procedure, or is a $2$-vertex that was in $O^0$ at the start of the procedure, such that the other neighbour $x$ of $u$ was in $O$ at the start of the procedure. Since the resets are well parenthesised, we may assume that the recolouring of $w$ was not the consequence of a reset, i.e. was done directly. We are now interested in the state of the procedure at the time when $w$ is put from $I$ to $O$. Note that this recolouring must happen in Step~\ref{step3} for some call $B$, and that the marking in the beginning of Step~\ref{step3} is still the same as at the start of call $B$. Also note that, unless $u$ is a $2$-vertex with its two neighbours in $O$, $u$ cannot be in $O^0$ by Lemma~\ref{sub_unmark}. If call $B$ was applied for vertex $u$, then $u$ is put in $I$ when $w$ is put in $O$, and the property is not broken. Therefore call $B$ was not applied for $u$. Then $u$ is not in $O^1$ nor $O^2$, otherwise we would have stopped call $B$ in Step~\ref{step2}. Thus $u$ is a $2$-vertex in $O^0$ incident to $w$ with two $O$-neighbours, and thus it is put in $I$ in Step~\ref{step3} of call $B$.
\end{proof}

The following lemma implies that Procedure~\ref{p_sub} either manages to recolour vertex $v$ with colour $I$, or gives it a neutral edge or a cluster set.

\begin{lemm} \label{sub_vgood}
Let $A$ be a call to Procedure~\ref{p_sub}. Suppose that before a call $A$ to Procedure~\ref{p_sub}, $\Pi$ is true.
After call $A$, if the vertex $v$ for which the call was applied is in $O$, then either $v$ is incident to a neutral edge%, or $v$ and another vertex $u$ both have a vertex $w$ as a single subordinate
, or $S(v) \ne \emptyset$.
\end{lemm}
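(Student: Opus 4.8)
The plan is to run a case analysis on the step in which the call $A$ terminates, using Lemma~\ref{sub_dis} to read off the final state. By Lemma~\ref{sub_dis}, if the vertex $v$ is in $O$ at the end of $A$, then the colouring is exactly what it was before $A$ (all recolourings have been undone), while the cluster sets, giving edges, and neutral edges are the newly assigned ones. Thus it suffices, in each terminating step, to identify the neutral edge or the cluster set that the procedure assigns to $v$. By Lemma~\ref{sub_once} the procedure does terminate, and it can only stop in Step~\ref{step1}, Step~\ref{step2}, Step~\ref{step2bis}, or Step~\ref{step4}.

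The first three cases are immediate. If $A$ stops in Step~\ref{step1}, no recolouring takes place, so $v\in O$, and the edge $vw$ has been declared a neutral edge, so $v$ is incident to a neutral edge. If $A$ stops in Step~\ref{step2} or Step~\ref{step2bis}, again no recolouring takes place, $v\in O$, and we have set $S(v)=\{w\}$, which is nonempty. In all three cases the conclusion holds.

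It remains to treat the case where Step~\ref{step4} is reached. If at that point the set $X$ is empty (either at the start of Step~\ref{step4} or after some iteration of Step~\ref{step3b}), then $v$ has been recoloured with $I$ and this is not undone, so $v\in I$ and there is nothing to prove. So assume $X\neq\emptyset$ at the end; then the recolourings are undone, $v\in O$ (consistent with Lemma~\ref{sub_dis}), and we must exhibit a nonempty $S(v)$. Here I invoke that $\Pi$ holds throughout the call (Lemma~\ref{sub_Pi}). Pick $x\in X$ and let $C'$ be the component of $G[C]$ containing $x$, which is also the component of $x$ in $G[O^0]$. If $|C'|\ge k+1$, the first rule of Step~\ref{step4} adds every $3^+$-vertex of $C'$ together with every $2$-vertex of $C'\cap X$; since $x\in C'\cap X$, in either case $x\in S(v)$. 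If $|C'|\le k$, then, because $x\in X$, its component $K$ in $G[O]$ has at least $k+1$ vertices, so $K\supsetneq C'$ and there is an edge $cz$ with $c\in C'$ and $z\in K\setminus C'\subseteq O^1\cup O^2$. The plan is to use $\Pi$ to force $c$ to be a $3^+$-vertex and $z$ to be a vertex that was already marked twice at the start of $A$ (an $O^2$-vertex): $\Pi$ forbids a $3^+$-vertex of $O^0$ from having an $O^1$-neighbour, and forbids a $2$-vertex of $O^0$ from having an $O^1$-neighbour while keeping another neighbour in $O$, which excludes the configurations that would make $z$ lie in $O^1$ while $C'$ still extends beyond $c$. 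Consequently the second rule of Step~\ref{step4} applies to $C'$ (with $e(x')=x'y$ for the relevant $3^+$-vertex $x'$), and once more the $3^+$-vertices of $C'$ and the $2$-vertices of $C'\cap X$ (in particular $x$) are placed into $S(v)$, so $S(v)\neq\emptyset$.

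I expect the genuine obstacle to be exactly this $|C'|\le k$ sub-case. The danger is a \emph{pendant} boundary, namely that every edge leaving $C'$ either emanates from a $2$-vertex or lands in an $O^1$-vertex, in which case the second rule never fires and $x$ is never added. Ruling this out requires combining $\Pi$ with a size invariant: any $O$-component present at the start of call $A$ has order at most $k$, which together with Lemma~\ref{sub_col} (large $O$-components after recolouring consist only of vertices already in $O$) should confine every $O^1$-vertex to a small component and thereby guarantee that the link from $C'$ into $K\setminus C'$ runs from a $3^+$-vertex of $C'$ to a start-$O^2$-vertex. Making this confinement precise, and checking that it interacts correctly with the possibly many components of $G[C]$ that contain $X$-vertices, is the delicate heart of the argument; the earlier cases and the $|C'|\ge k+1$ case are routine.
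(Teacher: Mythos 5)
Your overall skeleton (case analysis on the step in which call $A$ terminates, reading off the final state via Lemma~\ref{sub_dis}) matches the paper, and your treatment of Steps~\ref{step1}, \ref{step2}, \ref{step2bis}, of the case where $X$ becomes empty, and of the case $|C'|\ge k+1$ is correct. However, the sub-case $|C'|\le k$ --- which you yourself call the ``delicate heart'' --- is not actually proved: you state a plan, flag the pendant-boundary danger, and defer its resolution. Worse, the resolution you sketch rests on a ``size invariant'' that is false. You claim every component of $G[O]$ present at the start of call $A$ has order at most $k$; but calls to Procedure~\ref{p_sub} are made precisely when a large $O$-component exists. The call in Step~\ref{step2'} of Procedure~\ref{p_gen} is made right after $v$ is coloured $O$, creating a component of order $k+1$, and the recursive calls in Step~\ref{step3a} are made while the components containing $X$-vertices still have at least $k+1$ vertices. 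So no confinement of $O^1$-vertices to small components can be obtained this way, and the strategy is misdirected: the obstruction has nothing to do with component sizes at the start of the call.

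The paper closes this case with different tools. Since $|K|\ge k+1>|C'|$ and $C'$ is a full component of $G[O^0]$, there is an edge $yz$ with $y\in C'$ and $z\in K\setminus C'$ marked once or twice; moreover, the markings at the last definition of $X$ coincide with those at the start of call $A$ (that definition occurs either before any marking in $A$ or right after a reset). First, $y\notin W$: otherwise $z$ would be an $O$-neighbour of $y$, distinct from $v$, already marked once or twice at the start of the call, and the call would have ended in Step~\ref{step2} or~\ref{step2bis} instead of reaching Step~\ref{step4}. Second, since $y\notin W$ while $x\in W$, we get $y\ne x$, so $y$ has a neighbour inside $C'\subseteq O$; if $y$ were a $2$-vertex, both of its neighbours would then lie in $O$ and the while-loop at the start of Step~\ref{step3} would have recoloured $y$ with $I$, contradicting $y\in C$. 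Hence $y$ is a $3^+$-vertex, and only now does $\Pi$ (valid at the last definition of $X$ by Lemma~\ref{sub_Pi}) enter: it forbids the $3^+$-vertex $y\in O^0$ from having an $O^1$-neighbour, so $z\in O^2$, i.e.\ $z$ was marked twice at the start of the call. The second rule of Step~\ref{step4} therefore fires for $C'$ and places $x$ (a $2$-vertex of $C'\cap X$ or a $3^+$-vertex of $C'$) into $S(v)$. In short, the pendant configuration you feared is excluded not by $\Pi$ plus a size bound, but by the exhaustion of Steps~\ref{step2}--\ref{step2bis} together with the Step~\ref{step3} while-loop; $\Pi$ is used only to rule out $O^1$ landing points once the boundary vertex is known to be a $3^+$-vertex.
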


\begin{proof}
If Procedure~\ref{p_sub} stops in Step~\ref{step1}, then $v$ is incident to a neutral edge. If Procedure~\ref{p_sub} stops in Step~\ref{step2} or Step~\ref{step2bis}, then $S(v)$ is not empty.
If the procedure reached Step~\ref{step3}, then it reached Step~\ref{step4}. We can assume that the vertex $v$ is in $O$ at the end of the procedure, and thus by Lemma~\ref{sub_dis} that all the recolourings have been undone. 

All that remains to prove is that $S(v) \ne \emptyset$. We know that $X$ is not empty in Step~\ref{step4}, otherwise $v$ would be in $I$. Let $x$ be a vertex of $X$. By Lemma~\ref{sub_noI1}, since $x$ was in $I$ at the start of the procedure (as a member of $W$), it was not marked once. Therefore $x \in C$. By definition of $X$, the component $K$ of $x$ in $G[O]$ has at least $k+1$-vertices. By Lemma~\ref{sub_col}, since $K$ still has at least $k+1$ vertices the last time $X$ is redefined, $K$ did not gain any new vertex since the beginning of the recursive calls. Let $C'$ be the component of $x$ in $G[C]$. If $C'$ has at least $k+1$ vertices, then $x \in S(v)$. Otherwise, by definition of $X$, there is a vertex $y$ in $C'$ that has a neighbour $z$ that was in $O^1$ or $O^2$ last time $X$ was defined. By Lemma~\ref{sub_Pi}, the graph $G$ verified $\Pi$ just before we last marked once all the vertices of $C\setminus X$ in Step~\ref{step3}, and thus the last time $X$ was defined. Therefore $z$ was in $O^2$ the last time $X$ was defined, and thus also the first time $C$ was defined. The vertex $y$ is not in $W$ (otherwise we would have stopped in Step~\ref{step2}), therefore it is not a $2$-vertex (otherwise we would have put it in $I$ as a $2$-vertex in $C$ with two $O$-neighbours). Therefore $x \in S(v)$.
\end{proof}

\bigskip
Let us now define a new procedure, applied on the graph $G$ with no colouring, but where some vertices may be  marked twice. The aim of that new procedure is, for some unmarked $2$-vertex $v$, to make one of its neighbours become  marked twice.

 We start with every vertex of $G$ unmarked. 
While there is a $2$-vertex $v$ that is not  marked  twice and has no neighbour that is  marked twice, we apply the following procedure.

\begin{proc}\label{p_gen}{\ }
\begin{enumerate}
\item The graph $G-v$ has fewer vertices than $G$, therefore it admits an $({\cal I},{\cal O}_k)$-partition. The vertex $v$ has a neighbour $u$ in $I$ and a neighbour $w$ in $O$ that is saturated, otherwise we can colour $v$ with either $I$ or $O$, which leads to a contradiction. 
assign colour $O$ to $v$ (note that this leads to a component of order $k+1$ in $G[O]$).
\label{step1'}

\item Let us consider the component $C$ of $v$ in $G[O^0]$. 
Mark once every $3^+$-vertex of $C$. For every once marked vertex $x$ of $C$, we mark $x$ a second time and we apply Procedure~\ref{p_sub} for $x$. Note that there is no $2$-vertex in $C$ with both neighbours in $O$, as otherwise we could put such a vertex in $I$ and get a colouring of $G$. We define a cluster set $S$ as the set of $3^+$-vertices in $C$. The set $S$ is defined with no supervisor, and if there is an edge from a vertex $x$ of $S$ to a vertex $y$ that was  marked  twice at the beginning of Procedure~\ref{p_gen}, then let $e(x) = xy$ for one such $x$.
\label{step2'}
\end{enumerate}
\end{proc}

We are now ready to prove some more lemmas, to ensure that the graph obtained after all the calls to Procedure~\ref{p_gen} have been applied verifies the properties we will need in the discharging procedure.

\begin{lemm} \label{gen_col}
In Procedure~\ref{p_gen}, no vertex of $C$ is ever put in $I$, and at the end of each call to Procedure~\ref{p_sub}, all the recolourings have been undone.
\end{lemm}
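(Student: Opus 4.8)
The plan is to treat the two assertions together, focusing on the top-level calls to Procedure~\ref{p_sub} that are made in Step~\ref{step2'} of Procedure~\ref{p_gen}, one for each $3^+$-vertex of $C$. By Lemma~\ref{sub_dis}, every such call ends either with its target vertex back in $O$ and all of its recolourings undone, or with the target vertex in $I$ and the colouring genuinely changed. So the whole lemma reduces to ruling out the second outcome for these calls.

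First I would set up the starting colouring. Since $G-v$ admits an $(\cI,\cO{k})$-partition and $v$ is a $2$-vertex whose unique $O$-neighbour $w$ is saturated, colouring $v$ with $O$ in Step~\ref{step1'} yields a colouring of $G$ in which the component $K$ of $v$ in $G[O]$ has order exactly $k+1$, this $K$ is the only component of $G[O]$ of order more than $k$, and $I$ is still independent (it is the independent set of the partition of $G-v$). Because $O^0\subseteq O$, we have $C\subseteq K$. I would then record the invariant that, just before each top-level call, the colouring of $G$ is exactly this starting colouring. This holds by induction on the order of the calls, since a call that ends with its target in $O$ undoes all of its recolourings by Lemma~\ref{sub_dis} and hence leaves the colouring unchanged.

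Now suppose for contradiction that some top-level call $A$, applied to a vertex $x\in C$, ends with $x$ in $I$, and take the first such call. By the invariant, the colouring just before $A$ is the starting one and $x\in C\subseteq K$. Let $D$ be any component of $G[O]$ of order more than $k$ after $A$. By Lemma~\ref{sub_col}, every vertex of $D$ lies in $O$ already before $A$; since edges never change, any path of $G[O]$ inside $D$ after $A$ is also a path of $G[O]$ before $A$, so $D$ is contained in a single component of $G[O]$ of order more than $k$ before $A$, which can only be $K$. But $x\in K$ is now in $I$, so $x\notin D$, whence $D\subseteq K\setminus\{x\}$ and $|D|\le k$, contradicting $|D|>k$. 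Thus no such $D$ exists, so after $A$ every component of $G[O]$ has order at most $k$; combined with Lemma~\ref{sub_no2I}, which keeps $I$ independent throughout $A$, this is an $(\cI,\cO{k})$-partition of $G$, contradicting that $G$ is a counterexample.

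Hence every top-level call in Step~\ref{step2'} ends with its target in $O$ and all recolourings undone, which is the second assertion. Consequently these calls never change the colouring; since every vertex of $C$ is in $O^0$ at the start of Step~\ref{step2'}, and we ignore recolourings that are done and then undone within a call (as agreed after Lemma~\ref{sub_dis}), no vertex of $C$ is put in $I$, which is the first assertion. I expect the main obstacle to be the component-size bookkeeping in the third paragraph: one must combine Lemma~\ref{sub_col} with the exact value $|K|=k+1$ to guarantee that deleting the single vertex $x$ from $K$ leaves every surviving $O$-component of order at most $k$, and use the fact that colours move but edges do not to be sure that recolouring $I$-vertices to $O$ cannot fuse several small components into a new large one.
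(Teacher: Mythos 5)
Your proposal is correct and follows essentially the same route as the paper: reduce via Lemma~\ref{sub_dis} to ruling out a colouring change by the calls made in Step~\ref{step2'}, then show that such a change would yield an $(\cI,\cO{k})$-partition of $G$ by combining Lemma~\ref{sub_col} (no new big components can appear) with Lemma~\ref{sub_no2I} ($I$ stays independent), contradicting minimality. You merely spell out details the paper leaves implicit, namely that the starting colouring has a unique component of order exactly $k+1$ and the connectivity argument showing any surviving large component would have to sit inside $K\setminus\{x\}$.
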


\begin{proof}
By Lemma~\ref{sub_col}, the calls to Procedure~\ref{p_sub} never add any new vertex to the component of $v$ in $G[O]$, otherwise this component ends up with at most $k$ vertices, and by Lemmas~\ref{sub_no2I} and~\ref{sub_col}, we get an $({\cal I},{\cal O}_k)$-partition of $G$, a contradiction.

Procedure~\ref{p_gen} cannot lead to recolouring to $I$ any of the vertices of $C$. Indeed, by Lemmas~\ref{sub_no2I} and~\ref{sub_col}, this would lead to an $({\cal I},{\cal O}_k)$-partition of $G$. Thus by Lemma~\ref{sub_dis}, in every application of Procedure~\ref{p_sub}, all the recolourings have been undone.
\end{proof}

\begin{lemm} \label{gen_once}
After any number of calls to Procedure~\ref{p_gen}, no vertex is marked once.
\end{lemm}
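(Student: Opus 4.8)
The plan is to prove the statement by induction on the number of calls to Procedure~\ref{p_gen} that have been carried out so far. The base case is immediate: before the first call every vertex of $G$ is unmarked, so in particular no vertex is marked once. For the inductive step I would assume that no vertex is marked once just before a given call to Procedure~\ref{p_gen}, applied to some $2$-vertex $v$, and show that the same holds once that call returns.

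The core of the argument is to track the set of once-marked vertices through a single call. Step~\ref{step1'} only recolours $v$ and performs no marking, so the state is unchanged there. In Step~\ref{step2'} we first mark once every $3^+$-vertex of the component $C$ of $v$ in $G[O^0]$; since the inductive hypothesis guarantees that nothing was marked once beforehand, the set of once-marked vertices is now contained in $C$. I would then analyse the loop that, while some vertex $x \in C$ is marked once, marks $x$ a second time and calls Procedure~\ref{p_sub} for $x$. I would maintain the invariant that every once-marked vertex lies in $C$, and prove that the number of once-marked vertices strictly decreases at each iteration. Concretely, letting $M \subseteq C$ be the once-marked set at the start of an iteration, marking the chosen $x$ twice leaves $M \setminus \{x\}$ marked once and turns $x$ into an $O^2$-vertex; the ensuing call to Procedure~\ref{p_sub} for $x$ cannot enlarge this set, because by Lemma~\ref{sub_once} every vertex marked once at the end of that call was already marked once at its start, hence lies in $M \setminus \{x\}$. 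Lemma~\ref{sub_unmark} additionally ensures that $x$ itself stays marked twice. Thus after the iteration the once-marked set is contained in $M \setminus \{x\} \subseteq C$, which preserves the invariant while strictly decreasing its size.

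Since the count of once-marked vertices is a non-negative integer that strictly decreases, the loop terminates after finitely many iterations, and it can only stop once no vertex of $C$ is marked once; by the invariant this means no vertex is marked once at all. Because the remaining part of Step~\ref{step2'} (fixing the supervisor-free cluster set $S$ and the giving edges) performs no marking, the call to Procedure~\ref{p_gen} ends with no vertex marked once, which completes the inductive step and hence the lemma.

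I expect the main obstacle to be the intricate bookkeeping hidden inside the recursive calls to Procedure~\ref{p_sub}, which may mark, unmark, reset, and recurse in complicated ways, temporarily creating and then resolving many once-markings. The way to keep this under control is to treat Lemmas~\ref{sub_unmark} and~\ref{sub_once} as black boxes, so that each inner call is summarised by the single fact that its set of once-marked vertices can only shrink; that fact is precisely what makes both the loop-termination measure and the containment-in-$C$ invariant go through.
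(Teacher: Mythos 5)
Your proposal is correct and takes essentially the same route as the paper: induction on the number of calls to Procedure~\ref{p_gen}, with Lemma~\ref{sub_once} invoked as a black box to ensure that the recursive calls to Procedure~\ref{p_sub} in Step~\ref{step2'} never create new once-marked vertices, so the loop exhausts the once-marked vertices of $C$ and the call ends with none. Your additional bookkeeping (the containment-in-$C$ invariant, the strictly decreasing count, and the appeal to Lemma~\ref{sub_unmark}) merely spells out details the paper leaves implicit.
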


\begin{proof}
The property is true initially, when every vertex is unmarked.
Suppose that no vertex is marked once at the beginning of some call to Procedure~\ref{p_gen}. We mark each $3^+$-vertex in $C$ once. Afterwards, for each vertex $x \in C$ that was marked once, we mark it twice and apply Procedure~\ref{p_sub}, that does not mark once any new vertex by Lemma~\ref{sub_once}. Therefore at the end of the call, no vertex is marked once. The lemma follows by induction.
\end{proof}
 
 \begin{lemm} \label{gen_P}
After any number of calls to Procedure~\ref{p_gen},
every  vertex marked twice is either adjacent to one of its supervisors, or has a neutral edge, or has a non-empty cluster set.
\end{lemm}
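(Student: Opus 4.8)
The plan is to prove the statement as an invariant, by induction on the number of calls to Procedure~\ref{p_gen}. Before any call every vertex is unmarked, so there is nothing to check. Assuming the invariant holds before a given call to Procedure~\ref{p_gen} applied to a $2$-vertex $v$, I would show that it still holds afterwards, treating each inner call to Procedure~\ref{p_sub} along the way. Three facts underlie everything: Lemma~\ref{gen_once}, which says no vertex is marked once between two calls, so that at the moment Procedure~\ref{p_gen} fixes a colouring the set $O^1$ is empty and hence $\Pi$ holds vacuously; Lemma~\ref{sub_Pi}, which keeps $\Pi$ true throughout, thereby licensing the use of Lemma~\ref{sub_vgood}; and Lemma~\ref{gen_col}, which guarantees that every call to Procedure~\ref{p_sub} issued from Procedure~\ref{p_gen} ends with all recolourings undone and its target vertex in $O$. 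Since I only care about the final marking state, I would analyse the vertices that are marked twice when the call finishes.

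First I would verify that the vertices already marked twice before the call keep their property. By Lemma~\ref{sub_Smark}, a cluster set is assigned or modified only for a vertex that is the target of a call or that becomes marked twice during the call; similarly, giving edges and neutral edges are created only for such freshly marked vertices (for instance the neutral edge of Step~\ref{step1} is attached to the \emph{new} target). As every reset restores the markings, cluster sets, and giving and neutral edges to their state at the start of the call, an old marked-twice vertex is either untouched or reset to a configuration that already satisfied the invariant by the induction hypothesis. So it remains to handle the vertices that become marked twice during the call and survive to the end.

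Next I would classify these fresh vertices by their creation site. A vertex created as the target of a call to Procedure~\ref{p_sub} ends that call coloured $O$: for the top-level calls of Step~\ref{step2'} this is Lemma~\ref{gen_col}, while for a recursive call in Step~\ref{step3a} a target ending in $I$ would have its markings reset by Step~\ref{step3b} and would not remain marked twice; Lemma~\ref{sub_vgood} then equips such a target with a neutral edge or a non-empty cluster set. A vertex created as the $w$ of Step~\ref{step2} or Step~\ref{step2bis} is placed in the cluster set of the target $v$ (and of $u$ in Step~\ref{step2bis}), and by definition of $W$ it is an $I$-neighbour of $v$, hence a subordinate adjacent to its supervisor; the vertex $u$ of Step~\ref{step2bis} and the targets themselves get $S(\cdot)=\{w\}\neq\emptyset$. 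The only remaining fresh vertices are the elements of the set $X$. Each of these lies in $W$ and is therefore adjacent to the target $v$, so the whole matter reduces to showing that every element of $X$ is put into $S(v)$ in Step~\ref{step4}.

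The hard part will be exactly this last reduction, and it is essentially the analysis already performed in the proof of Lemma~\ref{sub_vgood}: I would show that in Step~\ref{step4} every component $C'$ of $G[C]$ meeting $X$ is selected. Indeed, for $x\in X$ the component of $x$ in $G[O]$ has at least $k+1$ vertices, so either its component $C'$ in $G[C]$ already has at least $k+1$ vertices, or some vertex of $C'$ is adjacent to a vertex of $O^1\cup O^2$; in the latter case $\Pi$ forces that neighbour to lie in $O^2$ (hence marked twice at the start of the call) and forces the incident vertex of $C'$ to be a $3^+$-vertex, so $C'$ meets the second selection criterion. In either case $C'$ is selected, and since the selection rule adds all $3^+$-vertices of $C'$ together with all $2$-vertices of $C'\cap X$, the entire set $C'\cap X$ enters $S(v)$; as every element of $X$ sits in such a $C'$, each is a subordinate adjacent to its supervisor $v$. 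Finally the no-supervisor cluster set $S$ built in Step~\ref{step2'} consists of $3^+$-vertices of $C$, each of which was a target of a call to Procedure~\ref{p_sub} and thus already carries a neutral edge or a non-empty cluster set, so those vertices are covered too, and the induction closes.
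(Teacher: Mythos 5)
Your proposal is correct and takes essentially the same route as the paper's proof: a case analysis on where each surviving marked-twice vertex was created (Step~\ref{step2} or~\ref{step2bis}, the set $X$ of Step~\ref{step3}, or as the target of a call launched in Step~\ref{step3a} of Procedure~\ref{p_sub} or Step~\ref{step2'} of Procedure~\ref{p_gen}), with Lemma~\ref{sub_vgood} plus Lemma~\ref{gen_col} handling the targets and the Step~\ref{step4} construction making the $X$-vertices subordinates adjacent to their supervisor. The paper merely packages this as ``consider the moment the vertex became marked twice, a change never reset afterwards'' instead of an induction over calls, and directly asserts the fact $X\subseteq S(v)$ that you re-derive from the proof of Lemma~\ref{sub_vgood}.
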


\begin{proof}
Let $w$ be a vertex marked twice.
Initially, no vertex is marked twice. Therefore $w$ became  marked twice at some point, and this change was not reset afterwards. Let us consider the moment $w$ was marked twice.
If $w$ is  marked twice in Step~\ref{step2} or~\ref{step2bis} of Procedure~\ref{p_sub}, then either it is adjacent to one of it supervisor, or it is given a subordinate. If $w$ is  marked twice in Step~\ref{step3} of Procedure~\ref{p_sub} as a vertex of $W$, it later becomes adjacent to its supervisor $v$ in Step~\ref{step4}.

Suppose $v$ was marked twice  in Procedure~\ref{p_gen} or in Step~\ref{step3a} of Procedure~\ref{p_sub}. Then some call $A$ to Procedure~\ref{p_sub} is applied for this vertex right after. Therefore, by Lemma~\ref{sub_vgood}, since $\Pi$ is true at the start of each call to Procedure~\ref{p_sub}, at the end of call $A$, either $v$ has a neutral edge, or it has a non-empty cluster set, or it is in $I$. If it is in $I$, then by Lemma~\ref{gen_col}, $v$ was marked twice  in Step~\ref{step3a} of Procedure~\ref{p_sub}, and thus the marking of $v$ is reset in Step~\ref{step3b}, a contradiction. Therefore $v$ has a neutral edge or a non-empty cluster set. That ends the proof of the lemma.
\end{proof}
 
\begin{lemm} \label{noMnosub}
No unmarked vertex has a cluster set or a giving edge.
\end{lemm}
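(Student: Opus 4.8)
The plan is to exhibit and maintain a single invariant throughout the execution of Procedures~\ref{p_sub} and~\ref{p_gen}: at every moment, every vertex that carries a cluster set $S(v)$ or a giving edge $e(v)$ is marked twice. Once this invariant is known to hold at the end of all calls to Procedure~\ref{p_gen}, the lemma follows at once: by Lemma~\ref{gen_once} no vertex is marked once after the calls to Procedure~\ref{p_gen}, so an unmarked vertex is marked zero times and hence, by the invariant, carries neither a cluster set nor a giving edge.

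The structural feature I would exploit is that the cluster sets, the giving edges, the neutral edges, and the markings are always reset \emph{in lockstep}: whenever the procedure reverts the markings to their state at the start of a call, it simultaneously reverts the cluster sets and the giving/neutral edges to that same state. Consequently no reset can strip a vertex of its marking while preserving an assignment made to it after the corresponding checkpoint, nor the reverse. It therefore suffices to verify the invariant only at the moments when a cluster set or giving edge is assigned \emph{directly} (not through a reset), and then argue that resets and recursive subcalls preserve it.

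I would next enumerate the only places where these data are set directly, namely Steps~\ref{step1}, \ref{step2}, \ref{step2bis}, and~\ref{step4} of Procedure~\ref{p_sub} together with Step~\ref{step2'} of Procedure~\ref{p_gen}, and check in each case that the recipient is marked twice. For the vertex $v$ for which a call to Procedure~\ref{p_sub} is launched this is immediate, since such calls are always applied to a vertex that has just become marked twice. The auxiliary recipients $w$ in Step~\ref{step2} and $u$ in Step~\ref{step2bis} are explicitly marked twice at the very moment they receive their giving edge and cluster set. The only delicate points are the giving edge $e(x)$ in Step~\ref{step4} of Procedure~\ref{p_sub} and the data assigned in Step~\ref{step2'} of Procedure~\ref{p_gen}: there the relevant vertices are the $3^+$-vertices of the component $C$, which are first marked once and then promoted to twice during the while-loop of Step~\ref{step3}. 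Invoking Lemma~\ref{sub_once} (the loop terminates leaving every once-marked vertex marked twice), I would confirm that by the time these data are assigned each such $x$ is indeed marked twice; note also that the supervisor-free set $S$ of Step~\ref{step2'} assigns no $S(v)$ to any single vertex, so only its giving edge is at issue.

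Finally I would close the induction over the nested recursive calls: assuming the invariant at the start of a call, every direct assignment preserves it by the case analysis above, every reset restores markings and assignments synchronously from a consistent earlier state, and the recursive subcalls preserve it by the induction hypothesis (here Lemma~\ref{sub_unmark} is convenient, as it guarantees that the only vertices ever unmarked are those marked during the call, so no previously consistent assignment is silently orphaned). The main obstacle I anticipate is bookkeeping rather than conceptual: one must trace the lockstep resets carefully across the recursion, in particular ruling out the scenario in which an outer call's reset undoes a vertex's second marking while an assignment made in an inner call lingers. The synchronous resetting of markings and assignments is precisely what precludes this, and making that argument airtight is the crux of the proof.
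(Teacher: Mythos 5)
Your proposal is correct and takes essentially the same route as the paper's proof: the invariant that every vertex carrying a cluster set or a giving edge is marked twice, maintained because direct assignments are only ever made to twice-marked vertices, the cluster sets, giving edges and markings are reset in lockstep, and no twice-marked vertex directly becomes unmarked or once-marked. Your step-by-step enumeration of assignment sites and the final appeal to Lemma~\ref{gen_once} are harmless elaborations of the same argument.
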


\begin{proof}
This is true in the beginning as no vertex has a cluster set or a giving edge. Every time a vertex is given a cluster set or a giving edge in Procedures~\ref{p_sub} and~\ref{p_gen}, this vertex is marked twice. We always reset the cluster sets, the giving edge and the markings together.  Finally, no vertex marked twice directly becomes unmarked or marked once.
\end{proof}

\begin{lemm} \label{gen_inter}
Every vertex belongs to at most two cluster sets. 
Moreover if a vertex belongs to two cluster sets, then that vertex has no giving edge and the two sets contain only that vertex.
\end{lemm}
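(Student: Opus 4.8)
The plan is to track, for each vertex, every moment at which it is added to a cluster set as a subordinate, and to tie each such moment to the event that marks the vertex twice. A vertex $u$ acquires a membership in a cluster set only in one of four places: Step~\ref{step2} of Procedure~\ref{p_sub} (where $S(v)=\{u\}$), Step~\ref{step2bis} (where $S(v)=S(u')=\{u\}$, hence two memberships at once), Step~\ref{step4} (where $u$ is a $3^+$-vertex of a relevant component of $G[C]$ or a $2$-vertex of $C\cap X$), and Step~\ref{step2'} of Procedure~\ref{p_gen} (where $u$ is a $3^+$-vertex of $C$ placed in the supervisorless set). In each case $u$ is marked twice at the moment of addition, or was marked twice slightly earlier in the same call: this is immediate for Steps~\ref{step2} and~\ref{step2bis}, and for the component-based Steps~\ref{step4} and~\ref{step2'} it follows because the vertices of $C$ in play are marked once and then twice during the loop. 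First I would combine this with the global bookkeeping: by Lemmas~\ref{sub_unmark},~\ref{sub_once}, and~\ref{gen_once}, and since no call ever unmarks a vertex that was already marked twice when the call began, a vertex that becomes marked twice in some call to Procedure~\ref{p_gen} and is not subsequently reset keeps that status to the very end. Thus each vertex has a well-defined call to Procedure~\ref{p_gen} in which its final (surviving) marked-twice status is established.

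The second step is to observe that a surviving marked-twice vertex $u$ can never be added to a further cluster set afterwards. Indeed $u$ is no longer in $O^0$, so it lies in no set $C$ (always a union of components of $G[O^0]$), which rules out fresh additions via Step~\ref{step4} or Step~\ref{step2'}; and $u$ is no longer marked at most once, so whenever $u$ appears in a set $W$ the call stops in Step~\ref{step1} with a neutral edge, which rules out additions via Step~\ref{step2} or as the subordinate $w$ of Step~\ref{step2bis}. Consequently all surviving memberships of $u$ are assigned inside the single Procedure~\ref{p_gen} call in which $u$ becomes surviving-marked-twice, and it suffices to analyse that one call.

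The third step is the case analysis inside that call. If $u$ is added through Step~\ref{step2bis} as the vertex $w$, it is placed simultaneously in the two singletons $S(v)=S(u')=\{u\}$ and receives no giving edge there; since $u$ is marked twice immediately afterwards it acquires no further membership by the previous paragraph, and it acquires no giving edge either (before Step~\ref{step2bis} it was marked at most once, and by Lemma~\ref{noMnosub} together with the fact that markings and giving edges are always reset together, a vertex marked at most once carries no giving edge). This is exactly the two-set conclusion of the lemma. In the remaining cases (Steps~\ref{step2},~\ref{step4},~\ref{step2'}) I would show $u$ receives exactly one membership: each of these adds $u$ at most once in the call and marks $u$ twice, and a further addition of any type would require $u$ to be either in $O^0$ (for the component-based steps) or marked at most once (for Steps~\ref{step2} and~\ref{step2bis}), both incompatible with $u$ being already marked twice; hence no two of these mechanisms apply to the same $u$.

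The main obstacle I expect is ruling out two memberships of Step~\ref{step4} type arising from two \emph{distinct} nested calls within one Procedure~\ref{p_gen} call. The plan is to use that, for a $3^+$-vertex, lying in the set $C$ of a call is equivalent to having been marked once by that call's Step~\ref{step3}, and that the recursive calls nest so that a vertex is marked once by at most one active call at a time: if an outer call $A'$ has already marked $u$ once (so $u\in C_{A'}$), then at the moment an inner call $A$ defines $C_A$ the vertex $u$ is not in $O^0$, whence $u\notin C_A$; conversely, if $u\in C_A$ then $u$ was in $O^0$ when $C_A$ was defined and so had not yet been marked by $A'$. The one delicate subtlety is that an outer call may redefine its component after the inner call returns (Step~\ref{step3b}); but that redefinition is triggered only when the inner recolouring succeeds, and it comes with a reset that, crucially, also resets the cluster sets and giving edges to their state at the start of the outer call, so any inner Step~\ref{step4} membership is undone and does not survive. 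Assembling these observations yields that $u$ receives at most one surviving membership in the non-Step~\ref{step2bis} cases, hence at most two overall, with equality only in the Step~\ref{step2bis} case and with the stated singleton and no-giving-edge structure.
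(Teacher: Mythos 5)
Your proposal is correct and takes essentially the same approach as the paper: both arguments rest on the bookkeeping that a vertex entering a cluster set was unmarked at the start of the call that sets the membership and is marked twice from then on (markings, cluster sets, and giving edges being reset together), so that two memberships can only arise within a single call, which happens only in Step~\ref{step2bis} of Procedure~\ref{p_sub}. The paper runs this argument backward (two cluster sets containing the same vertex force the same setting call), whereas you run it forward (a survivingly twice-marked vertex never acquires a further membership), but the substance is identical.
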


\begin{proof}
Let $v$ be a vertex in two cluster sets $S$ and $T$. When $S$ was set, $v$ was marked twice. Indeed, every vertex that is put in a cluster set is marked twice. As the cluster sets are reset together with the markings, and no vertex marked twice directly becomes unmarked or marked once, this implies that $v$ remains marked twice  afterwards.

Let us prove that $v$ was unmarked at the start of the call to Procedure~\ref{p_sub} or~\ref{p_gen} in which $S$ is set. If $S$ is set in Steps~\ref{step1},~\ref{step2} or~\ref{step2bis} of Procedure~\ref{p_sub}, this is trivial. If $S$ is set in Step~\ref{step4} of Procedure~\ref{p_sub}, then the last time $C$ was set in Step~\ref{step3}, $v$ was put in $C$, and therefore $v$ was in $O^0$. Furthermore, at that point the set of the markings was the same as it was at the start of the procedure. Therefore if $S$ is set in Procedure~\ref{p_sub}, then $v$ was unmarked at the start of the procedure.
If $S$ is set in Procedure~\ref{p_gen}, then $v$ is in $C$, and thus was unmarked at the start of the procedure.

Therefore $v$ was unmarked at the start of the call to Procedure~\ref{p_sub} or~\ref{p_gen} in which $S$ was set, and was marked twice  at the end of that call. The same holds for $T$. Therefore $S$ and $T$ were set in the same call to Procedure~\ref{p_sub} or~\ref{p_gen}. The only time this can happen is in Step~\ref{step2} of procedure~\ref{p_sub}, and this implies that $S = T = \{v\}$, and $v$ has no giving edge by Lemma~\ref{noMnosub}. By what precedes, if $v$ is also in a third cluster set $U$, then $U$ was also set in the same call to Procedure~\ref{p_sub} or~\ref{p_gen} as $S$ and $T$, which is impossible.
\end{proof}

Note that the previous proof also implies the following lemma.
\begin{lemm} \label{inter_mark}
Every vertex that is in a cluster set is marked twice.
\end{lemm}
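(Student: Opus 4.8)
The plan is to observe that this lemma is, as the preceding remark indicates, already contained in the opening of the proof of Lemma~\ref{gen_inter}; so I would isolate and reuse that argument rather than start afresh. The single invariant to establish is: \emph{at the instant a vertex is placed into a cluster set, it is simultaneously marked twice}. Once this is in hand, the conclusion follows from the global bookkeeping conventions already set up, namely that cluster sets and markings are always reset together and that no vertex marked twice directly (i.e. outside a reset) becomes unmarked or marked once.

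First I would verify the invariant by inspecting every location in Procedures~\ref{p_sub} and~\ref{p_gen} where a cluster set is created or enlarged. In Step~\ref{step2} of Procedure~\ref{p_sub} we set $S(v)=\{w\}$ and explicitly mark $w$ twice; in Step~\ref{step2bis} we set $S(v)=S(u)=\{w\}$ and again mark $w$ twice. In Step~\ref{step4} the elements added to $S(v)$ are the $3^+$-vertices of the relevant component $C'$ together with the $2$-vertices of $C'\cap X$; but by the time Step~\ref{step4} is reached, the while-loop of Step~\ref{step3} has terminated (Lemma~\ref{sub_once}) with every $3^+$-vertex of $C\setminus X$ marked a second time and every vertex of $X$ already marked twice, so all vertices inserted are marked twice. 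The same pattern occurs in Step~\ref{step2'} of Procedure~\ref{p_gen}, where the cluster set $S$ consists of the $3^+$-vertices of $C$, each of which has just been marked once and then a second time in the loop. Thus in every case the vertex is marked twice at the moment it enters a cluster set.

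For the persistence half, let $v$ be any vertex lying in a cluster set at the current stage. By the invariant just checked, $v$ was marked twice when it was inserted. Since cluster sets and markings are reset jointly, and since a twice-marked vertex never directly reverts to being unmarked or marked once (the same fact used in the proof of Lemma~\ref{noMnosub}), the marking of $v$ can only change in lockstep with its cluster-set membership. As $v$ is still in a cluster set, neither has been undone, so $v$ is still marked twice. This completes the argument.

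The only mild subtlety, and hence the step I would treat most carefully, is the bookkeeping in Steps~\ref{step4} and~\ref{step2'}, where the second marking of an inserted vertex happens a few lines earlier in the enclosing loop rather than at the literal set-assignment; I would make explicit that the loop's termination (Lemma~\ref{sub_once}) guarantees every such vertex is marked twice before insertion. Beyond that, the lemma is a direct corollary of the conventions and of the invariant extracted from the proof of Lemma~\ref{gen_inter}, so no genuine obstacle remains.
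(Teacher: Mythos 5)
Your proof is correct and follows essentially the same route as the paper, which disposes of this lemma by noting that it is implied by the proof of Lemma~\ref{gen_inter}: there the key assertion is precisely your invariant (every vertex is marked twice at the moment it is put in a cluster set) combined with the fact that cluster sets and markings are reset together and that a twice-marked vertex never directly loses its marks. Your explicit verification of the invariant at each insertion site (Steps~\ref{step2}, \ref{step2bis}, \ref{step4} of Procedure~\ref{p_sub} and Step~\ref{step2'} of Procedure~\ref{p_gen}) just fills in detail the paper leaves implicit.
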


\begin{lemm} \label{mark_inter}
After any number of calls to Procedure~\ref{p_gen},
every vertex marked twice is in a cluster set.
\end{lemm}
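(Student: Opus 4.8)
The plan is to follow each vertex marked twice in the final graph (after all calls to Procedure~\ref{p_gen}) back to the last moment at which it became marked twice without this marking later being undone. By Lemma~\ref{sub_unmark}, and since no vertex marked twice directly becomes unmarked or marked once, the only way a double marking disappears is through a reset; hence this moment is well defined. I would then case on which step of Procedure~\ref{p_sub} or Procedure~\ref{p_gen} performed that marking and, in each case, exhibit a cluster set that contains the vertex. This mirrors the bookkeeping already developed for Lemma~\ref{gen_P}, which guarantees that every vertex marked twice is adjacent to a supervisor, has a neutral edge, or has a non-empty cluster set; the present statement upgrades the last two alternatives to membership in a cluster set.

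The routine cases are as follows. If the vertex $w$ was marked twice during Step~\ref{step2'} of Procedure~\ref{p_gen}, then it is a $3^+$-vertex of the component $C$, and the supervisor-free cluster set defined there is exactly the set of $3^+$-vertices of $C$, so $w$ lies in it. If $w$ was marked twice in Step~\ref{step2}, or as the vertex ``$w$'' of Step~\ref{step2bis}, it is placed directly into $S(v)$ (respectively $S(v)=S(u)$). If $w$ is one of the vertices of $X$ marked twice in Step~\ref{step3}, or is marked twice as the vertex ``$c$'' of Step~\ref{step3a} without being later reset, then it belongs to the set $C$ or to $X$, and I would argue that Step~\ref{step4} sweeps it into $S(v)$. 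For this I would reuse the analysis of Lemma~\ref{sub_vgood}: since $\Pi$ holds throughout by Lemma~\ref{sub_Pi}, every component $C'$ of $G[C]$ -- each of which contains a vertex of $X$ by the very definition of $C$ -- either has at least $k+1$ vertices or contains a $3^+$-vertex adjacent to a vertex already marked twice, so $C'$ qualifies and all its $3^+$-vertices, together with the $2$-vertices of $C'\cap X$, are added to $S(v)$. Note that this is the same argument as in Lemma~\ref{sub_vgood} but now applied to \emph{every} component rather than to a single one.

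The delicate case, and where I expect the real work to be, is the vertex called ``$u$'' in Step~\ref{step2bis}: it is turned into a supervisor by setting $S(u)=\{w\}$, yet at that instant it is not placed into anybody's cluster set. The key observation is that $u$ is in $O^1$ at that point, so $u$ was previously marked once; by the marking rules only $3^+$-vertices are ever marked once, and only inside Step~\ref{step3} of some ancestor call to Procedure~\ref{p_sub} or inside Step~\ref{step2'} of Procedure~\ref{p_gen}, as a member of the \emph{frozen} vertex set $C$ of that ancestor call. Although $u$ now passes from $O^1$ to $O^2$, it remains an element of that set $C$, because $C$ denotes a fixed set of vertices once it is defined and is recomputed only after a reset -- not the current components of $G[O^0]$. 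Since we are tracing the persisting double marking of $u$, that ancestor call is not reset after $u$ becomes marked twice, so it reaches its Step~\ref{step4} (or, in the case of Procedure~\ref{p_gen}, its definition of the supervisor-free set $S$) with $u$ still marked twice and still lying in $C$; by the component-qualification argument of the previous paragraph, $u$ is then added to the ancestor's cluster set. The crux is precisely recognizing that ``$C$'' is a frozen set and that the enclosing call's Step~\ref{step4} still collects $u$ after its colour and marking have changed; once this is established, the case analysis is complete and every vertex marked twice is shown to lie in a cluster set.
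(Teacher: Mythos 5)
Your proposal is correct and takes essentially the same approach as the paper's proof: classify how the final (unreset) double marking arose, observe that Steps~\ref{step2} and~\ref{step2bis} place the vertex directly into a cluster set, and that every other doubly-marked vertex lies in the frozen set $C$ of some enclosing call, which sweeps it into the cluster set built at that call's end unless a reset (the only mechanism that removes markings) undid the marking. You are in fact more explicit than the paper on two points it treats tersely --- the component-qualification argument via $\Pi$ borrowed from Lemma~\ref{sub_vgood}, and the vertex $u$ of Step~\ref{step2bis}, which the paper folds into the general ``was marked once, hence in $C$ for some call'' case --- but the underlying argument is identical.
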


\begin{proof}
In Procedures~\ref{p_sub} and~\ref{p_gen}, whenever a vertex that was in $V^2$ is put in $V^0 \cup V^1$, it happens because of a reset of the markings. Therefore, since the resets are well nested (they correspond to the nested calls to Procedures~\ref{p_sub} and~\ref{p_gen}), the resets of the markings can never put a vertex from $V^0 \cup V^1$ to $V^2$.

When a vertex $w$ that was unmarked becomes marked twice in Procedure~\ref{p_sub} or~\ref{p_gen}, it either happens in Steps~\ref{step2} or~\ref{step2bis} of Procedure~\ref{p_sub}, in which case $w$ is put in $S(v)$, or it happens in Step~\ref{step3} of Procedure~\ref{p_sub}, in which case $w$ is in $C$. Besides that, whenever a vertex $w$ becomes  marked twice, $w$ was marked once, so it was in $C$ for some call $A$ to Procedure~\ref{p_sub} or Procedure~\ref{p_gen}.
Hence every time a vertex $w$ becomes  marked twice, either it is in a cluster set or it is in $C$ for some call $A$ to Procedures~\ref{p_sub} or~\ref{p_gen}, in which case, when call $A$ stops, either the vertex $w$ is still in $C$, in which case it is put in a cluster set, or it is no longer in $C$ and the markings were reset since $w$ was marked twice. Since the cluster sets and the markings are reset at the same time, this proves the lemma.
\end{proof}

\begin{lemm} \label{gen_give}
After any number of calls to Procedure~\ref{p_gen},
if a vertex $x$ has a giving edge $e(x) = xw$, then $x$ and $w$ are marked twice, they are not in the same cluster set, neither of them is the subordinate of the other one, and we do not have $e(w) = wx$.
\end{lemm}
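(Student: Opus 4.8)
The plan is to reduce the four assertions to the instant at which each giving edge is created, and then to argue that the relevant data are frozen from that instant on. I would first isolate the three places where a giving edge is ever set: Step~\ref{step2} of Procedure~\ref{p_sub} (there $e(w)=wu$ with $u$ an $O^2$-neighbour of $w$ distinct from $v$), Step~\ref{step4} of Procedure~\ref{p_sub} (there $e(x)=xy$ with $y$ marked twice at the start of the call), and Step~\ref{step2'} of Procedure~\ref{p_gen} (there $e(x)=xy$ with $y$ marked twice at the start of that call). In every case one endpoint, call it the \emph{source}, is a vertex that was unmarked at the start of the enclosing call and becomes marked twice during it, while the other endpoint, the \emph{target}, was already marked twice at the start of that call. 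Moreover the source is precisely the vertex into whose cluster set the procedure is writing, so by Lemma~\ref{inter_mark} together with Lemma~\ref{gen_inter} it belongs to exactly one cluster set, namely the one created in the current call, whose members (as in the proof of Lemma~\ref{gen_inter}) were all unmarked at the start of the call.

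For persistence I would use that, by Lemma~\ref{sub_unmark} and the fact that a vertex marked twice is never \emph{directly} unmarked or demoted to marked once, the only way to change the markings, cluster-set memberships, or giving edges of the two endpoints is through a reset; since a reset reverts to a checkpoint preceding the creation of the edge, any reset affecting these data would also delete $e(x)$. Hence if $e(x)=xw$ still exists after all calls to Procedure~\ref{p_gen}, all of this information is exactly as it was immediately after creation, so it suffices to check the four claims there. That $x$ and $w$ are marked twice is immediate at creation: for the source combine Lemma~\ref{noMnosub} with Lemma~\ref{gen_once}, and for the target use that it was marked twice at the start and is never directly unmarked. They lie in different cluster sets because the source's unique cluster set was created in the current call with members unmarked at that start, while $w$ was marked twice then. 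Finally, a hypothetical $e(w)=wx$ would, by the same creation analysis applied to it, force $x$ to have been marked twice at the start of an earlier call, contradicting that the source only becomes marked twice now.

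The delicate point, which I expect to be the main obstacle, is that neither endpoint is a subordinate of the other. The direction $w\notin S(x)$ is routine: any cluster set supervised by the source was set in a (descendant) call at whose start $w$ was already marked twice, so $w$ cannot be a member. The hard direction is $x\notin S(w)$, i.e. that the target is not the supervisor of the source's cluster set. In Step~\ref{step2} this holds because $u$ is explicitly distinct from the supervisor $v$, and in Step~\ref{step2'} it holds trivially since the cluster set $S$ carries no supervisor; but in Step~\ref{step4} it reduces to proving that the witness $y$ for the giving edge of a small component $C'$ is never the processed vertex $v$. I would argue this exactly as Lemma~\ref{sub_vgood} is proved: invoking property~$\Pi$ at the last redefinition of $X$ to show that a small component of $G[O^0]$ whose $G[O]$-component is large must be joined to that component through a vertex of $O^2$ adjacent to a $3^+$-vertex of $C'$, so that the witness $y$ can be taken in $O^2$. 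Since $v$ has been recoloured to $I$ in Step~\ref{step3} and the recolouring is not yet undone when $y$ is chosen, such a $y$ satisfies $y\neq v$. Making this selection rigorous, in particular ruling out that the only available witness is $v$ itself, is where the real work lies.
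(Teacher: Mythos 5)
Your proposal follows the same route as the paper's proof: isolate the three creation sites of giving edges, observe that in each one the source was unmarked at the start of the enclosing call while the target was already marked twice, deduce the marking, cluster-set and orientation claims at creation time, and propagate them forward using the fact that markings, cluster sets, giving edges and neutral edges are only ever undone together by a reset, which would delete the edge as well. The one place where you diverge is the subordinate claim, and your instinct there is sound: the paper's proof handles it with the blanket assertion that a vertex marked twice before a call never acquires cluster-set membership or subordinates during that call ``unless it has just become marked twice and the procedure is applied to it'', and then never discharges this exception --- which is exactly your delicate case, namely a giving edge $e(x)=xy$ created in Step~\ref{step4} with target $y=v$, the supervisor of the cluster set containing $x$. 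So you have identified a genuine looseness in the paper rather than a defect in your own argument. Moreover, the ``real work'' you fear remains is in fact already contained in your sketch: the argument of Lemma~\ref{sub_vgood}, via property $\Pi$ at the last (re)definition of $X$, produces for every eligible small component $C'$ a $3^+$-vertex of $C'$ adjacent to a vertex that is at that moment in $O^2$, hence coloured $O$ while $v$ is coloured $I$, hence distinct from $v$; so a witness $y\neq v$ always exists, and it can never happen that the only available witness is $v$. What is left is a matter of specification, not of proof: Step~\ref{step4} must be read as choosing $y$ among such $O^2$-vertices (a reading the paper itself relies on in its proof of Lemma~\ref{gen_neutral}, where it asserts that giving edges are defined between vertices in $O$). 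With that convention your argument is complete, and on this point it is strictly more careful than the paper's.
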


\begin{proof}
Whenever a giving edge $xw$ is given to a vertex $x$ in a call $A$ to Procedure~\ref{p_sub} or~\ref{p_gen}, $w$ was already marked twice at the start of call $A$, and $x$ was unmarked at the start of call $A$. This implies that we do not have $e(w) = wx$ unless the considered edge is reset. Moreover, $x$ is put in the cluster set defined at the end of call $A$, and $w$ is not since it was already  marked twice at the start of call $A$. This implies that $x$ and $w$ are not put in the same cluster set. Moreover, by Lemmas~\ref{noMnosub} and~\ref{inter_mark}, $x$ and $w$ were not the subordinates of one another before the start of call $A$. Furthermore, no vertex that was  marked twice before the start of a procedure is ever put in a cluster set or given a subordinate in that procedure unless it has just become  marked twice and Procedure~\ref{p_gen} is applied to it. Therefore $x$ and $w$ were not the subordinates of one another at the end of call $A$. After call $A$, by Lemma~\ref{noMnosub}, $x$ and $w$ are both  marked twice, and thus neither of them is given new subordinates.
 As the cluster sets and the giving edges are always reset together, that proves the lemma.
\end{proof}

\begin{lemm} \label{gen_neutral}
% ajout pour la preuve du lemme 29 :
A neutral edge is incident to two vertices marked twice.
A neutral edge is not incident to a vertex and one of its subordinates.  A neutral edge is not incident to two vertices belonging to a same cluster set. A neutral edge is not a giving edge.
\end{lemm}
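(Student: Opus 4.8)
The plan is to exploit the fact that a neutral edge is created in exactly one place, namely Step~\ref{step1} of Procedure~\ref{p_sub}, and to read all four statements off the configuration at the moment of creation, just as the proofs of Lemmas~\ref{gen_give}, \ref{gen_inter} and~\ref{mark_inter} read their conclusions off the moment a cluster set or giving edge is set. The recurring tool is that markings, cluster sets, giving edges and neutral edges are always reset together and in a well-nested fashion, and that a vertex marked twice is never directly demoted; hence whatever holds at a surviving creation persists to the end.

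First I would fix a neutral edge $vw$ present after some number of calls to Procedure~\ref{p_gen} and trace it to the call $A$ to Procedure~\ref{p_sub} (not undone afterwards) in which Step~\ref{step1} set it, with $A$ applied to $v$. At that instant $v$ is the vertex $A$ was applied to, hence marked twice, and $w$ is the member of $W$ (an $I$-neighbour of $v$) whose double marking triggered Step~\ref{step1}; moreover $w$ was already marked twice at the start of $A$ while $v$ has just become marked twice. Since the edge survives, the reset-together principle keeps both endpoints marked twice until the end, which is the first assertion. For the second and third assertions I would use that Step~\ref{step1} explicitly sets $S(v)=\emptyset$: because $v$ stays marked twice it is never again freshly marked twice and so is never given new subordinates, whence $w\notin S(v)$; symmetrically, $v$ became marked twice only at the start of $A$, so by Lemmas~\ref{noMnosub} and~\ref{inter_mark} it lay in no cluster set beforehand and cannot be added to the now-frozen cluster set of the previously-marked $w$, giving $v\notin S(w)$. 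For a common cluster set, Lemma~\ref{gen_inter} reduces matters to the ``large'' cluster sets, all of whose members lie in a component $C\subseteq O$ of some call; but the distinguished endpoint $w$ is coloured $I$ at the time of $A$ (it lies in $W$), so $w\notin C$ and $v,w$ cannot share such a set.

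For the fourth assertion I would assume $vw$ is simultaneously a giving edge and split on where that giving edge was created. If it was set in Step~\ref{step2} (or Step~\ref{step2bis}), the clean observation is that these steps terminate the call immediately after newly marking an endpoint twice, so matching this against the neutral-edge data — where $v$ is freshly marked twice and $w$ was already marked twice at the start of $A$ — forces one endpoint to be both freshly and previously marked twice, contradicting that a marked-twice vertex is never demoted without a reset (which would also destroy one of the two edges). The remaining case, a giving edge produced in Step~\ref{step4} of Procedure~\ref{p_sub} or in Procedure~\ref{p_gen}, is the delicate one, and I expect it to be the main obstacle: there the source of the giving edge lies in the current $O$-component $C$ and is therefore coloured $O$, while the neutral edge forces its distinguished endpoint $w$ to be coloured $I$. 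The contradiction has to come from the colour bookkeeping together with property~$\Pi$, which controls the marked neighbours of a $3^+$-vertex of $O^0$ and should pin down that the target of such a giving edge cannot be the $I$-coloured, doubly-marked vertex $w$. Carefully justifying that last colour/marking coincidence across the nested resets is where the real work lies; the first three assertions and the Step~\ref{step2}/\ref{step2bis} case are essentially bookkeeping.
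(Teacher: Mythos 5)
Your localization of the neutral edge to Step~\ref{step1} of a call $A$ applied to $v$, nested inside a call $B$ to Procedure~\ref{p_sub} or~\ref{p_gen}, is the paper's starting point, and your arguments for the first assertion, for $w\notin S(v)$, and for the ``same cluster set'' assertion essentially coincide with the paper's. The genuine gap is in $v\notin S(w)$, which you dismiss as bookkeeping via the claim that the ``previously-marked'' vertex $w$ has a ``now-frozen'' cluster set. That claim fails for exactly one candidate, and it is the dangerous one: the vertex $z$ for which the enclosing call $B$ was applied. This $z$ has precisely the profile of $w$ at the time of call $A$ --- it was marked twice just before call $B$ started, and it is recoloured $I$ in Step~\ref{step3} of call $B$, so it lies in $I^2$ while $A$ runs --- yet $S(z)$ is anything but frozen: it is created in Step~\ref{step4} of call $B$, \emph{after} call $A$ returns, and it absorbs $v$ as a $3^+$-vertex of $C$. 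Hence if $z$ were adjacent to $v$, Step~\ref{step1} of call $A$ could pick $w=z$, and the neutral edge would join $v$ to its (future) supervisor. Excluding this is the content of the paper's sentence that $v$ is not in the set $W$ of call $B$ and is therefore not adjacent to its supervisor; spelled out, one needs that every $3^+$-vertex of $C\setminus X$ is non-adjacent to $z$, because at the start of call $B$ every $O$-neighbour of $z$ that is a $3^+$-vertex is already marked (hence outside $G[O^0]$ and outside $C\setminus W$), while $C\cap W=X$. Without this adjacency argument the second assertion is unproven.

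On the fourth assertion you are right that the Step~\ref{step2} case is a marking-chronology contradiction, and you correctly isolate the hard case (a giving edge created in Step~\ref{step4} of Procedure~\ref{p_sub} or in Procedure~\ref{p_gen}) together with what it requires: that the target of such a giving edge cannot be the $I$-coloured vertex $w$. But you leave that unproved. The paper closes this case by reading Step~\ref{step4} so that the giving edges defined in call $B$ join two vertices coloured $O$ (consistent with the proof of Lemma~\ref{sub_vgood}, where the target $y$ of $e(x)=xy$ is exhibited in $O^2$), and then disposes of every call other than $B$ by the same chronology you use for Step~\ref{step2}: before $A$ the vertex $v$ is not marked twice, after $A$ both ends are, while a giving edge needs its source unmarked at the start of the defining call. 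So your proposal is sound on the first and third assertions, but both remaining claims hinge on colour/adjacency facts about call $B$ that it does not supply.
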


\begin{proof} 
Let us consider a neutral edge $vw$. The only time it can have been set as a neutral edge is in  Step~\ref{step1} of a call $A$ to Procedure~\ref{p_sub} for vertex $v$.
 The cluster set of $v$ is empty, therefore $w$ is not a subordinate of $v$. Moreover, $w$ is a vertex in $I^2$ when $A$ is called in some call $B$ to Procedure~\ref{p_sub} or~\ref{p_gen}, and $v$ was marked twice just before $A$ was called. The vertex $v$ is in $C$ for call $B$, and will be put in the set $S$ containing all the element of $C$ (unless $C$ is redefined, in which case the giving edge $vw$ is reset). Let us consider the time when call $A$ is applied.  When the set $C$ (for call $B$) was defined, it contained only elements of $O$, and if after some call $A'$ called by $B$ before $A$, the colouring had changed, then by Lemma~\ref{sub_dis} the set $C$ would have been redefined. Therefore when $A$ is applied, all the elements of $C$ for call $B$ are in $O$, therefore $w$ is not in that set $C$. Therefore $v$ and $w$ are not in the same cluster set. Additionally, when the giving edges are defined in $B$, they are between vertices in $O$, so $vw$ is not defined as a giving edge in $B$. Before call $A$, the vertex $v$ is not marked twice, and after call $A$, the vertices $v$ and $w$ are both  marked twice, thus $vw$ is never defined as a giving edge.
  Moreover, if $B$ is a call to Procedure~\ref{p_sub}, then $v$ is not in the set $W$ of call $B$, therefore $v$ is not adjacent to its supervisor if it has one, and thus $v$ is not a subordinate of $w$.
As the markings, the cluster sets, the giving edges and the neutral edges are reset together, that ends the proof of the lemma.
\end{proof}

\begin{lemm} \label{gen_Sgood}
After any number of calls to Procedure~\ref{p_gen},
every non-empty cluster set $S$ satisfies one of the following: 
\begin{itemize}
\item $S$ is a singleton $\{v\}$ with two supervisors ; moreover $v$ has no subordinates and no giving edges.
\item For each component $K$ of $G[S]$, either $K$ has a vertex with a giving edge, or $\sum_{x \in K} (d(x) - 2) \ge k-1$.
\end{itemize}
\end{lemm}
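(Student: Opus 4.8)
The plan is to prove the statement by a case analysis on how the cluster set $S$ was created, which is legitimate because a cluster set is fixed once and for all by the call that creates it (this follows from Lemmas~\ref{gen_inter}, \ref{inter_mark}, and~\ref{noMnosub}, together with the fact that cluster sets are always reset at the same time as the markings). A non-empty cluster set is only ever set in Step~\ref{step2}, Step~\ref{step2bis}, or Step~\ref{step4} of Procedure~\ref{p_sub}, or in Step~\ref{step2'} of Procedure~\ref{p_gen}, so I would treat these four creation sites in turn and match each to one of the two bullets.

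First I would dispose of the two singleton cases. A set created in Step~\ref{step2bis} is $S=\{w\}$, where $w$ is placed in both $S(v)$ and $S(u)$ and hence has two supervisors; since $w$ was unmarked immediately before becoming marked twice, Lemma~\ref{noMnosub} shows it had no cluster set at that point, and because marked-twice vertices are never the target of a later call to Procedure~\ref{p_sub} they acquire no subordinates, while Lemma~\ref{gen_inter} gives that $w$ has no giving edge. This is exactly the first bullet. A set created in Step~\ref{step2} is $S=\{w\}$ carrying the giving edge $e(w)=wu$, so its single component contains a vertex with a giving edge, which is the second bullet.

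Next I would handle the two cases where $S$ is assembled from connected $O$-components (Step~\ref{step4} and Step~\ref{step2'}). The first observation is that vertices contributed by distinct components are pairwise non-adjacent, since they lie in distinct components of $G[O^0]$, and that inside the contribution of one component $C'$ every $2$-vertex has a neighbour in $I$ and therefore only a single neighbour in $C'$, i.e. is a leaf. Hence the $3^+$-vertices span a connected subgraph and each component $K$ of $G[S]$ is precisely the contribution of one $C'$; as the omitted $2$-vertices add $0$ to the degree sum, $\sum_{x\in K}(d(x)-2)=\sum_{x\in C'}(d(x)-2)$. For the components coming with a giving edge---the small components in Step~\ref{step4}, and the case of Step~\ref{step2'} where the order-$(k+1)$ component $K_O$ meets a vertex already marked twice---$K$ contains a vertex with a giving edge and the second bullet holds. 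In the Step~\ref{step2'} subcase I would still have to check that such a giving edge is actually set: a $2$-vertex of $C$ has its unique $O$-neighbour inside $C$ and so cannot be adjacent to a marked-twice vertex, which forces every edge from $C$ to a marked-twice vertex of $K_O$ to emanate from a $3^+$-vertex.

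The remaining, and hardest, case is a large component $C'$ with $|C'|\ge k+1$ and no giving edge, where I must show $\sum_{x\in C'}(d(x)-2)\ge k-1$. The crux is the claim that \emph{every} vertex of $C'$ has a neighbour in $I$. In Procedure~\ref{p_sub} this is delivered by the while-loop of Step~\ref{step3}, which puts into $I$ every vertex of $C\setminus X$ all of whose neighbours lie in $O$, together with the fact that each vertex of $X$ is adjacent to $v$, coloured $I$ at that moment. In Procedure~\ref{p_gen} there is no such loop, so I would instead invoke the minimality of $G$: if a vertex $x$ of $K_O$ had no neighbour in $I$, recolouring $x$ to $I$ would keep $I$ independent and split $K_O$ into components of order at most $k$, producing an $(\cI,\cO{k})$-partition of $G$ and contradicting the choice of $G$. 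Granting the claim, since $C'$ sits inside a single component of $G[O]$ all its boundary edges reach $I$, and, contributing at least one per vertex, they give $b(C')\ge |C'|$; with $e(C')\ge |C'|-1$ from connectedness (here $e(C')$ and $b(C')$ denote the numbers of edges inside and leaving $C'$) this yields
\[
\sum_{x\in C'}(d(x)-2)=2e(C')+b(C')-2|C'|\ge 2(|C'|-1)+|C'|-2|C'|=|C'|-2\ge k-1 .
\]
I expect the main difficulty to be exactly this uniform ``$I$-neighbour'' claim: in Procedure~\ref{p_sub} one must verify that it survives the repeated redefinitions of $X$ and $C$ and the resets triggered by the recursive calls (using property $\Pi$, Lemma~\ref{sub_Pi}, and Lemmas~\ref{sub_col} and~\ref{gen_col}), and in Procedure~\ref{p_gen} one must be careful that the reducibility recolouring is legitimate for $3^+$-vertices and not only for the $2$-vertices explicitly noted in Step~\ref{step2'}.
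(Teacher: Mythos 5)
Your proof is correct and follows essentially the same route as the paper's: a case analysis on the step in which the cluster set was created, with the singleton sets from Steps~\ref{step2} and~\ref{step2bis} of Procedure~\ref{p_sub} read off directly, and the large components handled by the identical connectivity/degree-sum argument (every vertex of $C'$ has a neighbour outside $C'$, hence $\sum_{x\in C'}(d(x)-2)\ge 2(|C'|-1)-|C'|=|C'|-2\ge k-1$). Two minor remarks: your minimality recolouring for the Procedure~\ref{p_gen} case makes explicit a point the paper glosses over (its proof cites the while-loop of Step~\ref{step3}, which has no literal counterpart in Procedure~\ref{p_gen}), while your aside that all boundary edges of $C'$ reach $I$ is inaccurate (they may reach marked $O$-vertices, e.g.\ the endpoints of giving edges) but harmless, since the computation only needs one boundary edge per vertex of $C'$.
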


\begin{proof}
The lemma is true initially as there are no cluster sets. 

%Suppose it is true at the start of a call $A$ to Procedure~\ref{p_gen} and false at the end of the call.

%The property $\Pi$ is true before we mark the vertices of $C$ once in call $A$, as no vertex is marked once by Lemma~\ref{gen_once}, and is maintained true when we mark once the vertices of $C$, since we mark once all the elements of a component of $G[O^0]$ except from the $2$-vertices with an $I$-neighbour.
%Lemma~\ref{sub_Pi} implies that $\Pi$ remains true until the end of call $A$.

Let us consider a set $S$ that does not verify the lemma. The set $S$ is defined in some call $A$ to Procedure~\ref{p_sub} or~\ref{p_gen}. If $S$ is defined in Step~\ref{step1},~\ref{step2}, or~\ref{step2bis} of Procedure~\ref{p_sub}, then the lemma is trivially verified for $S$. 
We can assume that $A$ is either a call to Procedure~\ref{p_gen}, or that Step~\ref{step4} is reached. We consider the set $C$ as it is last defined in call $A$, and if $A$ is a call to Procedure~\ref{p_gen}, then let $X = \emptyset$. 
Let $K$ be the component of $S$ defined from a component $C'$ of $G[C]$ ($K$ is defined from $C'$ by removing some vertices that have degree $1$ in $C'$, and degree $2$ in $G$). 
By construction, either $C'$ contains a $3$-vertex with a giving edge, or $|C'| \ge k+1$. Suppose $|C'| \ge k+1$. For every vertex $u$ of $C' \setminus X$, since $u$ was not put in $I$ in Step~\ref{step3}, $u$ has a neighbour that is not in $C$ (that was in $I$ when $C$ was defined), and thus not in $C'$. For every vertex $u$ of $X$, $u$ is adjacent to its supervisor, which is not in $C'$. Therefore $\sum_{x \in K} (d(x) - 2) = \sum_{x \in C'} (d(x) - 2) = \sum_{x \in C'} (d_{C'}(x) - 1)$, where $d_{C'}(x)$ is the number of neighbours of $x$ in $C'$. The graph $C'$ being connected, $\sum_{x \in C'} (d_{C'}(x) - 1) \ge |C'| - 2 \ge k - 1$. Therefore, the set $S$ verifies the lemma.
 \end{proof}

\begin{lemm} \label{gen_2sup}
Let $v$ be a $2$-vertex in $V^2$. The vertex $v$ is adjacent to all of its supervisors.
\end{lemm}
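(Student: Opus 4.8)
The plan is to fix a supervisor $w$ of the $2$-vertex $v$ (that is, a vertex with $v\in S(w)$ in the final configuration) and to trace back to the step of Procedure~\ref{p_sub} or Procedure~\ref{p_gen} in which $v$ was placed, and never afterwards removed, from the cluster set $S(w)$. As in the proofs of Lemmas~\ref{gen_inter} and~\ref{noMnosub}, I would use that cluster sets, markings, and giving and neutral edges are always reset simultaneously, so that it suffices to inspect the single call that set $S(w)$ to its final value. The whole argument is then a case analysis over the finitely many places where a cluster set is assigned a subordinate, checking in each that a $2$-vertex placed in a cluster set is a neighbour of its supervisor.

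Next I would examine those places one by one. Writing $p$ for the vertex a given call to Procedure~\ref{p_sub} is applied to, a subordinate is created only in Steps~\ref{step2}, \ref{step2bis}, and~\ref{step4}. In Step~\ref{step2} we set $S(p)=\{w'\}$ with $w'$ an $I$-neighbour of $p$; so if our $2$-vertex $v$ equals $w'$, it is adjacent to its supervisor $p$. In Step~\ref{step2bis} we set $S(p)=S(u)=\{w'\}$, where again $w'$ is an $I$-neighbour of $p$ and $u$ is a neighbour of $w'$; hence if $v=w'$ then $v$ is adjacent to both of its supervisors $p$ and $u$. By Lemma~\ref{gen_inter} this is in fact the only way $v$ could acquire two supervisors, which keeps the case bookkeeping finite. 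In Step~\ref{step4}, the only $2$-vertices added to $S(p)$ are the $2$-vertices of $C'\cap X$; since $X\subseteq W$ and $W$ consists of neighbours of $p$, such a $2$-vertex is again adjacent to $p$.

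It remains to rule out the other ways a cluster set is created. The set of Step~\ref{step1} is empty, and the cluster set $S$ built in Step~\ref{step2'} of Procedure~\ref{p_gen} consists only of $3^+$-vertices and moreover has no supervisor, so neither contributes a supervisor to a $2$-vertex. Consequently every supervisor of the $2$-vertex $v$ arises in Step~\ref{step2}, \ref{step2bis}, or~\ref{step4} of Procedure~\ref{p_sub}, and in each of these cases $v$ is adjacent to it, which is the assertion. I expect the only delicate point to be the reset bookkeeping, namely making sure that the membership $v\in S(w)$ I work with really survives to the end and is not one of the subordinate assignments that get undone; this is handled exactly as in the earlier lemmas, and the hypothesis $d(v)=2$ enters only to exclude the supervisor-less set of Procedure~\ref{p_gen} and to pin $v$ down as the subordinate $w'$ or as a vertex of $C'\cap X$.
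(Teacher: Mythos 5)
Your proposal is correct and is essentially the paper's own argument: a finite case analysis over the places where subordinates are created (Steps~\ref{step2}, \ref{step2bis}, and~\ref{step4} of Procedure~\ref{p_sub}, the empty set of Step~\ref{step1}, and the supervisor-less, $3^+$-only cluster set of Procedure~\ref{p_gen}), using $X \subseteq W$ to get adjacency in Step~\ref{step4} and the fact that markings and cluster sets are always reset together to handle the bookkeeping. The only cosmetic difference is that the paper reaches the same three cases by tracking how a $2$-vertex can become marked twice (noting that only $3^+$-vertices are ever marked once), whereas you track the cluster-set assignments directly.
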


\begin{proof}
Procedure~\ref{p_gen} does not mark a $2$-vertex twice, except in its calls to Procedure~\ref{p_sub}. Moreover, every time a vertex becomes marked once, that vertex is a $3^+$-vertex, thus only $3^+$-vertices are ever marked once. 

 In Procedure~\ref{p_sub}, if a $2$-vertex was marked twice in Step~\ref{step2} or~\ref{step2bis}, then it is adjacent to its supervisor(s), and if a $2$-vertex was marked twice in Step~\ref{step3}, then it belongs to $X \subseteq W$ and thus is also adjacent to its supervisor. No vertex becomes marked twice anywhere else in Procedure~\ref{p_sub}. As the cluster sets and the markings are reset together, that proves the lemma.
\end{proof}

%%%%%%%%%%%%%%%%%%%%%%%%%%%% OK AFTER %%%%%%%%%%%%%%%%%%%%%%%%%%%%%%%%%%%%%%%%%%%%%%%%%%%%%%%%%%%%%%%%%%%%%%%%%%%%%%%%%%%%%%%%%%%%%%%%%%%%%%%%%%%%%%%%%%%%%%%%%%%%%%%%%%%%%%%%
%%%%%%%%%%%%%%%%%%%%%%%%%%%%%%%%%%%%%%%%%%%%%%%%%%%%%%%%%%%%%%%%%%%%%%%%%%%%%%%%%%%%%%%%%%%%%%%%%%%%%%%%%%%%%%%%%%%%%%%%%%%%%%%%%%%%%%%%%%%%%%%%%%%%%%%%%%%%%%%%%%%%%%%%%%%%%%
%%%%%%%%%%%%%%%%%%%%%%%%%%%%%%%%%%%%%%%%%%%%%%%%%%%%%%%%%%%%%%%%%%%%%%%%%%%%%%%%%%%%%%%%%%%%%%%%%%%%%%%%%%%%%%%%%%%%%%%%%%%%%%%%%%%%%%%%%%%%%%%%%%%%%%%%%%%%%%%%%%%%%%%%%%%%%%
%%%%%%%%%%%%%%%%%%%%%%%%%%%%%%%%%%%%%%%%%%%%%%%%%%%%%%%%%%%%%%%%%%%%%%%%%%%%%%%%%%%%%%%%%%%%%%%%%%%%%%%%%%%%%%%%%%%%%%%%%%%%%%%%%%%%%%%%%%%%%%%%%%%%%%%%%%%%%%%%%%%%%%%%%%%%%%
%%%%%%%%%%%%%%%%%%%%%%%%%%%%%%%%%%%%%%%%%%%%%%%%%%%%%%%%%%%%%%%%%%%%%%%%%%%%%%%%%%%%%%%%%%%%%%%%%%%%%%%%%%%%%%%%%%%%%%%%%%%%%%%%%%%%%%%%%%%%%%%%%%%%%%%%%%%%%%%%%%%%%%%%%%%%%%
%%%%%%%%%%%%%%%%%%%%%%%%%%%%%%%%%%%%%%%%%%%%%%%%%%%%%%%%%%%%%%%%%%%%%%%%%%%%%%%%%%%%%%%%%%%%%%%%%%%%%%%%%%%%%%%%%%%%%%%%%%%%%%%%%%%%%%%%%%%%%%%%%%%%%%%%%%%%%%%%%%%%%%%%%%%%%%

\section*{Discharging procedure}
Let $M = \frac{8k}{3k +1}$. Every vertex $v$ starts with weight equal to  $d(v) - M$. As $M$ is larger than the average degree of $G$, the sum of the weights of the vertices is negative. We will move some weights from vertices to other vertices, without introducing or removing weights, so that every vertex has non-negative weight at the end of the procedure. 
This contradiction will complete the proof.

In the discharging procedure, the weights will move via the edges; in other words, the weights will always move from a vertex to its neighbours. 
%We will make sure that the weight that goes through any given edge is at most $M - 2$. 
We apply the following four steps.

\begin{enumerate}[label={\bf Step \arabic*}]
\item Every vertex $w$ with two supervisors receives $\frac{M-2}{2}$ from each of its supervisors. \label{s1}
%Every vertex $v$ that has exactly one subordinate $w$ such that there is no edge $e(w)$ gives $\frac{M-2}{2}$ to its subordinate. \label{s1}

\item Every vertex that is marked twice gives $M - 2$ to each of its unmarked neighbours.\label{s2}

\item Every vertex $v$ with a giving edge $vw$ receives $M-2$ from $w$. \label{s3}
%For every vertex $v$ that has an edge $e(v) = vw$, $w$ gives $M-2$ to $v$. \label{s3}

\item For every component $C$ of the graph induced by a cluster set, we do the following:

 Let $T$ be a spanning tree of $C$. For each edge $uv$, let $n_{Tuv}$ be the sum of $d(w)-2$ for all vertices $w$ in the component of $v$ in $T-u$. 
 For each vertex $u$ of $T$ and each neighbour $v$ of $u$ in $T$, the vertex $u$ gives $M - 2 - n_{Tuv}(4-\frac{3M}{2})$ to $v$ if this value is positive, unless $v$ is in the same component of $T-u$ as a vertex that has a giving edge (in that case $u$ does not give weight to $v$).\label{s4}
\end{enumerate}

\begin{lemm} \label{l_unmnoch}
No unmarked vertex gives weight to another vertex.
\end{lemm}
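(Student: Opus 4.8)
The plan is to go through the four steps \ref{s1}--\ref{s4} of the discharging procedure one at a time and verify, in each case, that every vertex that hands out weight is marked twice. Combined with Lemma~\ref{gen_once}, which guarantees that after all calls to Procedure~\ref{p_gen} no vertex is marked once, a vertex is unmarked precisely when it is not marked twice. Thus proving ``every giver is marked twice'' immediately yields the claim that no unmarked vertex gives weight.

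First I would dispose of \ref{s2} and \ref{s3}, which are essentially immediate. In \ref{s2} the givers are by definition the vertices marked twice, so there is nothing to prove. In \ref{s3}, the weight travelling along a giving edge $vw$ is handed out by $w$; applying Lemma~\ref{gen_give} to the giving edge $e(v)=vw$ shows that both endpoints, and in particular the giver $w$, are marked twice. Next I would treat \ref{s4}: there the givers are the vertices $u$ of a spanning tree $T$ of a component $C$ of the graph induced by a cluster set, so every such $u$ lies in a cluster set, and by Lemma~\ref{inter_mark} each vertex of a cluster set is marked twice.

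The one step requiring a genuine, if short, argument is \ref{s1}, where the givers are the supervisors of a vertex $w$ that has two supervisors. A supervisor of $w$ is a vertex $v$ with $w\in S(v)$, so $v$ has a non-empty cluster set. By Lemma~\ref{noMnosub} no unmarked vertex has a cluster set, so $v$ is marked; by Lemma~\ref{gen_once} it is not marked once; hence $v$ is marked twice. This is the only place where ``marked twice'' is not read directly off the definition of the step, so I expect \ref{s1} to be the main (and only mild) obstacle, resolved by combining Lemmas~\ref{noMnosub} and~\ref{gen_once}. Assembling the four cases shows that in every step the giver is marked twice, hence never unmarked, which completes the proof.
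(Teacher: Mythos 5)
Your proof is correct and takes essentially the same approach as the paper's: a step-by-step check of Steps~\ref{s1}--\ref{s4} showing each giver is marked twice, using Lemma~\ref{noMnosub} for Step~\ref{s1}, the definition for Step~\ref{s2}, the giving-edge property for Step~\ref{s3} (which the paper re-derives inline rather than citing Lemma~\ref{gen_give}), and Lemma~\ref{inter_mark} for Step~\ref{s4}. The only differences are cosmetic: the paper frames it as a proof by contradiction, and your appeal to Lemma~\ref{gen_once} is harmless but unnecessary, since ``marked twice'' already excludes ``unmarked.''
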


\begin{proof}
By contradiction, suppose there is an unmarked vertex $v$ that gives weight to another vertex $u$. 

By Lemma~\ref{noMnosub}, $v$ has no subordinate, so it does not give weight in Step~1 of the discharging procedure. 

Since $v$ is unmarked, $v$ does not give weight in Step~2. 

Suppose $v$ gives weight to $u$ in Step~3. Then $e(u)=uv$. Every time a giving edge is set in Procedures~\ref{p_sub} and~\ref{p_gen}, its other endpoint is twice marked. Moreover the marking and giving edges are always reset together. That implies $v$ is twice marked, a contradiction.

Suppose $v$ gives weight in Step~4. Then $v$ is in a cluster set and thus $v$ is twice marked by Lemma~\ref{inter_mark}, a contradiction.
\end{proof}

\begin{lemm} \label{l_unmarked}
Every unmarked vertex has non-negative weight at the end of the discharging procedure.
\end{lemm}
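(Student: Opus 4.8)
The plan is to exploit Lemma~\ref{l_unmnoch}, which says that an unmarked vertex never \emph{gives} weight in any of the four steps. Hence the weight of an unmarked vertex can only increase over the course of the procedure, and it suffices to compare its \emph{initial} weight $d(v)-M$ with whatever it is forced to receive. I will use throughout that $M=\frac{8k}{3k+1}$ satisfies $2<M<3$ for all $k\ge2$: indeed $M<\frac{8k}{3k}=\frac83<3$, while $M>2$ rearranges to $8k>6k+2$, i.e. $k>1$. By Lemma~\ref{l32_co} there are no $1$-vertices, so the argument splits into two cases according to whether the unmarked vertex has degree $2$ or degree at least $3$.

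The case of an unmarked $d$-vertex with $d\ge3$ is immediate: its initial weight is $d-M>3-M>0$, and since it only receives weight it stays non-negative. So the entire substance of the lemma lies in the unmarked $2$-vertices, whose initial weight $2-M$ is negative and must be compensated.

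For these I would argue that every unmarked $2$-vertex $v$ has a neighbour that is marked twice; this is the main (and really the only) obstacle, and it is precisely where the termination of Procedure~\ref{p_gen} enters. When that procedure halts, its loop guard certifies that every $2$-vertex is either marked twice or has a neighbour that is marked twice. By Lemma~\ref{gen_once} no vertex is ever marked exactly once, so an unmarked vertex is marked zero times and in particular is not marked twice; consequently $v$ must have a neighbour $w$ marked twice. By \textbf{Step 2}, $w$ gives $M-2>0$ to its unmarked neighbour $v$, so the final weight of $v$ is at least $2-M+(M-2)=0$. Since unmarked vertices give nothing and any further receipts (e.g. from a second marked-twice neighbour) only add weight, no other step can undo this, so I need not track Steps~1, 3, and~4 for $v$; one may nonetheless check in passing that $v$ cannot receive there anyway, as those transfers require a supervisor, a giving edge, or cluster-set membership, each of which would force $v$ to be marked twice by Lemmas~\ref{inter_mark} and~\ref{noMnosub}. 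This establishes non-negativity for every unmarked vertex and completes the lemma.
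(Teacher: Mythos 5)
Your proof is correct and follows essentially the same route as the paper: both rely on Lemma~\ref{l_unmnoch} to ensure unmarked vertices never give weight, on the termination condition of Procedure~\ref{p_gen} to guarantee every unmarked $2$-vertex has a twice-marked neighbour, and then split into the cases $d\ge 3$ (initial weight $d-M>0$) and $d=2$ (initial weight $2-M$ compensated by the $M-2$ received in Step~2). Your extra verifications (that $2<M<3$, and that Steps~1, 3, 4 cannot even apply to unmarked vertices) are harmless elaborations of what the paper leaves implicit.
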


\begin{proof}
We applied Procedure~\ref{p_gen} until every $2$-vertex is either twice marked or has a neighbour that is twice marked. 
Each unmarked vertex either has degree at least $3 > M$ or has initial weight $2-M$ and receives at least $M-2$ in Step~2 of the discharging procedure. Furthermore, each of these vertices does not give weight by Lemma~\ref{l_unmnoch}. That ends the proof of the lemma.
\end{proof}

\begin{lemm} \label{l_neutral}
No weight is given through a neutral edge.
\end{lemm}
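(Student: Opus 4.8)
The plan is a direct case check over the four steps of the discharging procedure, invoking the structural facts already gathered in Lemma~\ref{gen_neutral}. Fix a neutral edge $vw$. By Lemma~\ref{gen_neutral}, both $v$ and $w$ are marked twice, the edge $vw$ is not incident to a vertex and one of its subordinates, its two endpoints do not belong to a common cluster set, and $vw$ is not a giving edge. Each of these four facts will rule out weight transmission across $vw$ in exactly one of the four steps.

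First I would handle Step~\ref{s1}: weight there flows only from a supervisor to an adjacent subordinate, i.e.\ across an edge joining a vertex and one of its subordinates. Since $vw$ is not such an edge, no weight crosses $vw$ in Step~\ref{s1}. For Step~\ref{s2}, weight flows from a twice-marked vertex to an \emph{unmarked} neighbour; as both $v$ and $w$ are marked twice, neither endpoint of $vw$ is unmarked, so nothing crosses $vw$. For Step~\ref{s3}, weight travels only along giving edges, and $vw$ is not a giving edge, so again no weight crosses $vw$.

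The only step requiring a small observation is Step~\ref{s4}. There, weight moves exclusively along the edges of a spanning tree $T$ of a component $C$ of the graph induced by a \emph{single} cluster set; hence every edge along which Step~\ref{s4} transmits weight lies inside $G[S]$ for one cluster set $S$, and its two endpoints therefore belong to the same cluster set. Since Lemma~\ref{gen_neutral} asserts that a neutral edge is not incident to two vertices of a common cluster set, the edge $vw$ is not an edge of any such $T$, so no weight crosses $vw$ in Step~\ref{s4} either. Combining the four cases shows that no weight is ever given through $vw$, proving the lemma. I do not expect any genuine obstacle here: the content of the statement has effectively been pre-packaged into Lemma~\ref{gen_neutral}, and this proof is merely its bookkeeping consequence.
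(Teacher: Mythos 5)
Your proof is correct and follows the same route as the paper: the paper's own proof simply states that the lemma is a direct consequence of Lemma~\ref{gen_neutral}, and your four-step case analysis is precisely the bookkeeping that this remark compresses. Each of the four assertions of Lemma~\ref{gen_neutral} rules out transmission in exactly one step of the discharging procedure, just as you describe.
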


\begin{proof}
This is a direct consequence of Lemma~\ref{gen_neutral}. 
\end{proof}

\begin{lemm}\label{l_sup}
A subordinate does not give weight to its supervisor.
A supervisor does not give weight to its subordinates except maybe in Step~1.
%A subordinate (resp. supervisor) does not give weight to its supervisor (resp. subordinate) except may be in Step~1.
%A vertex does not give weight to its supervisor or subordinates except in Step~1.
\end{lemm}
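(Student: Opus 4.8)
The plan is to fix a supervisor $v$ together with a subordinate $w \in S(v)$ and check, step by step, that the edge $vw$ (weight travels only along edges) can carry weight from $v$ to $w$ only in Step~1 and never carries weight from $w$ to $v$. The first thing I would record is that both endpoints are marked twice: the subordinate $w$ lies in a cluster set, so it is marked twice by Lemma~\ref{inter_mark}; and the supervisor $v$ has a non-empty cluster set $S(v)$, so by Lemma~\ref{noMnosub} it is not unmarked, whence Lemma~\ref{gen_once} forces $v$ to be marked twice as well.

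With both endpoints marked twice, Steps~1 and~2 are immediate. Step~1 moves weight only from a vertex to a vertex with two supervisors, i.e.\ from a supervisor to a subordinate; this is exactly the permitted exception, and in particular it never sends weight from $w$ to $v$. In Step~2 a twice-marked vertex gives only to its \emph{unmarked} neighbours; since $v$ and $w$ are both marked twice, neither gives to the other in Step~2.

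For Step~3 I would invoke Lemma~\ref{gen_give}: if a giving edge joined $v$ and $w$, then one of them would be the giving-edge partner of the other, but Lemma~\ref{gen_give} states that a giving edge never joins a vertex to one of its subordinates. Hence $vw$ is not a giving edge in either orientation, and no weight passes along $vw$ in Step~3. For Step~4, weight moves only inside the graph induced by a single cluster set; since $v \notin S(v)$, the edge $vw$ does not lie in $G[S(v)]$. The only remaining danger is that some \emph{other} cluster set $S'$ contains both $v$ and $w$. But then $w$ would belong to two cluster sets, so by Lemma~\ref{gen_inter} both would be singletons equal to $\{w\}$, contradicting $v \in S'$ with $v \neq w$. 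Thus no cluster set contains both endpoints, and Step~4 moves no weight along $vw$.

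Combining the four cases, weight travels from $v$ to $w$ only in Step~1 and never from $w$ to $v$, which is the statement. I expect the Step~4 analysis to be the only delicate point: one must be careful that the supervisor–subordinate edge is excluded not just from the cluster set $S(v)$ itself but from \emph{every} cluster set, and this is precisely where the singleton structure guaranteed by Lemma~\ref{gen_inter} is needed.
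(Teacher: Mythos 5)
Your treatment of Steps 2, 3, and 4 is correct and essentially identical to the paper's: both endpoints are marked twice by Lemmas~\ref{gen_once}, \ref{noMnosub} and~\ref{inter_mark}, Lemma~\ref{gen_give} kills Step~3, and Lemma~\ref{gen_inter} kills Step~4. The gap is in Step~1, which you declare ``immediate''. Step~1 sends weight from \emph{every} supervisor of a doubly-supervised vertex to that vertex; hence the subordinate $w \in S(v)$ \emph{does} give weight to $v$ in Step~1 precisely when $v$ itself has two supervisors and $w$ is one of them, i.e.\ when $v \in S(w)$ as well. Nothing in the definition of cluster sets forbids this mutual supervision: $S(v)$ is only required to exclude $v$ itself, so $w \in S(v)$ together with $v \in S(w)$ is a priori possible, and if it occurred with $v$ doubly supervised, the first sentence of the lemma would be violated by a Step~1 transfer. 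Your phrase ``in particular it never sends weight from $w$ to $v$'' silently assumes that the supervisor--subordinate relation is antisymmetric, which is exactly the point that needs proof.

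The paper closes this case with an argument you never make, and it requires a lemma you never cite. If $w$ gives weight to $v$ in Step~1, then $v$ has two supervisors, so $v$ lies in two distinct cluster sets; by Lemma~\ref{gen_inter}, $v$ then has no giving edge and both of these cluster sets are the singleton $\{v\}$; the singleton case of Lemma~\ref{gen_Sgood} then says that the element of such a cluster set has no subordinates, contradicting $w \in S(v)$. Without this (or an equivalent) use of Lemma~\ref{gen_Sgood}, the first assertion of the lemma --- that a subordinate never gives weight to its supervisor --- is not established.
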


\begin{proof}
Suppose $v$ is the supervisor of a vertex $w$. By Lemmas~\ref{gen_once}, ~\ref{noMnosub}, and~\ref{inter_mark}, $v$ and $w$ are twice marked and thus do not give weight to each other in Step~2. By Lemma~\ref{gen_give}, they do not give weight to each other in Step~3. By Lemma~\ref{gen_inter}, $v$ and $w$ cannot be both contained in a same cluster set ; hence they do not give weight to each other in Step~4. 

Now, if $w$ gives weight to $v$ in Step~1, then $v$ is in two cluster sets. Then, by Lemma~\ref{gen_inter}, $v$ has no giving edge and is alone in its cluster sets, therefore by Lemma~\ref{gen_Sgood}, $w$ has no subordinates, a contradiction. Thus $w$ does not give weight to $v$ in Step~1.
\end{proof}

\begin{lemm} \label{onlyone}
Every vertex $v$ gives weight to each of its neighbours $w$ at most once in the discharging procedure. Moreover, if $v$ gives weight to $w$, then $w$ does not give weight to $v$.
\end{lemm}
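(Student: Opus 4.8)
The plan is to fix an edge and track, step by step, when weight can cross it, using the structural lemmas to show that the four discharging steps never interfere on a single edge. First I would record the global fact that, by Lemma~\ref{gen_once}, every vertex is either unmarked or marked twice, and by Lemma~\ref{l_unmnoch} no unmarked vertex ever gives weight; hence every transfer has a giver that is marked twice. I would then split the edges into two regimes. An edge used in Step~2 has an unmarked endpoint (its receiver), whereas the transfers of Steps~1, 3, and~4 take place only between vertices that are \emph{both} marked twice: the endpoints of Step~1 are marked twice by Lemmas~\ref{noMnosub} and~\ref{inter_mark}, those of Step~3 by Lemma~\ref{gen_give}, and those of Step~4 by Lemma~\ref{inter_mark}. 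Consequently an edge carrying a Step~2 transfer cannot carry any Step~1/3/4 transfer, and since its unmarked endpoint gives nothing, the Step~2 transfer is unique and one-directional. This settles every edge incident to an unmarked vertex.

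It then remains to analyse an edge $vw$ with both endpoints marked twice, where transfers can occur only in Steps~1, 3, and~4, and to show these possibilities are mutually exclusive on the edge and each is one-directional. If $vw$ is a giving edge, say $e(v)=vw$, then Lemma~\ref{gen_give} gives that $v$ and $w$ share no cluster set (so no Step~4 along $vw$), that neither is a subordinate of the other (so no Step~1), and that $e(w)\neq wv$ (so Step~3 moves weight only from $w$ to $v$); thus there is exactly one transfer. If $vw$ is not a giving edge but $v$ supervises $w$ (or symmetrically), I would invoke Lemma~\ref{l_sup}: a subordinate never pays its supervisor and a supervisor pays a subordinate only in Step~1; together with Lemma~\ref{gen_inter} (a subordinate sharing a cluster set with its supervisor would have to lie in two singleton cluster sets, which cannot contain the supervisor) this excludes Steps~3 and~4 and leaves at most the single Step~1 transfer. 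The mutual-supervision configuration collapses because, by Lemma~\ref{gen_Sgood}, a vertex with two supervisors has no subordinates, so Lemma~\ref{l_sup} forbids any Step~1 transfer there.

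The remaining and most delicate case is when $v$ and $w$ lie in a common cluster set and are in neither a giving-edge nor a supervisor relation, so that only Step~4 can act. A non-tree edge carries nothing and is discarded; for a tree edge $uv$ of a spanning tree $T$ of a cluster component $C$, removing $uv$ splits $T$ into the two sides whose weights are $n_{Tvu}$ and $n_{Tuv}$, with $n_{Tuv}+n_{Tvu}=\sum_{x\in C}(d(x)-2)$. By Lemma~\ref{gen_Sgood}, either $C$ contains a vertex with a giving edge, in which case the ``unless'' clause of Step~4 blocks the direction pointing toward that side and leaves at most one direction open, or $\sum_{x\in C}(d(x)-2)\geq k-1$. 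In the latter case a short computation with $M=\frac{8k}{3k+1}$ gives $M-2=\frac{2(k-1)}{3k+1}$ and $4-\frac{3M}{2}=\frac{4}{3k+1}$, so the transfer from $u$ to $v$ equals $\frac{2(k-1)-4n_{Tuv}}{3k+1}$ and is positive exactly when $n_{Tuv}<\frac{k-1}{2}$; both directions being positive would force $n_{Tuv}+n_{Tvu}<k-1$, contradicting the lower bound. Hence each tree edge carries at most one Step~4 transfer, in one direction.

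I expect the main obstacle to be precisely this two-sided flow in Step~4, since it is the only place where a genuine inequality, rather than a disjointness lemma, is required to forbid a double transfer, and it is where the exact value of $M$ enters. Everywhere else the argument is disciplined bookkeeping: the fact that a fixed edge is never simultaneously a giving edge, a supervisor edge, and a tree edge of a cluster component is exactly what Lemmas~\ref{gen_give}, \ref{gen_inter}, \ref{gen_Sgood}, \ref{gen_neutral}, and~\ref{l_sup} were established to guarantee, so the remainder of the proof is a careful cross-referencing of these facts.
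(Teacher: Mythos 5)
Your proposal is correct and follows essentially the same route as the paper: the same reduction via Lemmas~\ref{gen_once}, \ref{l_unmnoch}, \ref{noMnosub}, \ref{inter_mark}, \ref{gen_give}, \ref{gen_inter}, \ref{l_sup}, and~\ref{gen_Sgood}, and the same decisive inequality $2(M-2)-(k-1)\left(4-\frac{3M}{2}\right)>0 \Leftrightarrow M>\frac{8k}{3k+1}$ ruling out a two-way Step~4 transfer. Your case split is organized by the structural type of the edge (Step-2 edge, giving edge, supervisor edge, cluster edge) rather than by the step of the first transfer as in the paper, but under the obvious correspondence these are the same case analysis, so there is nothing further to reconcile.
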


\begin{proof}
Let $v$ and $w$ be two adjacent vertices. 
Suppose that $v$ gives weight to $w$ at some point in the discharging procedure. By Lemmas~\ref{gen_once} and~\ref{l_unmnoch}, $v$ is twice marked.

Suppose first that $v$ gives weight to $w$ in Step~1. By Lemma \ref{l_sup}, $w$ does not give weight to $v$ and $v$ does not give weight to $w$ in Steps 2-4.

%Then by Lemma~\ref{gen_inter}, $v$ has exactly one subordinate, which is $w$, and $w$ has no subordinates ; thus $w$ does not give weight in Step~1. By Lemma~\ref{l_sup}, they do not give weight to each other in Steps 2--\ref{s4}.

Suppose now that $v$ gives weight to $w$ in Step~2. As $w$ is then unmarked, it does not give weight by Lemma~\ref{l_unmnoch}. By Lemma~\ref{noMnosub} (resp. Lemma~\ref{inter_mark}), $v$ does not give weight to $w$ in Step~3 (resp. Step~4). %By Lemma~\ref{inter_mark}, $v$ does not give to $w$ in Step~4.

Suppose that $v$ gives weight to $w$ in Step~3. By Lemma~\ref{gen_give}, $v$ and $w$ are in distinct cluster sets, thus they do not give weight to each other in Step~4. Also by Lemma~\ref{gen_give}, we do not have $e(v) = vw$, thus $w$ does not give weight to $v$ in Step~3.

Lastly, suppose for contradiction that $v$ gives weight to $w$ and $w$ gives weight to $v$ in Step~4. Then $v$ and $w$ are in the same (unique by Lemma~\ref{gen_inter}) cluster set $S$ and in the same component $C$ of $G[S]$. Let $T$ be the tree that was chosen in Step~4 of the discharging procedure for $C$. 

Observe that no vertex in the component of $v$ in $T-w$ nor in the component of $w$ in $T-v$ has a giving edge, and thus that no vertex of $C$ has a giving edge. Also observe that $n_{Tvw} + n_{Twv} = \sum_{x \in C} (d(x) - 2) \ge k-1$ by Lemma~\ref{gen_Sgood}. Since $v$ and $w$ both give weight, we have $M - 2 - n_{Twv}(4-\frac{3M}{2}) > 0$ and $M - 2 - n_{Tvw}(4-\frac{3M}{2}) > 0$. Therefore $2M - 4 - (k-1)(4-\frac{3M}{2}) > 0$, i.e. $M > \frac{8k}{3k+1}$, a contradiction. That ends the proof of the lemma.
\end{proof}

%\begin{lemm}
%The weight given through any edge of $G$ is at most $M-2$.
%\end{lemm}
%
%\begin{proof}
%Let $v$ be a vertex that gives more than $M-2$ to one of its neighbours $w$. By Lemma~\ref{onlyone}, this implies that $v$ gives more than $M-2$ to $w$ in Step~4. Therefore $M - 2 + (n_{vw}-1)(\frac{3M}{2}-4) > M-2$, where $n_{vw} \ge 1$, and thus $\frac{3M}{2}-4 > 0$, which implies that $M > \frac{8}{3}$, a contradiction.
%\end{proof}

\begin{lemm} \label{l_dech}
Every vertex that is twice marked has non-negative weight at the end of the discharging procedure.
\end{lemm}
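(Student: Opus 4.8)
The plan is to fix a vertex $v$ that is marked twice and show its final weight is non-negative, splitting according to the structure of the cluster set that contains it. By Lemma~\ref{mark_inter}, $v$ lies in some cluster set $S$, and Lemma~\ref{gen_Sgood} gives exactly two possibilities for $S$; these become my two cases. Throughout I would record the two identities $M-2=\frac{2(k-1)}{3k+1}$ and $4-\frac{3M}{2}=\frac{4}{3k+1}$, both strictly positive, together with $M<3$, and I would repeatedly invoke Lemma~\ref{onlyone} (each vertex gives to a given neighbour at most once, and giving across an edge is one-directional) together with Lemma~\ref{l_sup} (a subordinate never pays its supervisor, and a supervisor pays a subordinate only in Step~1) to control how much $v$ can give. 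Since a Step~1 gift is $\frac{M-2}{2}$, a Step~2 or Step~3 gift is exactly $M-2$, and a Step~4 gift is $M-2-n_{Tuv}(4-\frac{3M}{2})\le M-2$, the amount $v$ sends across any single incident edge is at most $M-2<1$.

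First I would dispose of the easy case, where $S=\{v\}$ is a singleton whose element has two supervisors. Here $v$ receives $\frac{M-2}{2}$ from each supervisor in Step~1, i.e.\ $M-2$ in total, and by Lemma~\ref{gen_Sgood} it has no subordinate and no giving edge, so it gives only through Steps~2--4. Its two supervisors are two distinct neighbours (a cluster set together with its supervisor induces a connected graph), and by Lemma~\ref{l_sup} $v$ never pays them; hence $v$ gives to at most $d(v)-2$ of its neighbours, at most $M-2$ each. Its final weight is therefore at least $(d(v)-M)+(M-2)-(d(v)-2)(M-2)=(d(v)-2)(3-M)\ge 0$, using $d(v)\ge 2$ and $M<3$.

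The substantial case is the second one, where every component $K$ of $G[S]$ either contains a vertex with a giving edge or satisfies $\sum_{x\in K}(d(x)-2)\ge k-1$; let $C$ be the component containing $v$. Everything here hinges on Step~4. The first observation is that inside $C$ weight moves only through Step~4: by Lemma~\ref{inter_mark} every neighbour of a vertex of $C$ that lies in $C$ is itself marked twice, so no Step~2 gift stays in $C$, and by Lemma~\ref{gen_give} a Step~3 gift always leaves $C$. Thus each $x\in C$ pays $M-2$ only to its unmarked neighbours (all outside $C$) and possibly along one giving edge, while the degree surplus $\sum_{x\in C}(d(x)-2)$ is the budget available to cover these payments. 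Rooting the chosen spanning tree $T$, for a parent $u$ of $w$ the quantity $n_{Tuw}$ is the degree surplus of the subtree hanging below $w$, the gift $u\to w$ equals $\frac{2(k-1-2n_{Tuw})}{3k+1}$, and Lemma~\ref{onlyone} (whose proof is exactly the inequality $2M-4-(k-1)(4-\frac{3M}{2})\le 0$) guarantees that at most one of the two orientations of each tree edge is active. I would then argue, subtree by subtree from the leaves upward, that the Step~4 flow pushes surplus from the high-degree side of each edge towards the low-degree side and delivers to each subtree exactly its net deficit; the global guarantee $\sum_{x\in C}(d(x)-2)\ge k-1$ (or, in its absence, a single giving edge supplying a Step~3 receipt of $M-2$) is precisely what makes the root, and hence every vertex of $C$, end non-negative.

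The main obstacle is this second case. The difficulty is the simultaneous bookkeeping: one vertex of $C$ may pay several unmarked neighbours in Step~2, exchange $M-2$ along a giving edge in Step~3, and both give and receive along several tree edges in Step~4, and one must verify that the calibrated flow leaves none of them negative. The delicate sub-configurations are the tight ones---for instance a single high-degree vertex forming $C$, or a path of $2$-vertices---where the budget $\sum_{x\in C}(d(x)-2)\ge k-1$ is met with equality; there one must additionally use that the edges joining $C$ to the saturated structure that produced it run to already-marked vertices (which are therefore not paid in Step~2), or that the component carries the giving edge supplying the missing $M-2$. The exact constant $M=\frac{8k}{3k+1}$ enters only here, through the threshold $n_{Tuw}<\frac{k-1}{2}$ for a positive gift and the equality case of Lemma~\ref{onlyone}.
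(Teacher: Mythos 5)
Your singleton case is correct and is the same computation as the paper's, and your setup (Lemmas~\ref{mark_inter}, \ref{gen_Sgood}, \ref{onlyone}, \ref{l_sup}, the identities $M-2=\frac{2(k-1)}{3k+1}$ and $4-\frac{3M}{2}=\frac{4}{3k+1}$) is all sound. But in the main case your text is a plan, not a proof, and the gap sits exactly where you flag ``the main obstacle.'' The decisive ingredient you never invoke is Lemma~\ref{gen_P}: every twice-marked vertex $v$ is adjacent to a vertex $w$ that is its supervisor, the other endpoint of a neutral edge, or one of its subordinates; by Lemmas~\ref{gen_inter} and~\ref{gen_neutral} this $w$ lies outside the cluster set $S$, and by Lemmas~\ref{l_neutral} and~\ref{l_sup} the vertex $v$ pays $w$ at most $\frac{M-2}{2}$ (and nothing when $w$ is a supervisor). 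This one ``cheap'' incident edge is what makes the arithmetic close: in the paper's final count the weight of $v$ comes out to exactly $(d-3)\frac{M-2}{2}$, and if the payment $\frac{M-2}{2}$ to $w$ is replaced by a full $M-2$ the bound becomes $(d-4)\frac{M-2}{2}<0$ for $d\in\{2,3\}$. Concretely, a $3$-vertex of $C$ whose two other neighbours are unmarked starts with $3-M$ and pays $2(M-2)$ in Step~2, leaving $7-3M=\frac{7-3k}{3k+1}<0$ for $k\ge3$, so some structural fact about the third edge is indispensable. Your substitute remark (``edges joining $C$ to the saturated structure run to already-marked vertices, which are not paid in Step~2'') is not one of the established lemmas and is also insufficient: a marked neighbour can still be paid $M-2$ via a giving edge in Step~3, so markedness alone bounds nothing. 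You also never treat $d=2$, which needs Lemma~\ref{gen_2sup} (a twice-marked $2$-vertex is adjacent to its supervisor, hence pays nothing on that edge).

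Separately, your proposed resolution---root $T$ and argue leaves-up that the flow ``delivers to each subtree exactly its net deficit''---misdescribes Step~4 and is not carried out. The Step~4 transfers are fixed by the formula $M-2-n_{Tuv}\left(4-\frac{3M}{2}\right)$; they are not calibrated to deficits, and nothing in the sketch verifies that an individual vertex of $C$, which may simultaneously pay several unmarked neighbours in Step~2, pay or receive along a giving edge in Step~3, and exchange along several tree edges in Step~4, ends non-negative. The paper does no induction over $T$ at all: it fixes $v$, takes the cheap neighbour $w$ from Lemma~\ref{gen_P}, and splits into three closed-form computations according to whether at least two, exactly one, or none of the remaining neighbours $v_1,\dots,v_{d-1}$ give weight to $v$ or lie in a component of $T-v$ containing a vertex with a giving edge; in the last case the bound $\sum_i n_{Tvv_i}\ge k-d+1$ and the identity $k\left(4-\frac{3M}{2}\right)=\frac{M}{2}$ finish. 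That case analysis, anchored on Lemma~\ref{gen_P}, is the proof, and it is absent from your proposal.
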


\begin{proof}
Let $v$ be a $d$-vertex that is twice marked ($d\ge 2$). By Lemma \ref{mark_inter}, $v$ is in a cluster set $S$. By Lemma \ref{gen_Sgood}, 
\begin{itemize}
\item either $S$ is the singleton $\{v\}$ with two supervisors, say $u$ and $w$, and no giving edge,
\item or the component of $G[S]$ containing $v$, say $C$, has a vertex having a giving edge or satisfies  $\sum_{x \in C} (d(x) - 2) \ge k-1$.
\end{itemize} 

Consider the former case. The vertices $u$ and $w$ both give $\frac{M-2}{2}$ to $v$ in Step~1. The vertex $v$ gives nothing to $u$ and $w$ by Lemma \ref{onlyone}, and at most $(d(v)-2)(M-2) \le d(v) -2$ to its other neighbours. It follows that $v$ has weight at least equal to $d(v) - M + 2\frac{M-2}{2} - (d(v) - 2) = 0$.

Consider the latter case.
%
%Suppose $v$ is the single subordinate of two vertices $u$ and $w$, {\bf and $v$ has no giving edges}. The vertices $u$ and $w$ both give $\frac{M-2}{2}$ to $v$ in Step~1. The vertex $v$ gives nothing to $u$ and $w$ by Lemma \ref{onlyone}, and at most $(d(v)-2)(M-2) \le d(v) -2$ to its other neighbours. It follows that $v$ has weight at least equal to $d(v) - M + 2\frac{M-2}{2} - (d(v) - 2) = 0$.
%
%
%Suppose $v$ is in a cluster set $S$ having at least two vertices or having a vertex with a giving edge, and let $C$ be the component of $G[S]$ containing $v$. By %Lemma~\ref{gen_Sgood}, either $\sum_{x \in C} (d(x) - 2) \ge k-1$, or a vertex of $C$ has a giving edge. 
Let $T$ be defined for $C$ in Step~4 of the discharging procedure.
By Lemma~\ref{gen_P}, $v$ is either adjacent to one of its supervisors, or is incident to a neutral edge, or has a non-empty cluster set. In each case, $v$ is adjacent to a vertex $w$ that is either the supervisor of $v$, or the other endpoint of a neutral edge, or one of the subordinates of $v$. 
By Lemmas \ref{gen_inter} and \ref{gen_neutral}, $w$ does not belong to $S$. By Lemmas~\ref{l_neutral} and~\ref{l_sup}, $v$ gives weight at most $\frac{M-2}{2}$ to $w$, and it does not give weight to $w$ if $w$ is the supervisor of $v$.
Let $v_1$, ..., $v_{d-1}$ be the neighbours of $v$ distinct from $w$. For the $v_i$'s that are not the neighbours of $v$ in $T$, $n_{Tvv_i}$ is not defined and we set $n_{Tvv_i} = 0$, and $v$ gives at most $M-2 = M - 2 - n_{Tvv_i}(4-\frac{3M}{2})$ to $v_i$.

Suppose that at least two of the $v_i$'s give weight to $v$ or are in a component of $T - v$ containing a vertex that has a giving edge. Those two vertices do not receive weight from $v$ by Lemma \ref{onlyone} and Step~4. As $v$ has initial weight equal to $d-M$ and gives at most $M-2$ to each of its neighbours distinct from $w$, the final weight of $v$ is at least $d-M - \frac{M-2}{2} - (d-3)(M-2) = (d-3)(3-M) + 4 - \frac{3M}{2} \ge 0$ since $M < \frac{8}{3}$.

Suppose one of the $v_i$'s, say $v_1$, gives weight to $v$ (by Step~3 or 4) or is into a component of $T - v$ containing a vertex that has a giving edge. By Lemma \ref{onlyone} and Step~4, $v$ does not give weight to $v_1$. If $e(v)=vv_1$, then $v$ receives $M-2$ from $v_1$. In that case, $n_{Tvv_i}$ is not defined and we set $n_{Tv_1v} = 0$. So by Step~3 or 4, $v$ receives at least $M - 2 - n_{Tv_1v}(4-\frac{3M}{2}) = \frac{4-M}{2} - (n_{Tv_1v}+1)(4-\frac{3M}{2})$ from $v_1$. Moreover $v$ gives at most $\frac{4-M}{2} - (n_{Tvv_i}+1)(4-\frac{3M}{2})$ to each of the other $v_i$'s. Observe that $\sum_{i > 1}(n_{Tvv_i}+1) \ge n_{Tv_1v}$ (since $v$ has one neighbour outside from the $v_i$'s). At last, $v$ gives at most $\frac{M-2}{2}$ to its last neighbour $w$. Overall, $v$ receives at least $\frac{4-M}{2} - (n_{Tv_1v}+1)(4-\frac{3M}{2})$ and gives at most $(d - 2)\frac{4-M}{2} - n_{Tv_1v} (4-\frac{3M}{2}) + \frac{M-2}{2}$, and initially it has a weight equal to $d-M$.  
Therefore, the final weight of $v$ is at least $d-M +\frac{4-M}{2} - (n_{Tv_1v}+1)(4-\frac{3M}{2}) - (d-2)\frac{4-M}{2} + n_{Tv_1v} (4-\frac{3M}{2}) - \frac{M-2}{2} = d-M - (d-3)\frac{4-M}{2} - (3-M) = (d-3)\frac{M-2}{2}$. Hence, if $d \ge 3$, then the final weight of $v$ is non-negative. 
If $d = 2$, then $v$ is adjacent to its supervisor by Lemma~\ref{gen_2sup} and $v$ gives nothing to its supervisor. Therefore the final weight of $v$ is at least $\frac{M-2}{2}$ more than what we computed above, and thus is non negative.

Suppose now that none of the $v_i$'s give weight to $v$ and that none of them are into a component of $T - v$ containing a vertex having a giving edge. That implies that no vertex of $C$ has a giving edge, $\sum_{x \in C} (d(x) - 2) \ge k-1$ by Lemma \ref{gen_Sgood}, and thus $\sum_{i} n_{Tvv_i} \ge k - d + 1$. For all $i$, $v$ gives at most $ M-2 - n_{Tvv_i}(4-\frac{3M}{2})$ to $v_i$. Moreover, $v$ gives at most $\frac{M-2}{2}$ to its last neighbour $w$. Overall, $v$ gives at most $(d-1)(M-2) - (k-d+1)(4-\frac{3M}{2}) + \frac{M-2}{2}$, and initially it has a weight equal to $d-M$. Therefore the final weight of $v$ is at least $d - M - (d-1)(M-2) + (k-d+1)(4-\frac{3M}{2}) - \frac{M-2}{2}$. As $M = \frac{8k}{3k+1}$, we have $k(4-\frac{3M}{2}) = \frac{M}{2}$, therefore the final weight of $v$ is at least $d - M - (d-1)(M - 2) + \frac{M}{2} -(d-1)(4-\frac{3M}{2})  - \frac{M-2}{2} = (d-3)\frac{M-2}{2}$. As in the previous paragraph, the final weight of $v$ is at least $0$ if $d\ge 3$. If $d = 2$, then $v$ is adjacent to its supervisor by Lemma~\ref{gen_2sup}, so $v$ does not give weight to its supervisor, and so its final weight is at least $(d-2)\frac{M-2}{2} = 0$. That ends the proof of Lemma~\ref{l_dech}.
\end{proof}

By Lemmas~\ref{gen_once}, \ref{l_unmarked}, and~\ref{l_dech}, every vertex has non-negative weight at the end of the procedure. As the sum of the weights of the vertices is negative, that leads to a contradiction. Therefore the counterexample $G$ to Theorem~\ref{t32_main} does not exist and that ends the proof of Theorem~\ref{t32_main}.

\section{Complexity results} \label{complex}
Let us first recall Theorem~\ref{kg}.
\begingroup
\def\thetheo{\ref{kg}}
\begin{theo}[recall]
Let $k\ge2$ and $g\ge3$ be fixed integers.
Either every planar graph with girth at least $g$ has an $({\cal I},{\cal P}_k)$-partition
or it is NP-complete to determine whether a planar graph with girth at least $g$ has an $({\cal I},{\cal P}_k)$-partition.
Either every planar graph with girth at least $g$ has an $({\cal I},{\cal O}_k)$-partition
or it is NP-complete to determine whether a planar graph with girth at least $g$ has an $({\cal I},{\cal O}_k)$-partition.
\addtocounter{theo}{-1}
\endgroup

\begin{proof}
The case $k=2$ has been considered in~\cite{EMOP11}, so we suppose that $k\ge3$.
The reduction is written in the context of $({\cal I},{\cal P}_k)$-partitions.
It will be straightforward to check that it also works in the context of $({\cal I},{\cal O}_k)$-partitions.
By Corollary~\ref{c_main}, we can assume that $g\le9$.

Let $g$ and $k$ be such that there exists a planar graph $G_{g,k}$ with girth at least $g$ that has no $({\cal I},{\cal P}_k)$-partition.
We construct a planar graph $H_{g,k}$ with girth at least $g$ such that $H_{g,d}$ has an $({\cal I},{\cal P}_k)$-partition
and $H_{g,k}$ contains a specific vertex $v$ that is in $I$ in every $({\cal I},{\cal P}_k)$-partition.
We can suppose that $G_{g,k}$ is minimal, that is, every proper subgraph of $G_{g,k}$ has an $({\cal I},{\cal P}_k)$-partition.
We consider the $({\cal I},{\cal P}_k)$-partitions of the graph $G^-_{g,k}$ obtained from $G_{g,k}$ by removing an arbitrary edge $xy$.
Since $G^-_{g,k}$ has an $({\cal I},{\cal P}_k)$-partition and $G_{g,k}$ does not, $x$ and $y$ must be have the same colour.
There are three possibilities.
\begin{itemize}
 \item $x$ and $y$ are coloured $I$ in every colouring. Then $H_{g,k}$ is $G^-_{g,k}$ and $v=x$.
 \item $x$ and $y$ are coloured $O$ in every colouring. Then $H_{g,k}$ is obtained from $G^-_{g,k}$ by adding the 2-vertex $v$ adjacent to $x$ and $y$. 
 \item There exists a colouring such that $x$ and $y$ are coloured $I$ and a colouring such that $x$ and $y$ are coloured $O$.
 Then $H_{g,k}$ is obtained as follows. Take three copies of $G^-_{g,k}$, identify their vertex $x$ to make one vertex $v$ and also identify their vertex $y$.
\end{itemize}
The graph $H'_{g,k}$ is obtained from $H_{g,k}$ by adding the 1-vertex $w$ adjacent to $v$.
So $H'_{g,k}$ has an $({\cal I},{\cal P}_k)$-partition and $w$ is coloured $O$ in every $({\cal I},{\cal P}_k)$-partition.
We reduce the NP-complete problem~\cite{DJPSY} \textsc{restricted planar $3$-sat}.
This variant of \textsc{sat} is such that:
\begin{itemize}
 \item every clause contains 2 or 3 literals,
 \item every variable occurs exactly twice positively and once negatively,
 \item the variable-clause incidence graph is planar.
\end{itemize}
A clause containing $t$ literals is called a $t$-clause.
Given an instance $K$ of \textsc{restricted planar $3$-sat}, we construct a graph $J$ corresponding to $K$.
The boolean value true corresponds to the colour $I$ and false corresponds to $O$.
For every variable of $K$, we put a copy of $P_3$, the variable gadget, in $J$.
Every vertex of the variable gadget corresponds to an occurrence of the variable: the extremities of $P_3$ 
correspond the two positive occurrences and the middle vertex corresponds to the negative occurrence.
For every clause of $K$, we put in $J$ one copy of the 2-clause gadget or the 3-clause gadget depicted in Figure~\ref{clause}.
The $t$-clause gadget consists in a path on $k+1$ vertices such that $t$ vertices correspond to the $t$ literals
of the clause and the other vertices are forced to be coloured $O$ by a copy of $H'_{g,k}$.
For every literal occurrence in $K$, we put in $J$ a copy the transmitter gadget depicted in Fig.~\ref{clause},
we identify the vertex $s$ of the transmitter with the vertex corresponding to the literal in the variable gadget,
and we identify the vertex $e$ of the transmitter with the vertex corresponding to the literal in the clause gadget.
The transmitter ensures that the girth condition on $J$ holds.
There exists a colouring such that both $s$ and $e$ are coloured $I$.
There exists a colouring such that both $s$ and $e$ are coloured $O$.
There exists no colouring such that $s$ is coloured $O$ and $e$ is coloured $I$.
\begin{figure}[htbp]
\centering
\includegraphics[width=13cm]{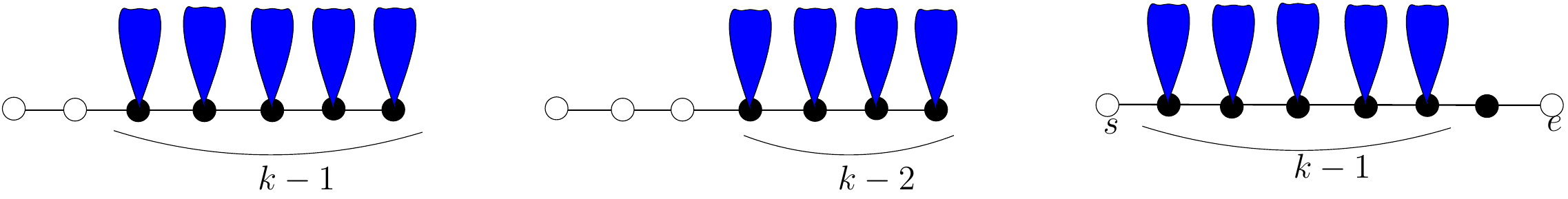}
\caption{The 2-clause gadget (left), the 3-clause gadget (middle), and the transmitter (right).}
\label{clause}
\end{figure}

\begin{figure}[htbp]
\centering
\includegraphics[width=8cm]{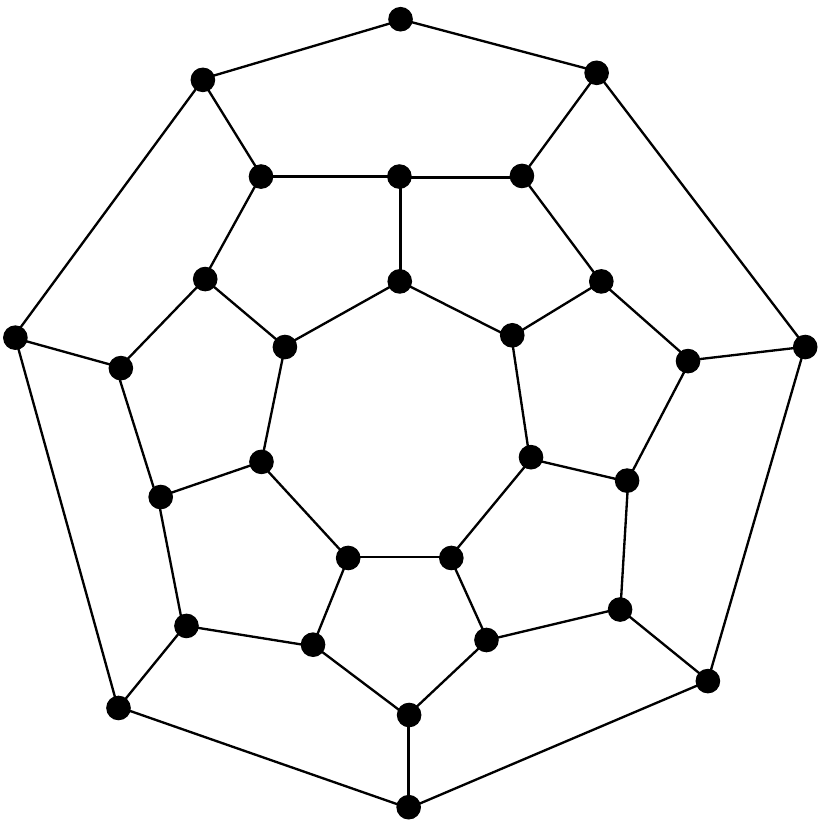}
\caption{A planar graph with girth $5$ and maximum degree $3$ that forces colour $I$ on the 2-vertex in every $({\cal I},{\cal P}_3)$-partition.}
\label{g5d3}
\end{figure}

\begin{figure}[htbp]
\centering
\includegraphics[width=10cm]{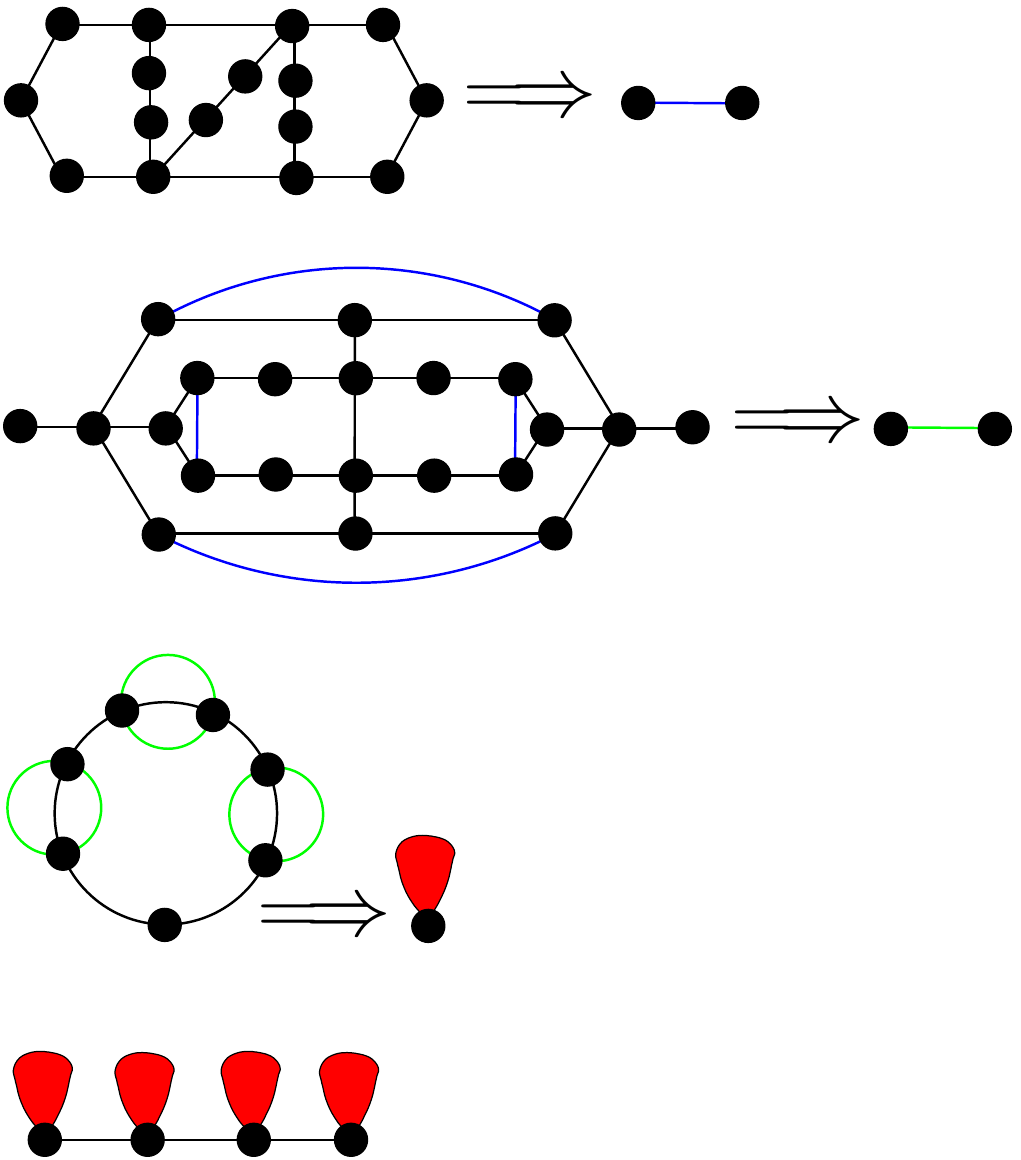}
\caption{A planar graph with girth $7$ and maximum degree $4$ that has no $({\cal I},{\cal P}_3)$-partition.}
\label{g7d4}
\end{figure}

Let us show that $K$ is satisfied if and only if $J$ has an $({\cal I},{\cal P}_k)$-partition.
Given an assignment satisfying $K$, we can colour accordingly the vertices corresponding to literals in the variable and the clause gadgets,
which gives an $({\cal I},{\cal P}_k)$-partition of $J$.
Given an $({\cal I},{\cal P}_k)$-partition of $J$, we assign to every variable the value true if and only if the middle vertex of the corresponding
variable gadget is coloured $O$, which gives an assignment satisfying $K$.
To finish the proof, observe that $J$ is planar with girth at least $g$.
\end{proof}

\begingroup
\def\thetheo{\ref{3gd}}
\begin{theo}[recall]
Let $d$ and $g\ge3$ be fixed integers.
Either every planar graph with girth at least $g$ and maximum degree at most $d$ has an $({\cal I},{\cal P}_3)$-partition
or it is NP-complete to determine whether a planar graph with girth at least $g$ and maximum degree at most $d$ has an $({\cal I},{\cal P}_3)$-partition.
\end{theo}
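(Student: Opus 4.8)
The plan is to follow the same reduction that established Theorem~\ref{kg}, reducing again from \textsc{restricted planar $3$-sat}, but now paying attention to the maximum degree of the instance we build. If every planar graph with girth at least $g$ and maximum degree at most $d$ admits an $(\cI,\cP{3})$-partition, we are in the first alternative and there is nothing to do. Otherwise I would fix a witness: by Corollary~\ref{c_main} we may assume $g\le 9$, and since graphs of maximum degree at most $2$ are disjoint unions of paths and cycles (which always admit such a partition) we may assume $d\ge 3$. Let $G$ be a \emph{minimal} planar graph with girth at least $g$ and maximum degree at most $d$ having no $(\cI,\cP{3})$-partition, and remove an arbitrary edge $xy$. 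Exactly as in Theorem~\ref{kg}, every $(\cI,\cP{3})$-partition of $G^-=G-xy$ gives $x$ and $y$ the same colour, and the three cases $(I,I)$ only, $(O,O)$ only, or both attainable govern the construction of the forcing gadget $H'$.

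The first thing I would observe is that the skeleton of the constructed instance $J$ already has bounded degree: the variable--clause incidence graph of \textsc{restricted planar $3$-sat} has bounded degree (three occurrences per variable, at most three literals per clause), so the variable gadgets (copies of $P_3$), the clause gadgets (paths on $k+1=4$ vertices), and the wires can be laid out with every skeleton vertex of degree at most $3$. Hence the only place the degree can exceed $d$ is inside the forcing gadget obtained from $G^-$. Checking the first two cases against the bound is routine: in the first case $H_{g,3}=G^-$ and $v=x$, so $\deg(v)\le d-1$ and appending the pendant $w$ keeps $\deg(v)\le d$; in the second case the new $2$-vertex $v$ has degree $2$ and raises $\deg(x),\deg(y)$ from at most $d-1$ to at most $d$. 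Both respect the bound, using $d\ge 3$.

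The main obstacle is the third case, where Theorem~\ref{kg} identifies the vertices $x$ and $y$ across three copies of $G^-$; this can raise the degree of an identified vertex to about $3(d-1)$, violating the bound. The forcing genuinely relies on the $\cP{3}$ size limit: a $(O,O)$-colouring of $G^-$ reconnected along the missing edge $xy$ would create an $O$-component of order at least $k+1=4$, so gluing three copies overloads the $O$-component of $v$ (or of the identified $y$) whenever $v$ receives colour $O$, forcing $v\in I$. To preserve the degree bound I would realise this triple identification \emph{indirectly}: keep the three copies disjoint and enforce $x_1=x_2=x_3$ and $y_1=y_2=y_3$ through a small network of equality gadgets, each equality gadget being a further copy of $G^-$, whose ports realise precisely the behaviour ``both endpoints receive the same colour, and both colours are attainable''. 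Arranging these so that every vertex lies in only a bounded number of copies keeps the maximum degree at most $d$, while the same $O$-component overload argument still forbids colour $O$ at the target vertex and forces it into $I$; appending a pendant $w$ then yields a gadget $H'$ in which $w$ is forced to colour $O$. The delicate point here, and the step on which I would spend the most care, is carrying out this degree-bounded realisation while keeping girth at least $g$ and planarity.

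Finally I would assemble $J$ as in Theorem~\ref{kg}: one $P_3$ per variable (the two ends for the positive occurrences, the middle for the negative one), one clause gadget per clause whose non-literal vertices are forced to $O$ by copies of the degree-bounded $H'$, and one transmitter per literal occurrence identifying its port $s$ with the variable-gadget vertex and its port $e$ with the clause-gadget vertex. Since each vertex of $J$ meets only a bounded number of gadgets, $J$ has maximum degree at most $d$, and it is planar of girth at least $g$ because the transmitters are taken long enough. The equivalence between satisfying assignments of the formula and $(\cI,\cP{3})$-partitions of $J$ is then verbatim that of Theorem~\ref{kg}. The concrete graphs of Figures~\ref{g5d3} and~\ref{g7d4} are exactly the explicit degree-bounded forcing and no-partition gadgets needed to instantiate this argument for the parameter pairs appearing in Corollary~\ref{c-3gd}.
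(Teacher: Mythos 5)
Your skeleton (bounded-degree variable, clause, and transmitter gadgets, reduction from \textsc{restricted planar $3$-sat}) and your handling of the first two cases are fine, but the third case is where the whole difficulty of Theorem~\ref{3gd} lives, and there your proposal has a genuine gap. First, the ``equality gadgets'' you propose are themselves copies of $G^-$, and for such a gadget to constrain $x_1$ and $x_2$ you must identify its two ports with $x_1$ and $x_2$. The ports $x,y$ of $G^-$ can have degree up to $d-1$, and $x_1,x_2$ already have degree up to $d-1$ in their own copies, so each identification can create a vertex of degree up to $2(d-1)>d$ for every $d\ge 3$: your construction runs into exactly the degree explosion it was designed to avoid, and you give no mechanism for attaching gadgets only at low-degree vertices. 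Second, and more fundamentally, colour equality without vertex identification does not support the overload argument. In Theorem~\ref{kg}, three identified copies force $v\in I$ because, when $v$ and the identified $y$ are both in $O$, the $O$-components of the three copies \emph{share} the two identified vertices, so the per-copy lower bound on $|C_i(x)|+|C_i(y)|$ accumulates onto just two components and one of them becomes too large. If instead the three copies are vertex-disjoint and merely colour-synchronised, each copy can be given its own partition with $x_i,y_i\in O$ (the third case says precisely that such partitions exist), and these partitions have essentially disjoint $O$-components; nothing forces an oversized component, because nothing guarantees that the $O$-components \emph{at the ports} are large (in a copy, $C(x)$ may have order $1$ while $C(y)$ is large). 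So your sentence ``the same $O$-component overload argument still forbids colour $O$'' is the missing---and in fact false---step.

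The paper sidesteps the three-copy identification entirely, which is why its proof works. For $g\le 5$ it exhibits an explicit forcing gadget (Figure~\ref{g5d3}). For $g\ge 6$, a minimal counterexample $A_{g,d}$ is $2$-degenerate and, by minimality, has no $1^-$-vertex, hence contains a $2$-vertex $w$ with neighbours $v_1,v_5$; replacing $w$ by a path $v_1v_2v_3v_4v_5$ of three new $2$-vertices yields a graph $A'_{g,d}$ in which, in \emph{every} $(\cI,\cP{3})$-partition, the middle vertex $v_3$ lies in an $O$-component of order exactly two. Gluing two copies of $A'_{g,d}$ at a fresh $2$-vertex $v$ adjacent to the two copies' $v_3$'s then forces $v\in I$, and every attachment in this construction (and in the subsequent reduction, where $v$ has degree at most $2$ and receives one pendant vertex) involves only vertices of degree at most $2$, so the maximum degree stays at most $d$ for $d\ge 3$, and girth and planarity are clearly preserved. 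This degree-preserving forcing gadget is exactly what your proposal needs but does not construct.
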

\addtocounter{theo}{-1}
\endgroup

\begin{proof}
The case $d\le2$ is easily settled, so we suppose that $d\ge3$.
The proof is a refinement of the case $k=3$ in Theorem~\ref{kg} such that the maximum degree is preserved.
Let $g$ and $d$ be such that there exists a planar graph $A_{g,d}$ with girth at least $g$ and maximum degre at most $d$ that has no $({\cal I},{\cal P}_3)$-partition.
We construct a planar graph $B_{g,d}$ with girth at least $g$ and maximum degree at most $d$ such that $B_{g,d}$ has an $({\cal I},{\cal P}_3)$-partition
and $B_{g,d}$ contains a specific vertex $v$ with degree at most 2 that is coloured $I$ in every $({\cal I},{\cal P}_3)$-partition.
Figure~\ref{g5d3} depicts $B_{5,3}$. This settles the case $g\le5$, so we can suppose that $g\ge6$.
Without loss generality, we suppose that $A_{g,d}$ is minimal, that is, every proper subgraph of $A_{g,d}$ has an $({\cal I},{\cal P}_3)$-partition.
Since $g\ge6$, $A_{g,d}$ is 2-degenerate. By minimality, $A_{g,d}$ does not contain a vertex of degree at most 1.
So, $A_{g,d}$ contains a 2-vertex $w$ and we call $v_1$ and $v_5$ its neighbours.
Let $A'_{g,d}$ be obtained from $A_{g,d}$ by removing $w$ and by adding three 2-vertices $v_2$, $v_3$, and $v_4$ which form the path
$v_1v_2v_3v_4v_5$. It is easy to check that in every $({\cal I},{\cal P}_3)$-partition of $A'_{g,d}$, $v_3$ is contained in a connected component
of size exactly two of the graph induced by the colour $O$. The graph $B_{g,d}$ is obtained from two copies of $A'_{g,d}$
by adding the 2-vertex $v$ adjacent to the vertex $v_3$ of both copies. Hence, $v$ must be coloured $I$.
The graph $A'_{g,d}$ is obtained from $A_{g,d}$ by adding the 1-vertex $w$ adjacent to $v$.
So, $A'_{g,d}$ forces the colour $O$ on its 1-vertex $w$ and plays the role of $H_{g,k}$ in the previous proof,
with the additional property that the maximum degree of $A'_{g,d}$ is at most $d$.
The reduction is then similar, using the gadgets for variables, clauses, and transmitters of the previous proof with $k=3$.
The obtained graph $J$ is indeed planar with girth at least $g$ and maximum degree at most $d$.
\end{proof}

\begingroup
\def\thetheo{\ref{c-3gd}}
\begin{cor}[recall]
Deciding whether a planar graph with girth at least $5$ and maximum degre at most $3$ is $({\cal I},{\cal P}_3)$-colourable is NP-complete.
Deciding whether a planar graph with girth at least $7$ and maximum degre at most $4$ is $({\cal I},{\cal P}_3)$-colourable is NP-complete.
\end{cor}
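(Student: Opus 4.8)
The plan is to invoke the dichotomy of Theorem~\ref{3gd}: for a fixed pair $(g,d)$, the decision problem is NP-complete as soon as its first alternative fails, that is, as soon as there exists \emph{some} planar graph with girth at least $g$ and maximum degree at most $d$ that admits no $(\cI,\cP{3})$-partition. Thus for each of the two target pairs $(g,d)=(7,4)$ and $(g,d)=(5,3)$, it suffices to exhibit one such negative instance; this is exactly the role of the graphs constructed in Section~\ref{complex}.

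For the pair $(7,4)$ the argument is immediate. The graph displayed in Figure~\ref{g7d4} is planar, has girth $7$ and maximum degree $4$, and admits no $(\cI,\cP{3})$-partition. Hence the first alternative of Theorem~\ref{3gd} cannot hold for $(g,d)=(7,4)$, so deciding the $(\cI,\cP{3})$-partition problem in this class is NP-complete.

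For the pair $(5,3)$ the graph of Figure~\ref{g5d3} is not itself a negative instance; it is instead a gadget $B_{5,3}$ that is planar, of girth $5$ and maximum degree $3$, and that \emph{forces} its distinguished $2$-vertex, call it $v$, to receive colour $I$ in every $(\cI,\cP{3})$-partition. I would convert this forcing gadget into a non-partitionable graph by taking two disjoint copies of $B_{5,3}$ and adding a single edge joining the two copies of $v$. The added edge is a bridge between the two copies, so it creates no new cycle and the girth stays at least $5$; each $v$ rises from degree $2$ to degree $3$, so the maximum degree stays at most $3$; and placing $v$ on the outer face of each copy keeps the whole graph planar. In any $(\cI,\cP{3})$-partition both copies of $v$ would be forced into $I$, yet they are now adjacent, contradicting that $I$ is independent. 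Therefore the combined graph admits no $(\cI,\cP{3})$-partition, the first alternative of Theorem~\ref{3gd} fails for $(g,d)=(5,3)$, and deciding the problem is NP-complete.

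The routine parts are the planarity, girth, and degree bookkeeping for the joined graph and the final appeal to Theorem~\ref{3gd}. The real content---and the step I expect to be the main obstacle---is verifying the two explicit gadgets: confirming by a finite case analysis on the colour of each vertex that the graph of Figure~\ref{g7d4} has no $(\cI,\cP{3})$-partition, and that the gadget of Figure~\ref{g5d3} genuinely forces colour $I$ on its $2$-vertex. These checks are elementary but must be carried out carefully, since the correctness of the whole reduction hinges on the properties of these figures.
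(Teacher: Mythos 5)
Your argument is correct, and for the $(g,d)=(7,4)$ case it is exactly the paper's: Figure~\ref{g7d4} witnesses that the first alternative of Theorem~\ref{3gd} fails, so NP-completeness follows. For $(g,d)=(5,3)$ your route genuinely differs. The paper never exhibits a non-partitionable planar graph of girth $5$ and maximum degree $3$; rather, the graph of Figure~\ref{g5d3} is the forcing gadget $B_{5,3}$ that the \emph{proof} of Theorem~\ref{3gd} plugs directly into the SAT reduction when $g\le 5$ (``Figure~\ref{g5d3} depicts $B_{5,3}$; this settles the case $g\le 5$''), so the paper's corollary for this case rests on the internals of that proof rather than on the theorem's statement alone. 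You instead manufacture an explicit negative instance by taking two disjoint copies of the gadget and joining their forced $2$-vertices by an edge. This is sound: the new edge is a bridge, so no cycle is created and the girth stays $5$; the joined vertices rise only from degree $2$ to degree $3$; planarity is preserved by embedding each copy with its forced vertex on the outer face; and the restriction of any $(\cI,\cP{3})$-partition of the doubled graph to either copy is again an $(\cI,\cP{3})$-partition (a connected induced subgraph of a path on at most $3$ vertices is again such a path), so both copies of $v$ would be forced into $I$ while adjacent, a contradiction. What your version buys is that Corollary~\ref{c-3gd} becomes a black-box consequence of the \emph{statement} of Theorem~\ref{3gd}, which is arguably cleaner; the cost is one extra, easy construction. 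Finally, like the paper, you defer the finite case checks that Figure~\ref{g7d4} has no $(\cI,\cP{3})$-partition and that Figure~\ref{g5d3} forces colour $I$ on its $2$-vertex; since the paper also merely asserts these properties, this leaves your proof no less complete than the paper's own.
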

\addtocounter{theo}{-1}
\endgroup

\begin{proof}
The graph depicted in Figure~\ref{g5d3} is such that the 2-vertex at the top is coloured $I$ in every $({\cal I},{\cal P}_3)$-partition.
The graph depicted in Figure~\ref{g7d4} has no $({\cal I},{\cal P}_3)$-partition.
Using Theorem~\ref{3gd}, they allow to obtain Corollary~\ref{c-3gd}.

\end{proof}

\section*{Acknowledgements}

The authors thank Mickael Montassier for useful discussion and helpful comments.
The first author thanks the AlGCo team for their hospitality during his stay at LIRMM, Montpellier, France.

%%%%%%%%%%%%%%% BIB
%\bibliographystyle{alpha}
\bibliographystyle{plain}

%\bibliography{refsIO}

\begin{thebibliography}{10}

\bibitem{2003AlDiOpVe}
N.~Alon, G.~Ding, B.~Oporowski, and D.~Vertigan.
\newblock Partitioning into graphs with only small components.
\newblock {\em J. Combin. Theory Ser. B}, 87(2):231--243, 2003.

\bibitem{1977ApHa}
K.~Appel and W.~Haken.
\newblock Every planar map is four colorable. {I}. {D}ischarging.
\newblock {\em Illinois J. Math.}, 21(3):429--490, 1977.

\bibitem{1977ApHaKo}
K.~Appel, W.~Haken, and J.~Koch.
\newblock Every planar map is four colorable. {II}. {R}educibility.
\newblock {\em Illinois J. Math.}, 21(3):491--567, 1977.

\bibitem{2017AxUeWe}
M.~Axenovich, T.~Ueckerdt, and~P. Weiner.
\newblock Splitting planar graphs of girth 6 into two linear forests with short
  paths.
\newblock {\em J. Graph Theory}, 85(3):601--618, 2017.

\bibitem{2013BoKoYa}
O.~V.~Borodin, A.~V.~Kostochka, and M.~Yancey.
\newblock On 1-improper 2-coloring of sparse graphs.
\newblock {\em Discrete Math.}, 313(22):2638--2649, 2013.

\bibitem{2014BoKo}
O.~V.~Borodin and A.~V.~Kostochka.
\newblock Defective 2-colorings of sparse graphs.
\newblock {\em J. Combin. Theory Ser. B}, 104:72--80, 2014.

\bibitem{borodin2011list}
O.~V.~Borodin and A.~O.~Ivanova.
\newblock List strong linear 2-arboricity of sparse graphs.
\newblock {\em J. Graph Theory}, 67(2):83--90, 2011.

\bibitem{2017ChChJeSu}
H.~Choi, I.~Choi, J.~Jeong, and G.~Suh.
\newblock {$(1,k)$}-coloring of graphs with girth at least five on a surface.
\newblock {\em J. Graph Theory}, 84(4):521--535, 2017.

\bibitem{2015ChRa}
I.~Choi and A.~Raspaud.
\newblock Planar graphs with girth at least {$5$} are {$(3,5)$}-colorable.
\newblock {\em Discrete Math.}, 338(4):661--667, 2015.

\bibitem{DJPSY}
E.~Dahlhaus, D.~S. Johnson, C.~H. Papadimitriou, P.~D. Seymour, and
  M.~Yannakakis.
\newblock The complexity of multiterminal cuts.
\newblock {\em SIAM J. Comput.}, 23(4):864--894, 1994.

\bibitem{DvNo}
Z.~Dvořák and S.~Norin. 
\newblock Islands in minor-closed classes. I. Bounded treewidth and separators. 
\newblock arXiv preprint arXiv:1710.02727, 2017.

\bibitem{EMOP11}
L.~Esperet, M.~Montassier, P.~Ochem, and A.~Pinlou.
\newblock A complexity dichotomy for the coloring of sparse graphs.
\newblock {\em J. Graph Theory}, 73(1):85--102, 2013.

\bibitem{esperet2016islands}
L.~Esperet and P.~Ochem.
\newblock Islands in graphs on surfaces.
\newblock {\em SIAM J. Discrete Math.}, 30(1):206--219, 2016.

\bibitem{1991Goddard}
W.~Goddard.
\newblock Acyclic colorings of planar graphs.
\newblock {\em Discrete Math.}, 91(1):91--94, 1991.

\bibitem{1959Grotzsch}
H.~Gr{\"o}tzsch.
\newblock Zur {T}heorie der diskreten {G}ebilde. {VII}. {E}in {D}reifarbensatz
  f\"ur dreikreisfreie {N}etze auf der {K}ugel.
\newblock {\em Wiss. Z. Martin-Luther-Univ. Halle-Wittenberg. Math.-Nat.
  Reihe}, 8:109--120, 1958/1959.

\bibitem{2015MoOc}
M.~Montassier and P.~Ochem.
\newblock Near-colorings: non-colorable graphs and {NP}-completeness.
\newblock {\em Electron. J. Combin.}, 22(1):Paper 1.57, 13, 2015.

\bibitem{1990Poh}
K.~S. Poh.
\newblock On the linear vertex-arboricity of a planar graph.
\newblock {\em J. Graph Theory}, 14(1):73--75, 1990.

\bibitem{2000Skrekovski}
R.~\v{S}krekovski.
\newblock List improper colorings of planar graphs with prescribed girth.
\newblock {\em Discrete Math.}, 214(1-3):221--233, 2000.

\end{thebibliography}

%\begin{thebibliography}{1}
%\bibitem{borodin2011list}
%O.V. Borodin and A.O. Ivanova.
%\newblock List strong linear 2-arboricity of sparse graphs.
%\newblock {\em Journal of Graph Theory}, 67(2):83--90, 2011.
%
%
%\bibitem{DJPSY} E.~Dahlhaus, D. S.~Johnson, C. H. Papadimitriou, P. D. Seymour, and M. Yannakakis.
%The complexity of multiterminal cuts.
%\emph{SIAM J. Comput.} \textbf{23(4)} (1994), 864--894.
%
%
%\bibitem{esperet2016islands}
%L.~Esperet and P.~Ochem.
%\newblock Islands in graphs on surfaces.
%\newblock {\em SIAM Journal on Discrete Mathematics}, 30(1):206--219, 2016.
%
%
%\bibitem{EMOP11} L.~Esperet, M.~Montassier, P.~Ochem, and A.~Pinlou.
%\newblock A complexity dichotomy for the colouring of sparse graphs.
%\newblock \emph{J. Graph Theory}, 73(1):85--102, 2013.
%
%\end{thebibliography}
\end{document}